\DeclareMathOperator*{\EE}{\mathbb{E}}
\newcommand{\GGPS}{$GI/GI/1/PS$\xspace}
\newcommand{\TVGGPS}{$GI_t/GI_t/1/PS$\xspace}
\newtheorem{proposition}{Proposition}
\newtheorem{remark}{Remark}
\journal{}
\begin{document}
	
	\begin{frontmatter}
		
		
		
		\title{Stabilizing the virtual response time in single-server processor sharing queues with slowly time-varying arrival rates}
		
		\author{Yongkyu Cho}
		\author{Young Myoung Ko\corref{cor}}
		\cortext[cor]{Corresponding author.}
		\ead{youngko@postech.ac.kr}
		\address{Department of Industrial and Management Engineering\\Pohang University of Science and Technology\\ 77, Cheongam-ro, Nam-gu, Pohang, Gyeongbuk, Korea}
		
		\begin{abstract}
			Motivated by the work of \citet{journal:W2015}, who studied stabilization of the mean virtual waiting time (excluding service time) in a $GI_t/GI_t/1/FCFS$ queue, this paper investigates the stabilization of the mean virtual response time in a single-server processor sharing (PS) queueing system with a time-varying arrival rate and a service rate control (a \TVGGPS\ queue). We propose and compare a modified square-root (SR) control and a difference-matching (DM) control to stabilize the mean virtual response time of a \TVGGPS\ queue. Extensive simulation studies with various settings of arrival processes and service times show that the DM control outperforms the SR control for heavy-traffic conditions, and that the SR control performs better for light-traffic conditions.
		\end{abstract}
		
		\begin{keyword}
			Stabilizing performance \sep Nonstationary queues \sep Processor sharing \sep Service rate control \sep Queueing simulation
			
			
		\end{keyword}
		
	\end{frontmatter}
	
	
	\section{Introduction} \label{sec:intro}
	Modern data centers consume tremendous amounts of energy to supply networking, computing, and storage services to global IT companies. Concerns about energy consumption have prompted researchers to explore operational methods that maximize energy efficiency and satisfy a certain level of quality of service (QoS), \citep{journal:AV2011,journal:KC2014,proc:LLSAGT2015}. QoS can be achieved by adding constraints that impose upper bounds for response time-related metrics, e.g., the mean virtual response time and the tail probability of the response time. In general, these constraints are binding, because of the conflict between the QoS-related metrics and energy consumption. Binding the QoS-related constraints implies that the metrics are maintained as a constant value, and suggests the need to investigate the stabilization of response times. Although some proposed methodologies \citep{journal:AV2011,journal:KC2014,proc:LLSAGT2015} assume the stationarity of data traffic arrival processes, nonstationary properties, such as time-varying arrival rates from real data \citep{web:CAIDA}, make it difficult to analyze queueing system performance.
	
	In this paper, therefore, we study the service rate controls that stabilize the mean virtual response time to a certain target value in a single server PS queue representing a computer server in a data center under time-varying arrival rates and controllable service rates, i.e., a \TVGGPS\ queue. Our approach is similar to \citet{journal:W2015} and \citet{journal:MW2016}, who considered three different service rate controls, two of which were designed to stabilize the mean (virtual) waiting time in a $GI_t/GI_t/1/FCFS$ queue. The slowly time-varying traffic patterns of internet services \citep{web:CAIDA} justify our use of pointwise stationary approximation (PSA) \cite{journal:GK1991}. We adopt different heavy-traffic approximation results (HTA), because our objective is to stabilize the mean virtual response time, which is one of our target performance measures. We propose two service rate control schemes:
		\begin{align}
		\mu_{SR}(t;s)&\equiv\frac{(s\lambda(t)+1)\beta+\sqrt{(s\lambda(t)+1)^2\beta^2+4s\lambda(t)\beta^2(V_{FCFS}-1)}}{2s}, \label{sr}\\
		\mu_{DM}(t;s)&\equiv\beta\left(\lambda(t)+\frac{V_{PS}}{s}\right), \label{dm}
		\end{align}
	where $s$ is the desired response time, $\beta$ is the mean job size, $\lambda(t)$ is the arrival rate function, $V_{FCFS}\equiv(C_a^2+C_s^2)/2$, and $V_{PS}\equiv(C_a^2+C_s^2)/(1+C_s^2)$, with $C_a^2$ and $C_s^2$ are the squared coefficient of variations (SCV) of the base interarrival and service time distributions. Equation (\ref{sr}) is a modification of the well-known square-root control (SR) suggested in \citet{journal:W2015}, and Equation (\ref{dm}) is a new control scheme, which we call the difference-matching (DM) control, because it maintains the difference between $\mu(t)$ and $\beta\lambda(t)$ as a constant $\beta V_{PS}/s$. The DM control is easy to implement thanks to its simplicity.

	Figure~\ref{fig:erln_performance_s_0.1} shows the mean queue length process, $\EE[Q(t)]$ (green line), and the mean virtual response time process, $\EE[R(t)]$ (red line), of the simulated \TVGGPS\ queues with an Erlang base arrival distribution and a lognormal job size distribution with the SR control as in Equation~(\ref{sr}) and three different time-varying arrival rates. The dotted black lines are 95\% confidence intervals and the dotted blue line plots the arrival rate function; its dedicated y-axis is on the right. The plots show that the response time is almost perfectly stabilized by the SR control under the light-traffic condition.
	
	Figure \ref{fig:erer_performance_s_10} depicts the performance measures when the target response time is relatively long. While the stabilization looks poor for both controls, their \emph{relative amplitude} -- one of our performance measures described in Section~\ref{subsec:metric} -- is under 10\%. Figure~\ref{fig:dm_er_er_0001} depicts that the DM control achieves the target response time, which implies that using the DM control shows better \emph{accuracy} -- the other performance measure in Section~\ref{subsec:metric} -- under the heavy traffic condition (long response time). 
	\begin{figure}
		\centering
		\subfloat[][$\gamma=0.1$]
		{
			\centering		\resizebox{0.28\textwidth}{!}{\includegraphics{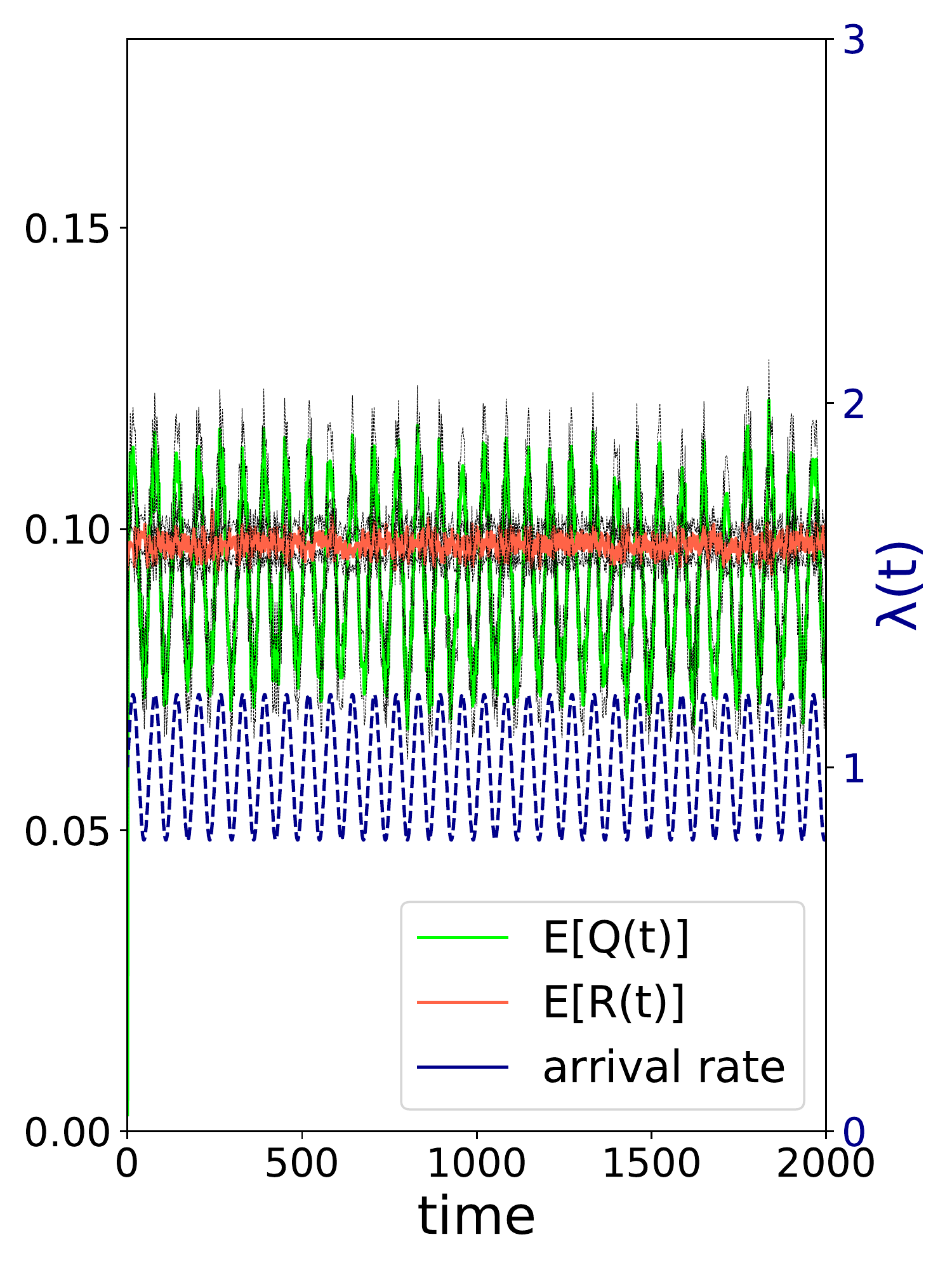}}
			\label{fig:s_01_sr_erln_01}
		}
		~
		\subfloat[][$\gamma=0.01$]
		{
			\centering\resizebox{0.28\textwidth}{!}{\includegraphics{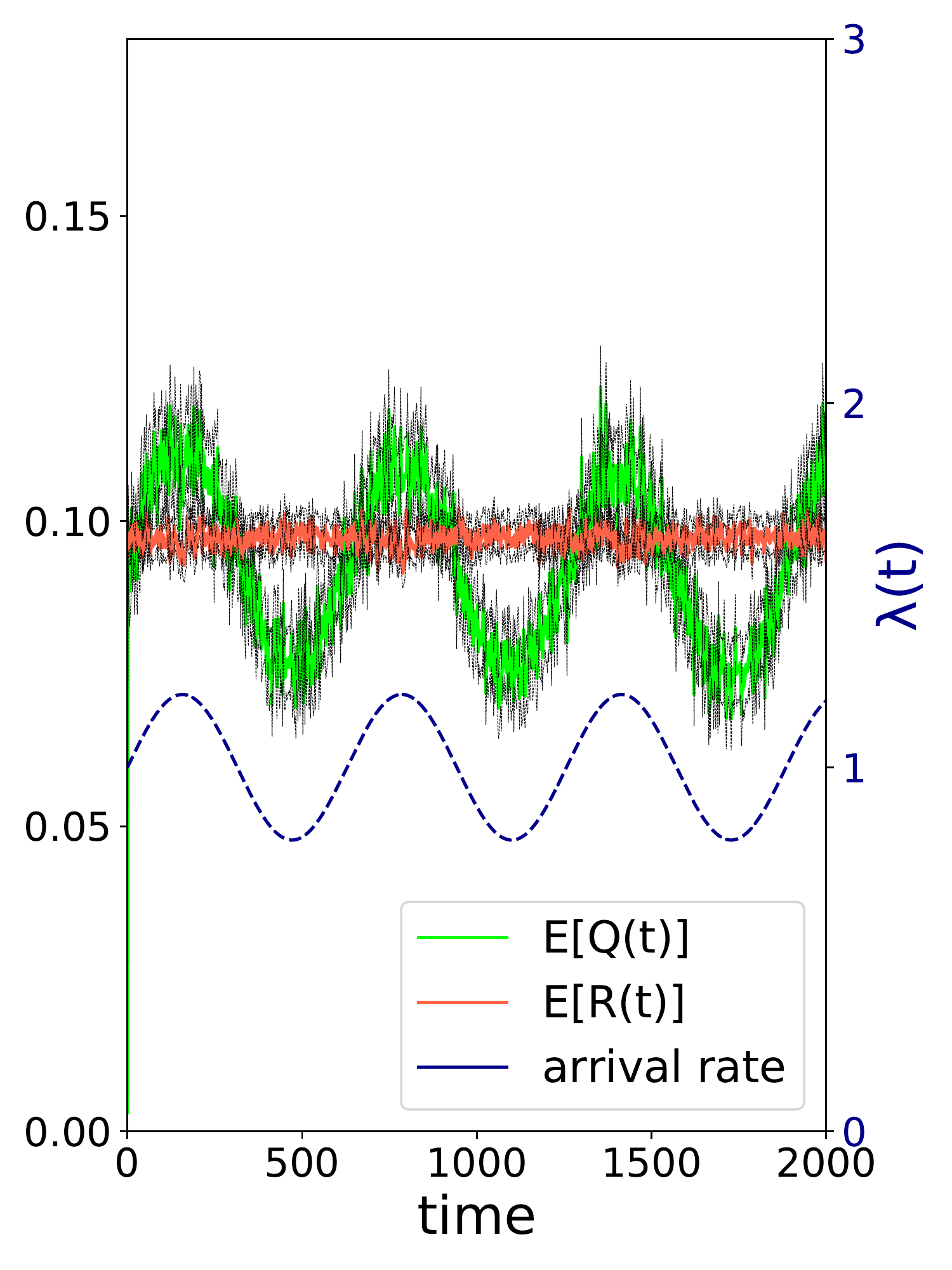}}
			\label{s_01_sr_erln_001}
		}
		~
		\subfloat[][$\gamma=0.001$]
		{
			\centering\resizebox{0.28\textwidth}{!}{\includegraphics{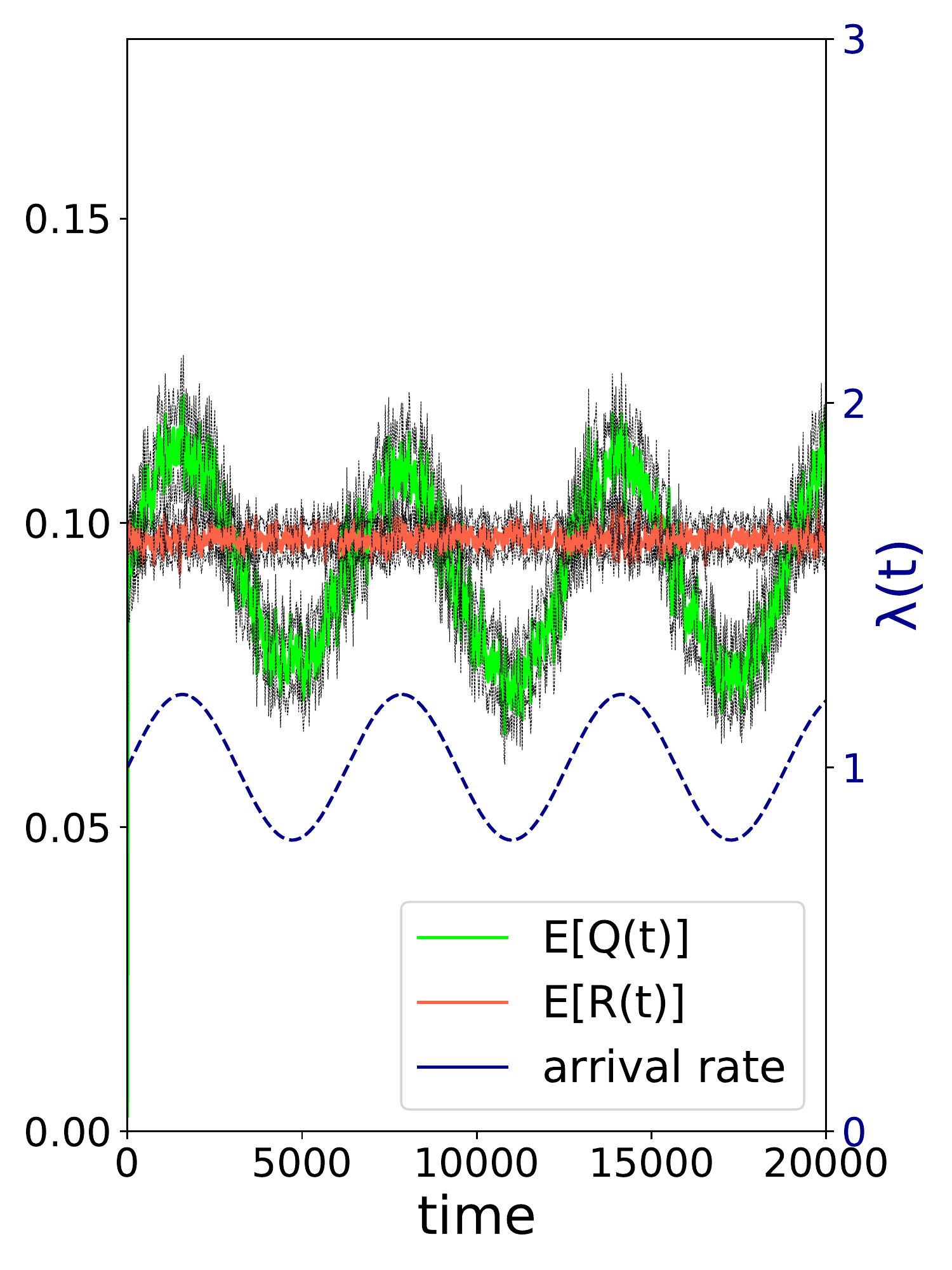}}
			\label{s_01_sr_erln_0001}
		}
		
		\caption{General performance measures of $ER_t/LN_t/1/PS$ queues under the SR control where $\lambda(t)=1+0.2\sin{(\gamma t)}$ with target response time $0.1$ (light-traffic)}
		\label{fig:erln_performance_s_0.1}
	\end{figure}
	
	\begin{figure}
		\centering
		\subfloat[][SR control]
		{
			\centering\resizebox{0.45\textwidth}{!}{\includegraphics{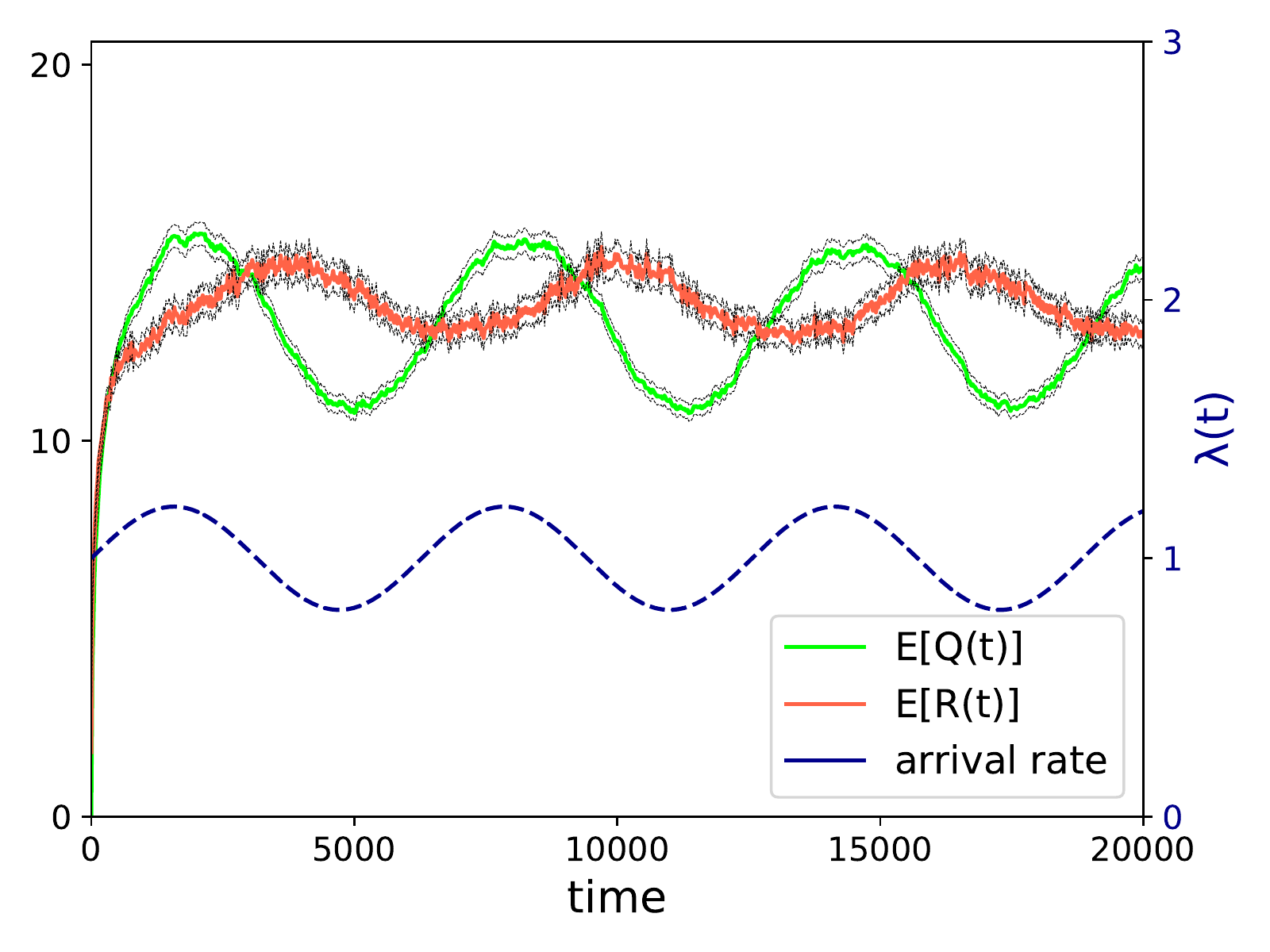}}

			\label{fig:sr_er_er_0001}
		}
		~
		\subfloat[][DM control]
		{
			\centering\resizebox{0.45\textwidth}{!}{\includegraphics{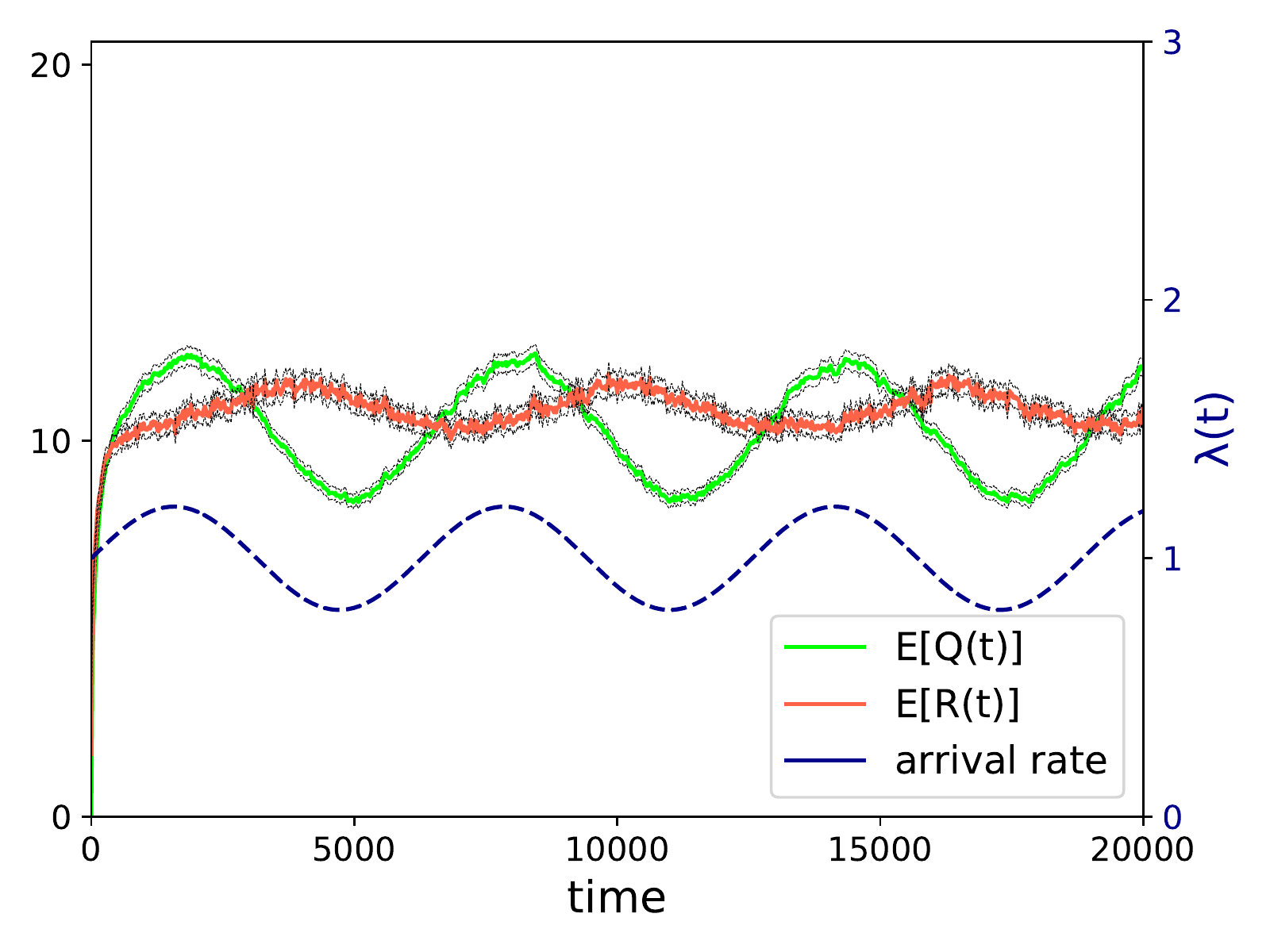}}

			\label{fig:dm_er_er_0001}
		}
		
		\caption{General performance measures of $ER_t/ER_t/1/PS$ queues with target response time $10$ (heavy-traffic)}
		\label{fig:erer_performance_s_10}
	\end{figure}
	
	This paper contributes to the published literature on queueing systems by studying the response time stabilizing controls for a \TVGGPS\ queue; proposing a new control scheme, i.e., DM control, for heavy-traffic conditions; undertaking extensive simulations of the proposed control schemes; and gaining insights into their effectiveness for data centers.
	
	The remainder of this paper is organized as follows. Section~\ref{chap2:model} introduces a single-server PS queueing model with a time-varying arrival rate and a controllable service rate. We explain some details for simulating a \TVGGPS\ queue, which is not straightforward, unlike its stationary counterpart. Section~\ref{chap2:method} explains the procedure to derive the two service rate controls, and some simple characteristics of the controls. Section~\ref{chap2:simulation} reports the results of the simulations including the interesting phenomena we find. Section~\ref{chap2:conclusion} concludes and suggests some future research directions.
	

\section{The model}\label{chap2:model}
Section~\ref{chap2:queue} introduces a single-server queueing model with nonstationary non-Poisson arrivals under the PS discipline and the service rate control. Section~\ref{subsec:simulating_nsnp} explains the procedures to simulate such queueing systems. Throughout this paper, we use the following notations:
\begin{itemize}
	\item $f(t)$: arbitrary periodic function with a period $\mathcal{T}_f$
	\item $\bar{f}$: spatial scale average of $f$; $f\equiv\int_t^{t+\mathcal{T}_f}f(x)\mathrm{d}x/\mathcal{T}_f$ for any $t\in [0,\infty)$
	\item $\lambda(t)$: arrival rate function
	\item $\mu(t)$: service rate function
	\item $T_i$: base inter-arrival times between $i^{\textrm{th}}$ and $i-1^{\textrm{st}}$ job; i.i.d. random variables having a general distribution function $F(\cdot)$ with a mean  $\tau\equiv\EE[T_i]<\infty$ and an SCV $C_a^2\equiv SCV(T_i)<\infty$
	\item $S_i$: service requirement that the $i^{\textrm{th}}$ job brings; i.i.d. random variables having a general distribution function $G(\cdot)$ with a mean $\beta\equiv\EE[S_i]<\infty$ and an SCV $C_s^2\equiv SCV(S_i)<\infty$
	\item $\rho(t)$: instantaneous traffic intensity; $\rho(t)\equiv\lambda(t)\beta/\mu(t)$
	\item $A_i$: time when the $i^{\textrm{th}}$ job arrives
	\item $D_i$: time when the $i^{\textrm{th}}$ job departs	
	\item $A(t)$: arrival process; number of job arrivals during interval $(0,t]$
	\item $D(t)$: departure process; number of job departures during interval $(0,t]$
	\item $Q(t)$: queue length process; number of jobs in the system at time $t$
	\item $R(t)$: virtual response time process; sojourn time that a virtual customer arriving at time $t$ spends in the system
\end{itemize}

\subsection{The $GI_t/GI_t/1/PS$ queue} \label{chap2:queue}
We consider a single server processor sharing queueing system where arrivals follow an NSNP. We assume that the time-dependent arrival rate function $\lambda(\cdot)$ is continuous and bounded finitely both below and above. Under the assumption, the cumulative arrival function $\Lambda(t)\equiv\int_0^t\lambda(s)\textrm{d}s$ is well-defined for $t\ge 0$ and so is the inverse $\Lambda^{-1}(\cdot)$. 

Each job has its own service requirement, e.g., job size, to be processed by a server. Assume that the job size is determined upon arrival in ICT service systems, e.g., packet size or file size. Let $S_i$ be the service requirement that the $i^{\textrm{th}}$ job brings, and assume that $S_i$'s are independent and identically distributed. Appropriate control schemes dynamically determine service rate function $\mu(\cdot)$. Assume that function $\mu(\cdot)$ is continuous and bounded so that it can be integrate on compact intervals to obtain a cumulative service function $M(t)\equiv\int_{0}^{t}\mu(s)\textrm{d}s$. The amount of service processed by the server during time interval $(t_1,t_2]$ is $M(t_2)-M(t_1)\equiv\int_{t_1}^{t_2}\mu(s)\textrm{d}s$.

The PS policy is a \textit{work-conserving} service discipline which is commonly used to describe computer systems (especially CPUs) \cite{book:G2012}. All jobs in the system evenly share the server or processor at any given time, e.g., if the processor runs at a processing speed of $\mu$ bits/s and there are $n$ jobs, then each job is processed by $\mu/n$ bits/s. 

\subsection{Simulating the $GI_t/GI_t/1/PS$ queue} \label{subsec:simulating_nsnp}
Simulating a \TVGGPS\ queue is difficult and computationally expensive because of non-Poisson arrivals, time nonhomogeneity, processor sharing, and other factors. Therefore, we combine two algorithms \citep{journal:GN2009,journal:MW2016} for simulation. The first algorithm by \citet{journal:GN2009} provides the supporting theory for generating an NSNP from its stationary counterpart, and the second algorithm by \citet{journal:MW2016} gives a numerical approximation method to relieve the computational burden when the rate function is periodic. 

\subsubsection{The arrival process} \label{subsubsec:arrival}
Let $A(t)$ be the NSNP arrival process we want to simulate. Construct the process by applying the \textit{change of time} to a stationary renewal process. Let $N(t)$ be the stationary renewal process with i.i.d. interrenewal times $\{T_i, i \ge 1\}$. Then,
	\begin{align}
	\EE[N(t)] &= \frac{t}{\tau}, \\
	Var[N(t)] &= \EE[N(t)]SCV(T_i)+o(t),
	\end{align}
where $\tau\equiv\EE[T_i].$
In particular, we call $N(t)$ the standard equilibrium renewal process (SERP) when $\tau=1$ and $T_1$ is a random variable having the stationary excess distribution given by
\begin{align}
F_e(t)\equiv\frac{1}{\EE[T_i]}\int_{0}^t 1-F(s)\textrm{d}s.  
\end{align}
By defining $A(\cdot)$ to be the composition of $N(\cdot)$ and $\Lambda(\cdot)$, i.e., $A(t)= N(\Lambda(t))$ with $\EE[A(t)]=\Lambda(t)$, we construct an NSNP. 
We generate samples from the arrival process $A(\cdot)$ using the \textit{inversion method} described in \citet{journal:GN2009}. Algorithm \ref{alg:NSNP} describes the procedure. An NSNP $A(t)$ generated by Algorithm \ref{alg:NSNP} has the following property:
\begin{algorithm}
	\caption{The inversion method to generate an NSNP from a stationary renewal process \citep{journal:GN2009}}
	\label{alg:NSNP}
	\KwResult{An 1-dimensional array $A$ such that $A[i]$ is the $i^{\textrm{th}}$ arrival time}
	\Begin
	{
		$T^\textrm{run}\leftarrow$ simulation running time \\
		$A\leftarrow$ empty 1-dimensional array \\
		$A[1]\leftarrow$ random number generated by the equilibrium pdf: $f_e(t)=1-\frac{F(t)}{\EE[T_i]}$ \\
		$n\leftarrow 1$ \\
		\While{$A[n]<T^\textrm{run}$}
		{
			$n\leftarrow n+1$\\
			$x\leftarrow$ random number generated by the df: $F(\cdot)$ // stationary inter-renewal time\\
			$A[n]\leftarrow\Lambda^{-1}(x;A[n-1])$ // $\Lambda^{-1}(x;a)\equiv\inf\left\{ y\ge a: \int_a^y\lambda(s)\mathrm{d}s\ge x \right\}$
		}
		\Return{A}
	}		
\end{algorithm}
Constructing the arrival process prompts the following remark.
\begin{remark}[Gerhardt and Nelson, 2009]
	$\EE[A(t)]=\Lambda(t)$, for all $t\ge 0$, and $Var[A(t)] \approx \Lambda(t)SCV(T_i)$, for large $t$.
\end{remark}
We note that NSNP is a generalization of the simple nonstationary Poisson process (NSPP), where $T_i$ is exponentially distributed. It can be verified easily this by plugging 1 into $SCV(T_i)$.

\subsubsection{The service times} \label{subsubsec:service}
The service completion time is determined as soon as a job arrives when the FCFS discipline applies. Under the PS policy, however, it is not determined upon arrival, because future arrivals will affect the service times of of the jobs already existing in the system. Express the service completion time or the departure time $D_i $ of the $i^{\textrm{th}}$ job that brings a random amount of service requirement $S_i$ as:
\begin{align}
D_i=\inf{\left\{x\ge A_i: \int_{A_i}^{x}\frac{1}{Q(s)}\mu(s)\textrm{d}s\ge S_i\right\}}, \label{departure_time}
\end{align}
where $A_i$ is the arrival time of the $i^{\textrm{th}}$ job and $Q(s)$ is the number of customers in the system at time $s$. 

\subsubsection{The response time process}
Let $R(t;v)$ denote the entire time that a job spends in the system if it arrives at time $t$ and brings a $v$ amount of service requirement. Since $R(t;v)$ has a what-if characteristic, this is often called \textit{virtual} response time (or virtual sojourn time) at time $t$. When we use $R(t)$ omitting $v$, we still assume a random service requirement. Our primary interest in the \TVGGPS\ queue is the mean virtual response time process $\EE[R(t)]$ for $t\ge 0$. Note that the stochastic nature of $Q(t)$ in Equation \ref{departure_time} means that $R(t)$ cannot be obtained conveniently as its FCFS counterpart where the \textit{Lindely's recursion} is applicable.

To obtain the virtual response time process $\{R(t),t\ge 0\}$ in a $GI_t/GI_t/1/PS$ queue, we store the \textit{path} of the queue for every replication of the simulation. The path contains the status of the system at each recording epoch. After a replication is terminated, re-run the simulations from each recording epoch during a replication length (say $t_1,t_2,\ldots$), given the stored status at time $t_k$, with a newly inserted job which is the \textit{virtual job}. Each re-run of the simulation terminates when the virtual job is finished and results in a realization of a virtual response time $R(t_k)$. We obtain the expected process $\{\EE[R(t)],t\ge 0\}$ by averaging at 10,000 replications.

\section{Methods}\label{chap2:method}
As mentioned in Section~\ref{sec:intro}, we combine the pointwise stationary approximation (PSA) and the heavy-traffic approximation, which were used by \citet{journal:W2015} and \citet{proc:MW2015} to stabilize the waiting times (excluding service times) in $GI_t/GI_t/1/FCFS$ queues, and adjust the combined approximations to stabilize the response times (waiting time + service time) in $GI_t/GI_t/1/PS$ queues. Below, we explain our methods.
\subsection{Pointwise stationary approximation with heavy-traffic limits} \label{subsec:ht}
We briefly visit the pointwise stationary approximation (PSA) \citep{journal:GK1991,journal:W1991}, which is known to be an appropriate approximation when the arrival rate changes slowly relative to the average service time \citep{journal:W1991,journal:W2015}. Thus, we consider that the performance at different times is similar to the performance of the stationary counterpart with the instantaneous model parameters. 

The heavy-traffic limit theory for \GGPS\ queues was initially developed by \citet{journal:G1994} and further studied by \citet{journal:G2004} and \citet{journal:ZZ2008}. \citet{journal:ZZ2008} provide the following approximate mean virtual response time ($R$) in steady state for \GGPS queues: 
\begin{align}
\EE[R]\approx\frac{\beta}{\mu}\cdot\frac{1}{1-\rho}\cdot V_{PS}. \label{rt_gg1ps}
\end{align} 

\subsection{Two service rate controls} \label{subsec:controls}
\citet{journal:W2015} derived the PSA-based service rate control to stabilize the \emph{waiting time}. We take a similar approach, but our service rate control stabilizes the \emph{response time}. We derive two service rate controls based on $GI/GI/1/FCFS$ and \GGPS\ heavy-traffic approximations. Hereinafter, we use the subscripts $FCFS$ and $PS$ to indicate the discipline from which the result derives, e.g., variability factor $V_{FCFS}$ and $V_{PS}$. 


\subsubsection{The square-root (SR) control}
In queueing systems, the workload processes are identical under any work-conserving disciplines. Thus, we derive a control based on a $GI/GI/1/FCFS$ queue as an experimental trial, which we later discover to be appropriate for \TVGGPS\ queues under light-traffic conditions (see Section \ref{subsec:interpretation_lt} for the details). 

The heavy-traffic approximation for the expected steady state response time in a $GI/GI/1/FCFS$ queue is \citep{book:CY2001}: 
\begin{align}
\EE[R_{FCFS}]\approx\frac{\beta}{\mu}+\frac{\beta}{\mu}\cdot\frac{\rho}{1-\rho}\cdot V_{FCFS}, \label{rt_gg1fcfs}
\end{align} 
where $\mu$ is the service rate, $\beta$ is the mean job size, $\rho$ is the traffic intensity, and $V_{FCFS}\equiv (C_a^2+C_s^2)/2$ is the variability parameter, given the SCVs for the arrival base and job size distributions. Approximate the expected response time at time $t$ in a $GI_t/GI_t/1/FCFS$ queue based on the PSA:
\begin{align}
\EE[R_{FCFS}(t)]\approx\frac{\beta}{\mu(t)}+\frac{\beta}{\mu(t)}\cdot\frac{\rho(t)}{1-\rho(t)}\cdot V_{FCFS}, \label{htpsa_fcfs}
\end{align} 
where $\rho(t)\equiv \lambda(t)\beta/\mu(t)$ is the instantaneous traffic intensity at time $t$. Fixing the LHS by a target response time $s$ and adjusting the terms gives:
\begin{align}
s\mu(t)^2-\beta\left(s\lambda(t)+1\right)\mu(t)+\lambda(t)\beta^2(1-V_{FCFS})=0.
\end{align}
Finally, obtain the solution to the quadratic equation above:
\begin{align}
\mu_{SR}(t;s)\equiv\frac{(s\lambda(t)+1)\beta+\sqrt{(s\lambda(t)+1)^2\beta^2+4s\lambda(t)\beta^2(V_{FCFS}-1)}}{2s}. \label{eqn:srcontrol}
\end{align}
We call Equation~(\ref{eqn:srcontrol}) the \textit{square-root} (SR) control, which is the naming convention used by \citet{journal:W2015}.

\subsubsection{The difference-matching (DM) control}
Recall that Equation~(\ref{rt_gg1ps}) is the heavy-traffic approximation for the steady-state mean virtual response time ($R_{PS}$) in a \GGPS\ queue:
\begin{align*}
\EE[R_{PS}]\approx\frac{\beta}{\mu}\cdot\frac{1}{1-\rho}\cdot V_{PS}, 
\end{align*} 
where $\mu$, $\beta$, and $\rho$ are defined as in Equation (\ref{rt_gg1fcfs}), and $V_{PS}\equiv(C_a^2+C_s^2)/(1+C_s^2)$ is the variability parameter for the PS queue. Approximate the expected response time process based on the PSA:
\begin{align}
\EE[R_{PS}(t)]\approx\frac{\beta}{\mu(t)}\cdot\frac{1}{1-\rho(t)}\cdot V_{PS}. \label{htpsa_ps}
\end{align} 
Fixing the LHS by a certain constant $s$ and adjusting the terms gives a service rate control that is much simpler than $\mu_{SR}(t)$:
\begin{align}
\mu_{DM}(t;s)\equiv\beta\left(\lambda(t)+\frac{V_{PS}}{s}\right). \label{eqn:dmcontrol}
\end{align}
As mentioned in Section~\ref{sec:intro}, we call Equation~(\ref{eqn:dmcontrol}) the \textit{difference-matching} (DM) control because $\mu_{DM}(t;s) - \beta\lambda(t)$ is a constant $\beta V_{PS}/s$. 

\subsubsection{Simple analysis on the two service rate controls}
The two controls derived above result in different service rate functions except when both base distributions have SCVs 1. The most representative example is the $M_t/M_t/1/PS$ queue. Applying $V_{FCFS}=1$, the SR control (\ref{eqn:srcontrol}) reduces to $\beta\left(\lambda(t)+1/s\right)$, which is the same as the DM control (\ref{eqn:dmcontrol}) with $V_{PS}=1$. It prompts the following remark.
\begin{remark} \label{remark:coincision}
	For the time-varying queues having both the base distributions (arrival base and job size) of SCV 1, the two controls coincide.
\end{remark}

Another simple but interesting phenomenon is that both controls become identical as we decrease or increase the target response time $s$.
\begin{proposition} \label{proposition:asymptotic}
	The two controls coincide as $s\to\infty$ (heavy-traffic) or $s\to 0$ (light-traffic).
\end{proposition}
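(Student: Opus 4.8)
The plan is to analyze the two control functions $\mu_{SR}(t;s)$ and $\mu_{DM}(t;s)$ in the two limiting regimes $s\to\infty$ and $s\to 0$, showing that in each regime the leading-order behavior (and in fact the difference $\mu_{SR}-\mu_{DM}$ vanishing) makes them asymptotically equivalent. Since $\mu_{DM}(t;s)=\beta\lambda(t)+\beta V_{PS}/s$ is already in closed form, the work is entirely on the square root in $\mu_{SR}$; I would isolate $\mu_{SR}(t;s)=\frac{\beta(s\lambda(t)+1)}{2s}+\frac{\beta}{2s}\sqrt{(s\lambda(t)+1)^2+4s\lambda(t)(V_{FCFS}-1)}$ and expand the radicand.

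For the heavy-traffic case $s\to\infty$, I would factor $(s\lambda(t))^2$ out of the radicand, writing it as $(s\lambda(t))^2\bigl(1+\tfrac{2}{s\lambda(t)}+\tfrac{1}{(s\lambda(t))^2}+\tfrac{4(V_{FCFS}-1)}{s\lambda(t)}\bigr)$, so that $\sqrt{\text{radicand}}=s\lambda(t)\sqrt{1+\tfrac{2V_{FCFS}}{s\lambda(t)}+O(s^{-2})}$. A first-order Taylor expansion $\sqrt{1+x}=1+x/2+O(x^2)$ then gives $\sqrt{\text{radicand}}=s\lambda(t)+V_{FCFS}+O(s^{-1})$, hence $\mu_{SR}(t;s)=\frac{\beta}{2s}\bigl(s\lambda(t)+1+s\lambda(t)+V_{FCFS}\bigr)+O(s^{-2})=\beta\lambda(t)+\frac{\beta(1+V_{FCFS})}{2s}+O(s^{-2})$. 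Meanwhile $\mu_{DM}(t;s)=\beta\lambda(t)+\frac{\beta V_{PS}}{s}$. Both converge to the common PSA-type limit $\beta\lambda(t)$ (indeed $\rho(t)\to 1$, the heavy-traffic regime), so $\mu_{SR}(t;s)-\mu_{DM}(t;s)\to 0$ as $s\to\infty$; one can even note $\mu_{SR}(t;s)/\mu_{DM}(t;s)\to 1$.

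For the light-traffic case $s\to 0$, I would instead keep the $+1$ term dominant: the radicand is $(s\lambda(t)+1)^2+4s\lambda(t)(V_{FCFS}-1)=1+2s\lambda(t)+4s\lambda(t)(V_{FCFS}-1)+O(s^2)=1+s\lambda(t)(4V_{FCFS}-2)+O(s^2)$, so $\sqrt{\text{radicand}}=1+s\lambda(t)(2V_{FCFS}-1)+O(s^2)$. Then $\mu_{SR}(t;s)=\frac{\beta}{2s}\bigl(s\lambda(t)+1+1+s\lambda(t)(2V_{FCFS}-1)\bigr)+O(s)=\frac{\beta}{s}+\beta\lambda(t)V_{FCFS}+O(s)$, which is dominated by the $\beta/s$ term as $s\to 0$. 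Similarly $\mu_{DM}(t;s)=\beta\lambda(t)+\frac{\beta V_{PS}}{s}$ is dominated by $\frac{\beta V_{PS}}{s}$. Hmm — here the leading terms are $\beta/s$ versus $\beta V_{PS}/s$, which differ unless $V_{PS}=1$; so the right statement is not that the \emph{difference} vanishes but that the \emph{ratio} $\mu_{SR}(t;s)/\mu_{DM}(t;s)\to 1/V_{PS}$, which is not $1$ in general. I suspect the intended reading of ``coincide'' in the light-traffic limit is in a relative/normalized sense, or perhaps after rescaling $s$ appropriately; I would resolve this by stating the light-traffic claim as $\mu_{SR}(t;s)\sim \mu_{DM}(t;s)$ up to the leading multiplicative constant, or by noting both blow up like $\Theta(1/s)$ with the \emph{same} dependence on $\lambda(t)$ (namely none, to leading order) so that the induced traffic intensities $\rho(t)\to 0$ identically. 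This mismatch between the clean heavy-traffic conclusion and the murkier light-traffic one is the main obstacle, and I would flag it explicitly rather than force an equality that does not hold; the honest conclusion is that in both limits the two controls are asymptotically proportional with proportionality constant tending to $1$ as $s\to\infty$ and to $1/V_{PS}$ as $s\to 0$, with genuine coincidence when additionally $C_a^2=C_s^2=1$ as in Remark~\ref{remark:coincision}.
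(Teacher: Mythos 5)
Your calculation is essentially a refined version of the paper's own argument, which simply asserts the two limits: as $s\to\infty$ both controls converge to $\beta\lambda(t)$ (so $\rho(t)\to 1$), and as $s\to 0$ both diverge to $+\infty$ (so $\rho(t)\to 0$). Your heavy-traffic expansion reaches the correct conclusion, though there is a small arithmetic slip: collecting the $O(1/(s\lambda))$ terms in the radicand gives $(4V_{FCFS}-2)/(s\lambda(t))$, not $2V_{FCFS}/(s\lambda(t))$, so the expansion should read $\mu_{SR}(t;s)=\beta\lambda(t)+\beta V_{FCFS}/s+O(s^{-2})$ rather than $\beta\lambda(t)+\beta(1+V_{FCFS})/(2s)+O(s^{-2})$. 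This does not affect the limit, and it is in fact the cleaner statement: to first order the two controls differ only through $V_{FCFS}$ versus $V_{PS}$ in the $1/s$ coefficient.

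Your objection to the light-traffic half of the claim is well taken and is not a defect of your proof: the paper itself derives, in Section \ref{subsec:interpretation_lt}, exactly the limits you computed, namely $\mu_{SR}(t;s)\to\beta/s$ in Equation (\ref{SR_0}) and $\mu_{DM}(t;s)\to\beta V_{PS}/s$ in Equation (\ref{DM_0}), and Table \ref{table:number} quantifies the resulting discrepancy in achieved response times. The controls therefore do not coincide as functions when $s\to 0$ unless $V_{PS}=1$, which is the situation of Remark \ref{remark:coincision}. The paper's proof sidesteps this by reading ``coincide'' in the weaker sense that both controls drive the system into the same limiting traffic regime --- $\rho(t)\to 1$ as $s\to\infty$ and $\rho(t)\to 0$ as $s\to 0$ --- rather than asserting $\mu_{SR}-\mu_{DM}\to 0$ or $\mu_{SR}/\mu_{DM}\to 1$. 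Under that reading your computation proves the proposition; under the literal reading, your flag is correct and the light-traffic half should be weakened exactly as you suggest.
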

\begin{proof}
	Both the SR control in Equation~(\ref{eqn:srcontrol}) and the DM control in Equation~(\ref{eqn:dmcontrol}) converge to $\beta\lambda(t)$ as $s\to\infty$ and the traffic intensity converges to 1.
	On the other hand, taking $s\to 0$ results in $\mu_{SR}(t;s)\to\infty$ and $\mu_{DM}(t;s)\to\infty$, which implies that the traffic intensity becomes zero.	
\end{proof}

\section{Simulation experiments}\label{chap2:simulation}
We investigate the performance of the two service rate controls through simulation experiments. Table~\ref{table:doe} summarizes the simulation parameters. 
\subsection{Simulation setting}
We use the sinusoidal arrival rate function $\lambda(t)=a+b\sin{(\gamma t)}$ with constants $a=1$, $b=0.2$, and $\gamma=0.001,0.01,0.1$. Therefore, we have three functions of the same amplitude but of different periods. Two are the slowly time-varying functions ($\gamma=0.001,0.01$) and the third one ($\gamma=0.1$) is not. We include the third, however, to observe how the controls work when the arrival rate is a quickly time-varying function. 

To observe the asymptotic behavior, we set the replication length to at least three cycles of the periods, e.g., we conduct simulations for period $\gamma=0.001$ on a 20,000 unit time and periods $\gamma=0.01,0.1$ on a 2,000 unit time considering the length of periods. For the target response time $s$, we use two different values: 0.1 for the short and 10.0 for the long response times. Because the service rate controls are inversely proportional to $s$, each value of $s$ results in light-traffic and heavy-traffic, respectively. For each independent system, we conduct 10,000 replications to obtain the ensemble average of the performance measures.

We consider three different distributions for arrival base and job size distribution: Erlang distribution (ER); exponential distribution (EXP); and lognormal distribution (LN). The distributions have mean 1 and different SCVs. Specifically, we use ER with $SCV=0.5$ and LN with $SCV=2$. The SCV of EXP is always 1 by definition. We make five pairs of base arrival/job size distributions: EXP/EXP, ER/ER, LN/LN, ER/LN, and LN/ER. Note that the combination EXP/EXP corresponds to a queueing system with NSPP arrival process and exponential service requirement, i.e., $M_t/M_t/1/PS$. Table \ref{table:variability} summarizes the variability factors, $V_{FCFS}$ and $V_{PS}$, associated with each pair of distributions.
\begin{table}
	\centering
	\caption{Simulation parameters}
	\label{table:doe}
	\begin{tabular}{|l|c|l|l|l|l|l|}
		\hline
		System                               & \multicolumn{6}{c|}{\TVGGPS}       \\ \hline
		Arrival rate function & \multicolumn{6}{c|}{$\lambda(t)=1+0.2\sin{(\gamma t)}$}           \\
		Periodic coefficient                 & \multicolumn{6}{c|}{$\gamma=0.001,0.01,0.1$}           \\
		Replication length                   & \multicolumn{6}{c|}{$l=20000,2000,2000$}            \\ \hline
		Service rate function                  & \multicolumn{6}{c|}{$\mu_{SR}(t)$, $\mu_{DM}(t)$}      \\ \hline
		Target response time & \multicolumn{6}{c|}{0.1 (light-traffic), 10.0 (heavy-traffic)} \\ \hline
		
		Number of replication		          & \multicolumn{6}{c|}{10000}                            \\ \hline
		& \multicolumn{6}{c|}{Exponential (SCV=1.0)} \\
		Distributions            & \multicolumn{6}{c|}{Erlang (SCV=0.5)}           \\
		& \multicolumn{6}{c|}{Lognormal (SCV=2.0)}     \\ \hline
	\end{tabular}
\end{table}

\begin{table}
	\centering
	\caption{Variability factor for each distribution pair}
	\label{table:variability}
	\begin{tabular}{|l|c|c|}
		\hline
		Distribution pair (arrival base/job size) & $V_{FCFS}$ & $V_{PS}$ \\ 		\hline
		Exponential/Exponential & 1 & 1 \\ 
		Erlang/Erlang & 0.5 & 0.6667 \\ 
		Lognormal/Lognormal & 2 & 1.3333 \\ 
		Erlang/Lognormal & 1.25 & 0.8333 \\ 
		Lognormal/Erlang & 1.25 & 1.6666 \\ 		\hline
	\end{tabular} 
\end{table}

\subsection{Two metrics to evaluate the effectiveness of the controls} \label{subsec:metric}
We use two metrics to measure the performance of the two controls. First, we define the relative amplitude (RA) by
\begin{align*}
\frac{\textrm{amplitude of $\EE[R(t)]$}}{\textrm{spatial average of $\EE[R(t)]$}}\times 100\%,
\end{align*}
as a measure of \textit{stabilization}. Second, we define the relative gap (RG) by 
\begin{align*}
\frac{\textrm{target response time}-\textrm{spatial average of $\EE[R(t)]$}}{\textrm{spatial average of $\EE[R(t)]$}}\times 100\%,
\end{align*}
as a measure of \textit{accuracy}. We obtain the two metrics by numerically calculating:
\begin{align}
&\textrm{RA}\left[\EE[R(t)]\right]\nonumber \\
&\equiv\frac{\mathcal{T}_{\EE\circ R}\times\left[\max_{t\in[x,x+\mathcal{T}_{\EE\circ R}]}\left\{\EE[R(t)]\right\}-\min_{t\in[x,x+\mathcal{T}_{\EE\circ R}]}\left\{\EE[R(t)]\right\}\right]}{2\times \int_x^{x+\mathcal{T}_{\EE\circ R}}\EE[R(t)]dt}\times 100\%, \\
&\textrm{RG}\left[\EE[R(t)]\right]\equiv\frac{s-\frac{\int_x^{x+\mathcal{T}_{\EE\circ R}}\EE[R(t)]dt}{\mathcal{T}_{\EE\circ R}}}{s}\times 100\%,
\end{align}
where $\mathcal{T}_{\EE\circ R}$ is the period of the expected response time process $\EE[R(t)]$, $x$ is an arbitrary long time after the process has been stabilized, and $s$ is the target response time. For the values of $\mathcal{T}_{\EE\circ R}$, we use the same values as the periods of the arrival rate functions since we observe that the periods are the same for both $\EE[R(t)]$ and $\lambda(t)$.

The two measures above are favorable as they become closer to 0\%. For RA, there is no negative value since the amplitude is a positive amount. Note, however, that RG allows a negative value such that the control overestimates the service rate which gives a smaller spatial average than our original target.

\subsection{Results} \label{subsec:result}
Table \ref{table:tvgg1ps_expexp}-\ref{table:tvgg1ps_lner} in \ref{app:performance} report both the absolute values (amplitude and spatial average) and the relative values (RA and RG). For the performance of the controls, we heuristically call them \textit{good} if they control the response time with $RA\le 10\%$ and $|RG|\le 0.1\%$, and \textit{poor} otherwise. 

In the following plots, the green line corresponds to the mean queue length $\EE[Q(t)]$ and the red line to the mean virtual response time $\EE[R(t)]$ of the simulated \TVGGPS\ queues under the various combinations of control and distribution. The dotted black lines are the 95\% confidence intervals, and the dotted blue line plots the arrival rate function and has its dedicated y-axis on the right.

In the following subsections, we summarize the results of Tables \ref{table:tvgg1ps_expexp}-\ref{table:tvgg1ps_lner} by their traffic intensity. We obtain each traffic intensity by targeting the response time (short or long) we desire according to Proposition \ref{proposition:asymptotic}. Specifically, the instantaneous traffic intensity is approximately $\rho(t)\approx 0.1$ when $s=0.1$ and $\rho(t)\approx 0.9$ when $s=10.0$, for the distribution pairs.


\subsubsection{Control performances in light-traffic systems (s=0.1)} \label{subsec:result:lt}
Figures \ref{fig:s_01_sr_erer} and \ref{fig:s_01_dm_erer} depict the two expected processes $\EE[Q(t)]$ and $\EE[R(t)]$ in light-traffic systems under the two controls where the base distribution pair is Erlang/Erlang. The figures show universally good stabilizing performances ($|RA|\le 5\%$), even under the quickly time-varying arrival rate ($\gamma=0.1$), but, the accuracy of the DM control is poor. Specifically, the expected response time process stabilizes around 0.15 although the target is 0.1, which corresponds to about 0.5\% of RG (Figure \ref{fig:s_01_dm_erer}). Intuitively, this poor performance stems from the inaccuracy of the heavy-traffic approximation in light-traffic systems. Meanwhile, the SR control results in only about 0.05\% of RG (Figure \ref{fig:s_01_sr_erer}). Throughout the simulation experiments, we observe this tendency consistently from all of the distribution pairs (see Section \ref{subsec:interpretation_lt} for the details).

\begin{figure}
	\centering
	\subfloat[][$\gamma=0.1$]
	{
		\centering\resizebox{0.28\textwidth}{!}{\includegraphics{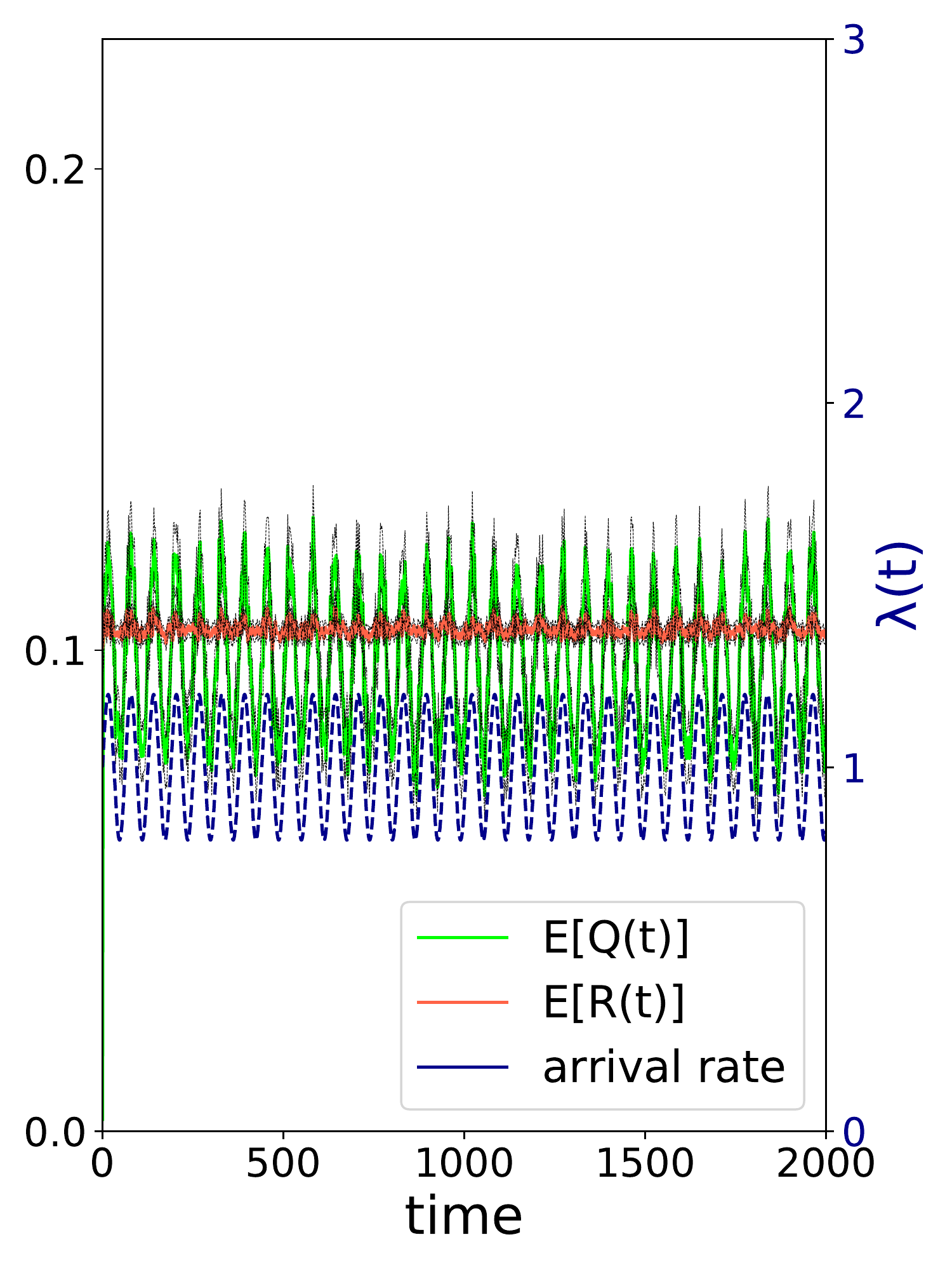}}
		\label{fig:s_01_sr_erer_01}
	}
	~
	\subfloat[][$\gamma=0.01$]
	{
		\centering\resizebox{0.28\textwidth}{!}{\includegraphics{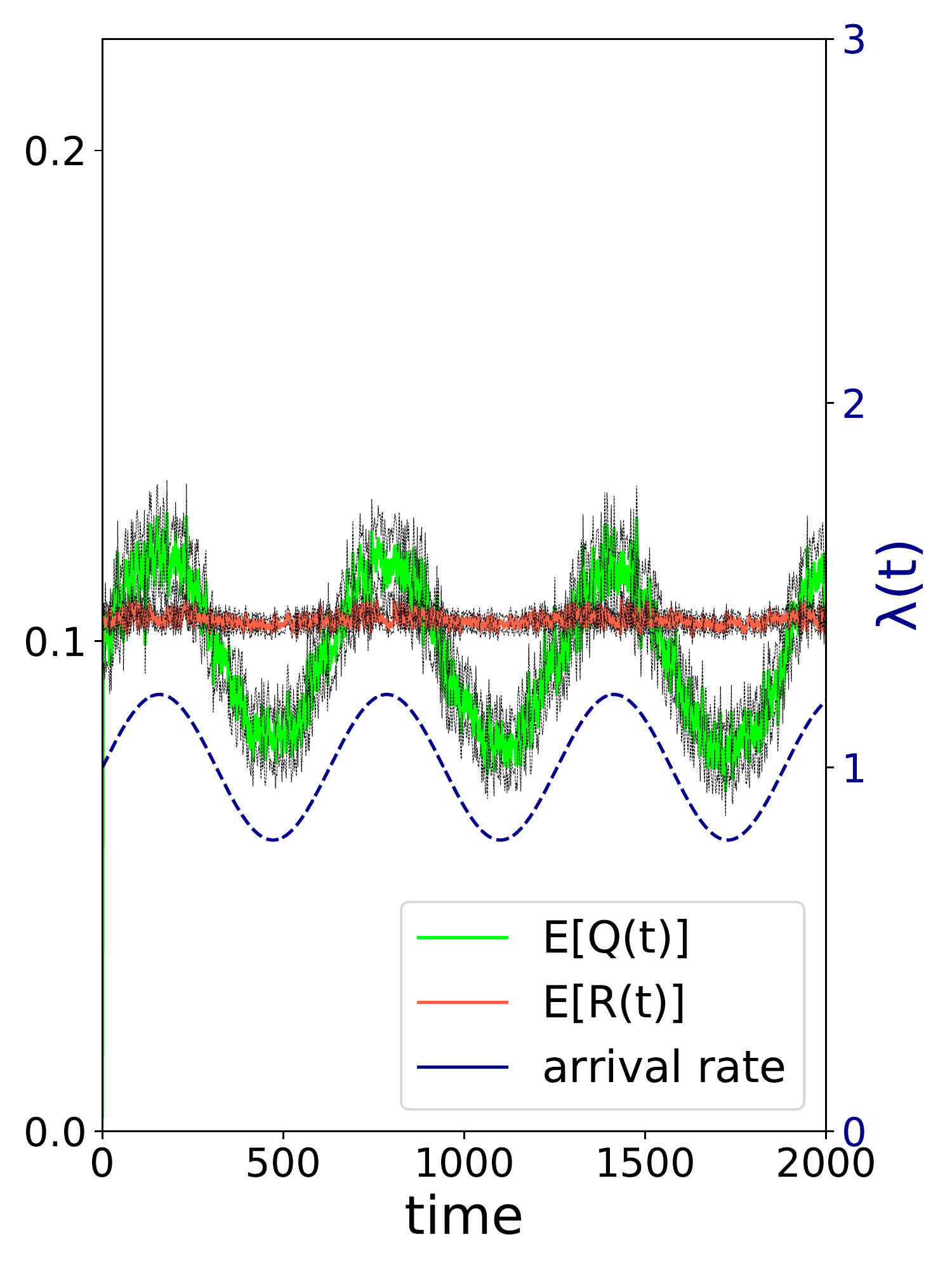}}
		\label{fig:s_01_sr_erer_001}
	}
	~
	\subfloat[][$\gamma=0.001$]
	{
		\centering\resizebox{0.28\textwidth}{!}{\includegraphics{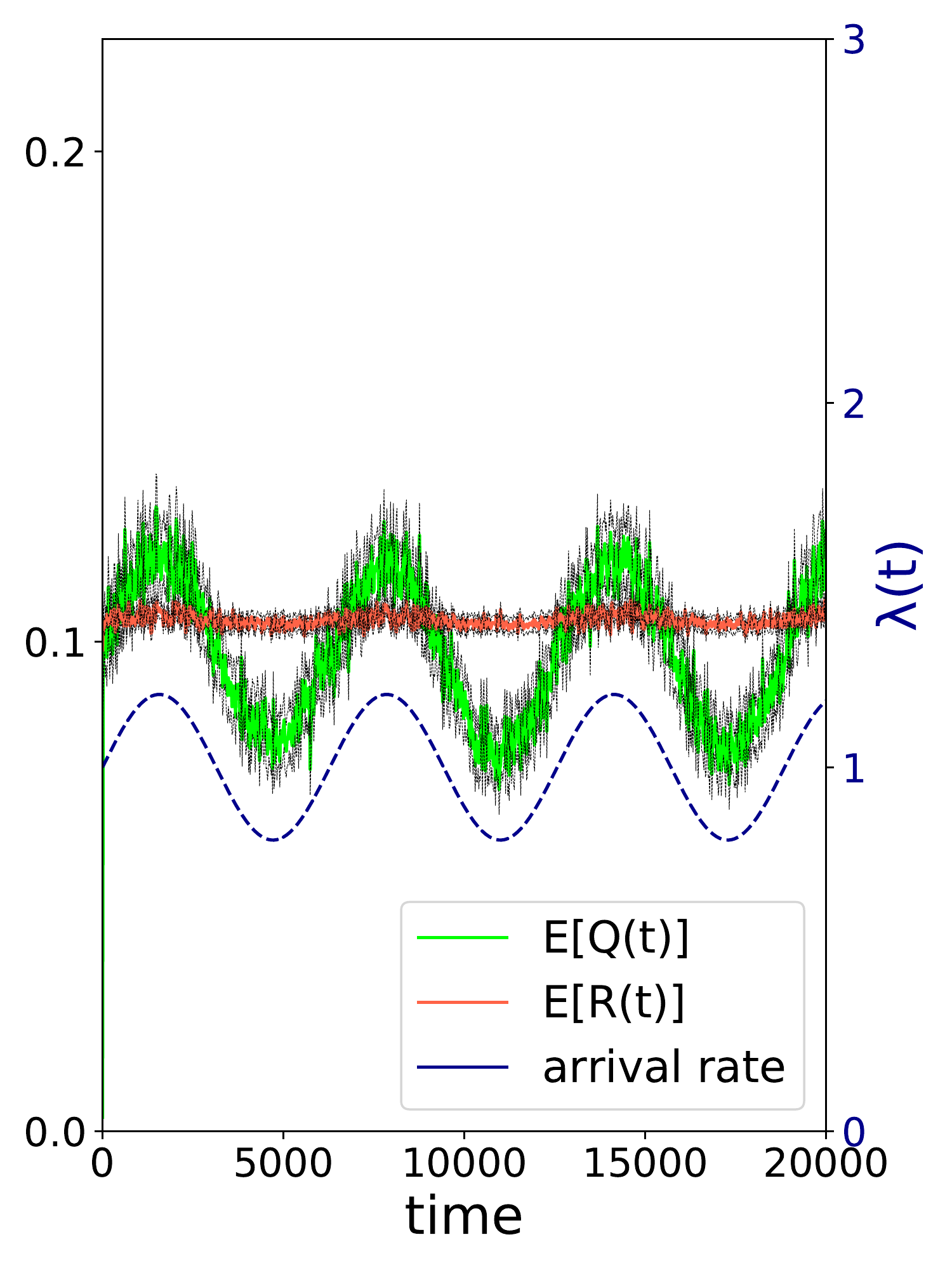}}
		\label{fig:s_01_sr_erer_0001}
	}
	
	\caption{General performance measures of $ER_t/ER_t/1/PS$ queues under SR control where $\lambda(t)=1+0.2\sin{(\gamma t)}$ with target response time $0.1$ (light-traffic)}
	\label{fig:s_01_sr_erer}
\end{figure}

\begin{figure}
	\centering
	\subfloat[][$\gamma=0.1$]
	{
		\centering\resizebox{0.28\textwidth}{!}{\includegraphics{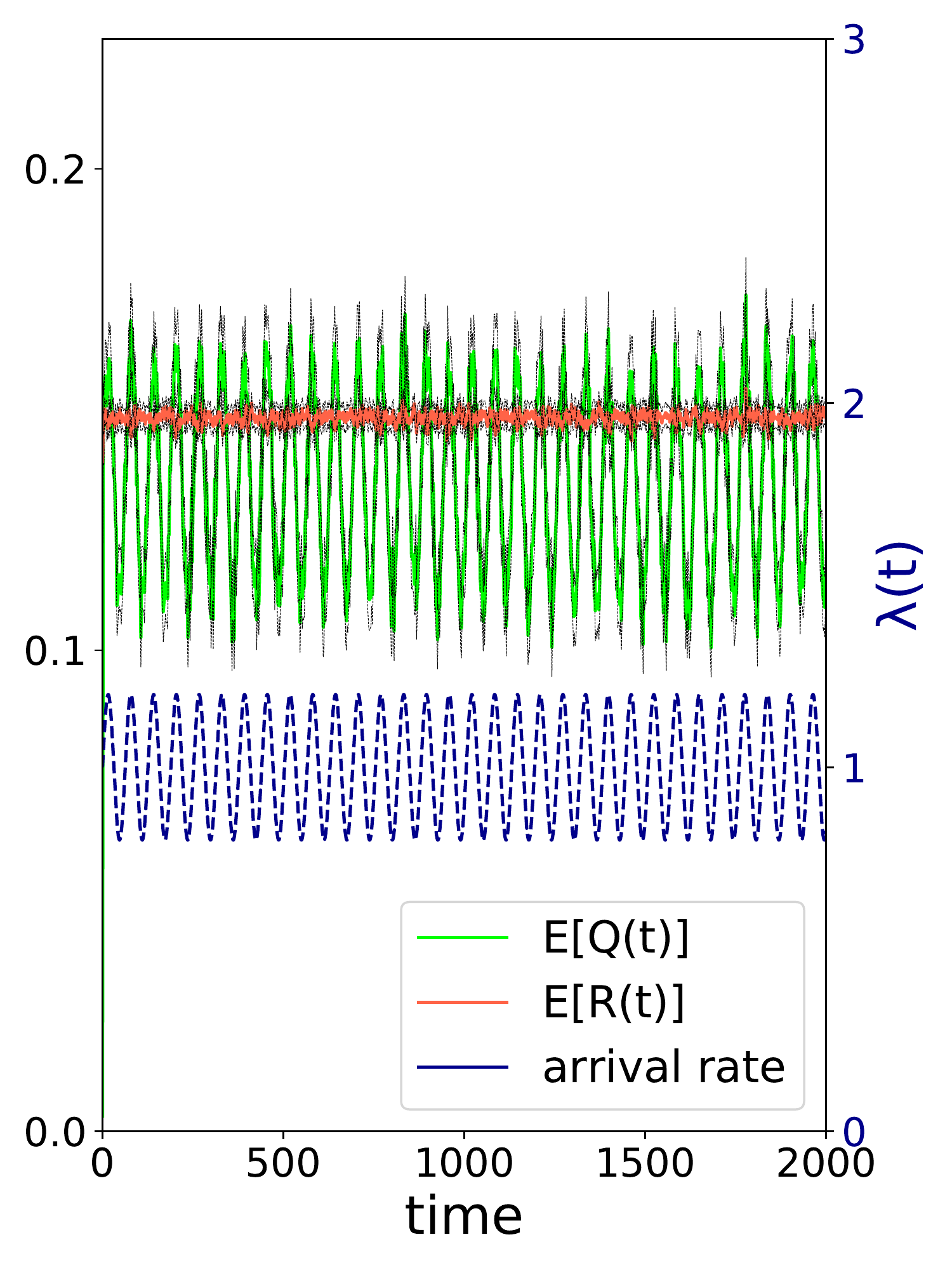}}
		\label{fig:s_01_dm_erer_01}
	}
	~
	\subfloat[][$\gamma=0.01$]
	{
		\centering\resizebox{0.28\textwidth}{!}{\includegraphics{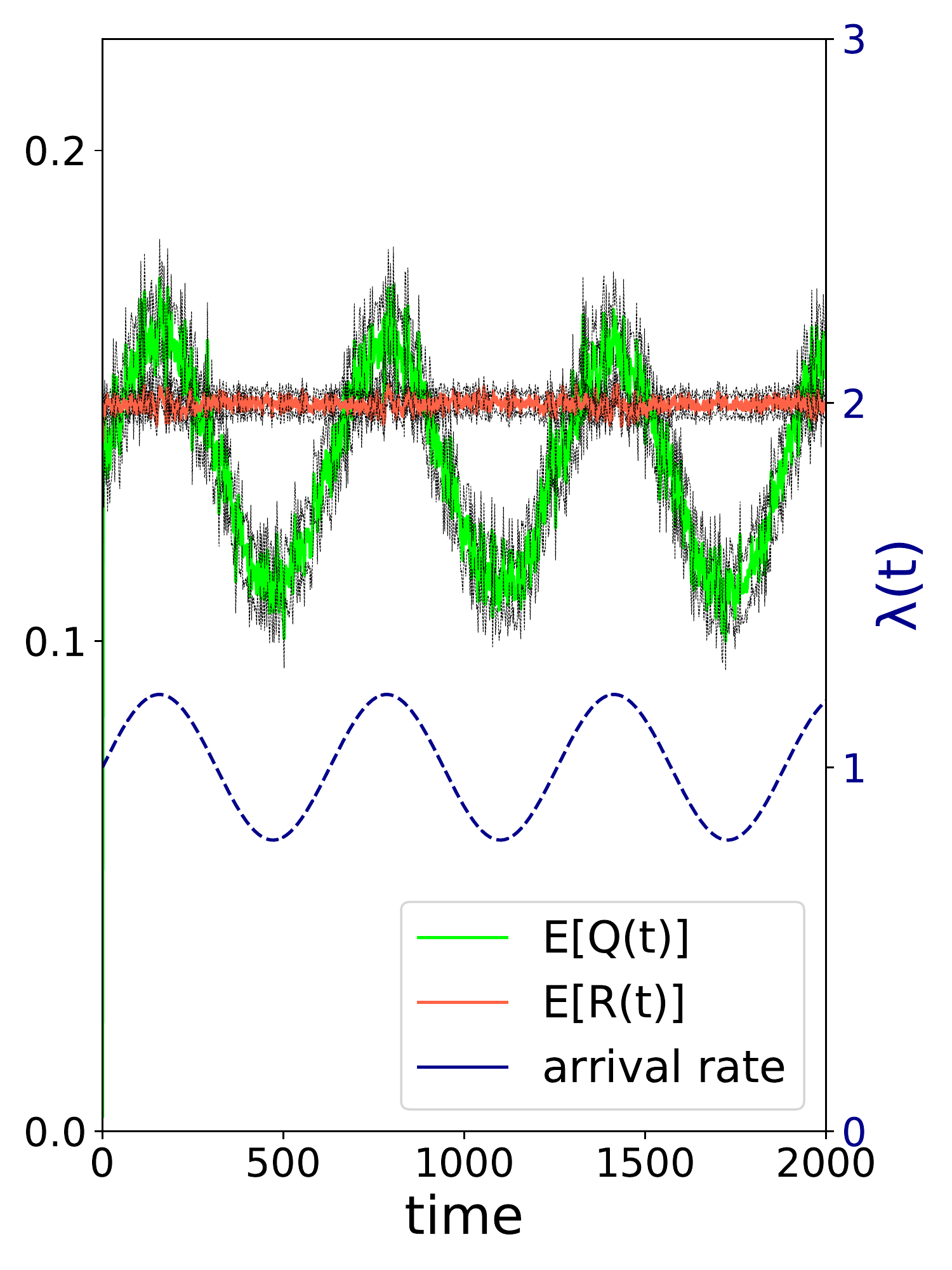}}
		\label{fig:s_01_dm_erer_001}
	}
	~
	\subfloat[][$\gamma=0.001$]
	{
		\centering\resizebox{0.28\textwidth}{!}{\includegraphics{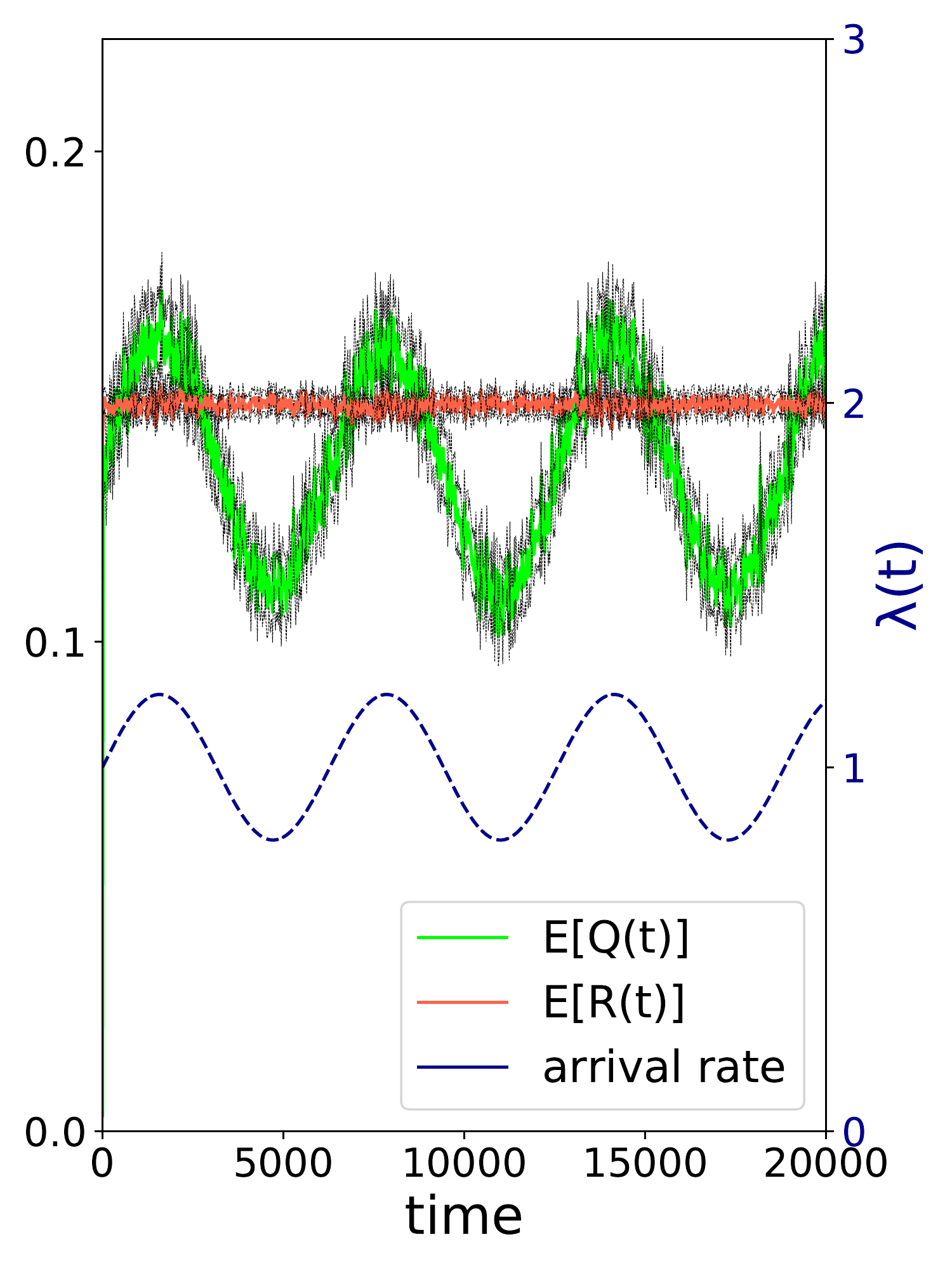}}
		\label{fig:s_01_dm_erer_0001}
	}
	
	\caption{General performance measures of $ER_t/ER_t/1/PS$ queues under DM control where $\lambda(t)=1+0.2\sin{(\gamma t)}$ with target response time $0.1$  (light-traffic)}
	\label{fig:s_01_dm_erer}
\end{figure}

\subsubsection{Control performances in heavy-traffic systems (s=10)} \label{subsec:result:ht}
Figures \ref{fig:s_10_sr} and \ref{fig:s_10_dm} depict the heavy-traffic systems under the two controls and three pairs of base distributions (EXP/EXP, ER/ER, LN/LN). Compared to the light-traffic systems, we do not observe perfectly controlled results. For quickly time-varying arrival rate ($\gamma=0.1$), the poor control performance is obvious since the PSA is not appropriate. We observe positive results for the slowest time-varying arrival rate ($\gamma=0.001$) despite the imperfect stabilization. The consistently better accuracy of DM control justifies its use in heavy-traffic systems. 

\begin{figure}
	\centering
	\subfloat[][EXP/EXP, $\gamma=0.1$]
	{
		\centering\resizebox{0.27\textwidth}{!}{\includegraphics{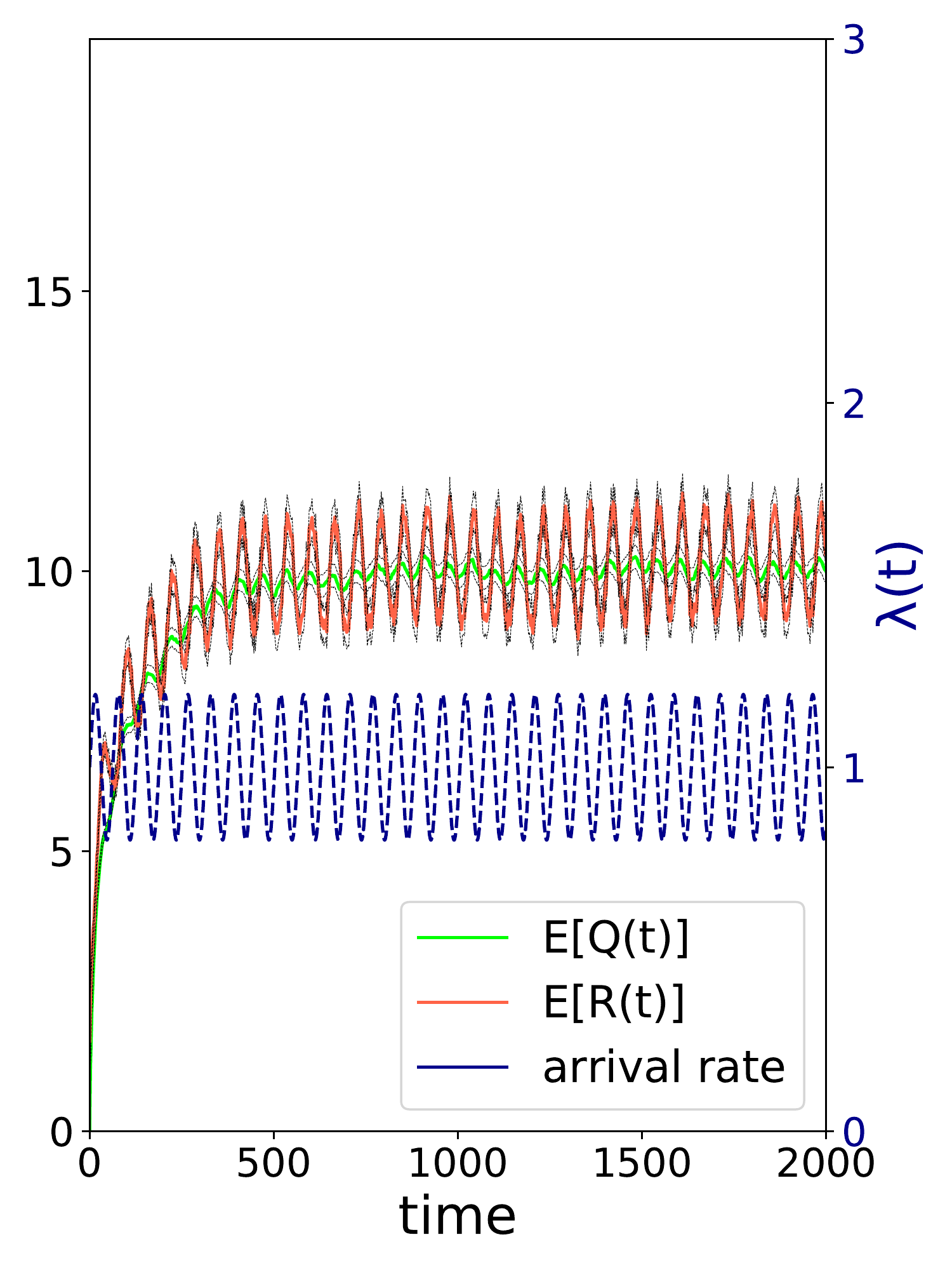}}
		\label{fig:s_10_sr_expexp_01}
	}
	~
	\subfloat[][EXP/EXP, $\gamma=0.01$]
	{
		\centering\resizebox{0.27\textwidth}{!}{\includegraphics{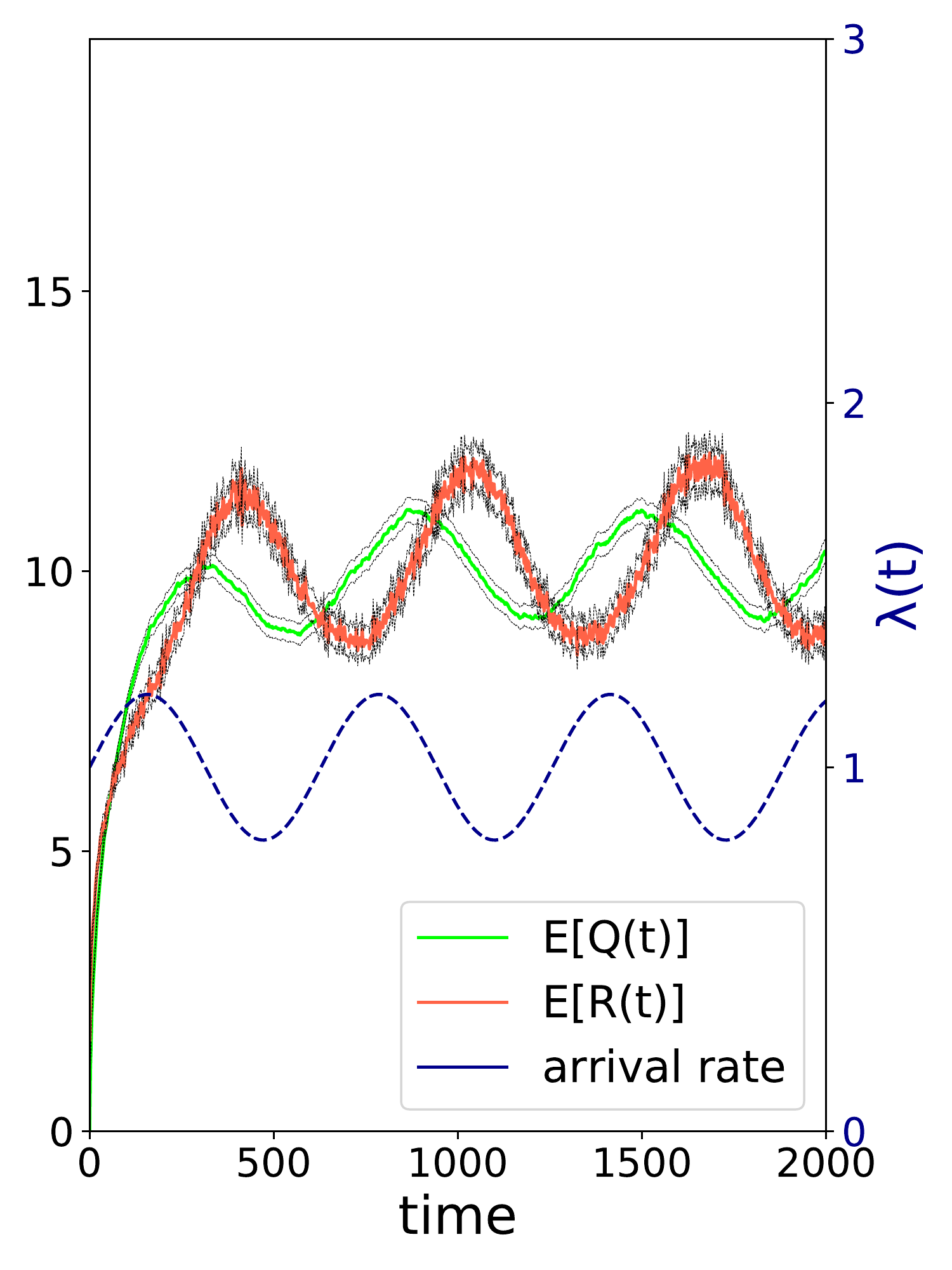}}
		\label{fig:s_10_sr_expexp_001}
	}
	~
	\subfloat[][EXP/EXP, $\gamma=0.001$]
	{
		\centering\resizebox{0.27\textwidth}{!}{\includegraphics{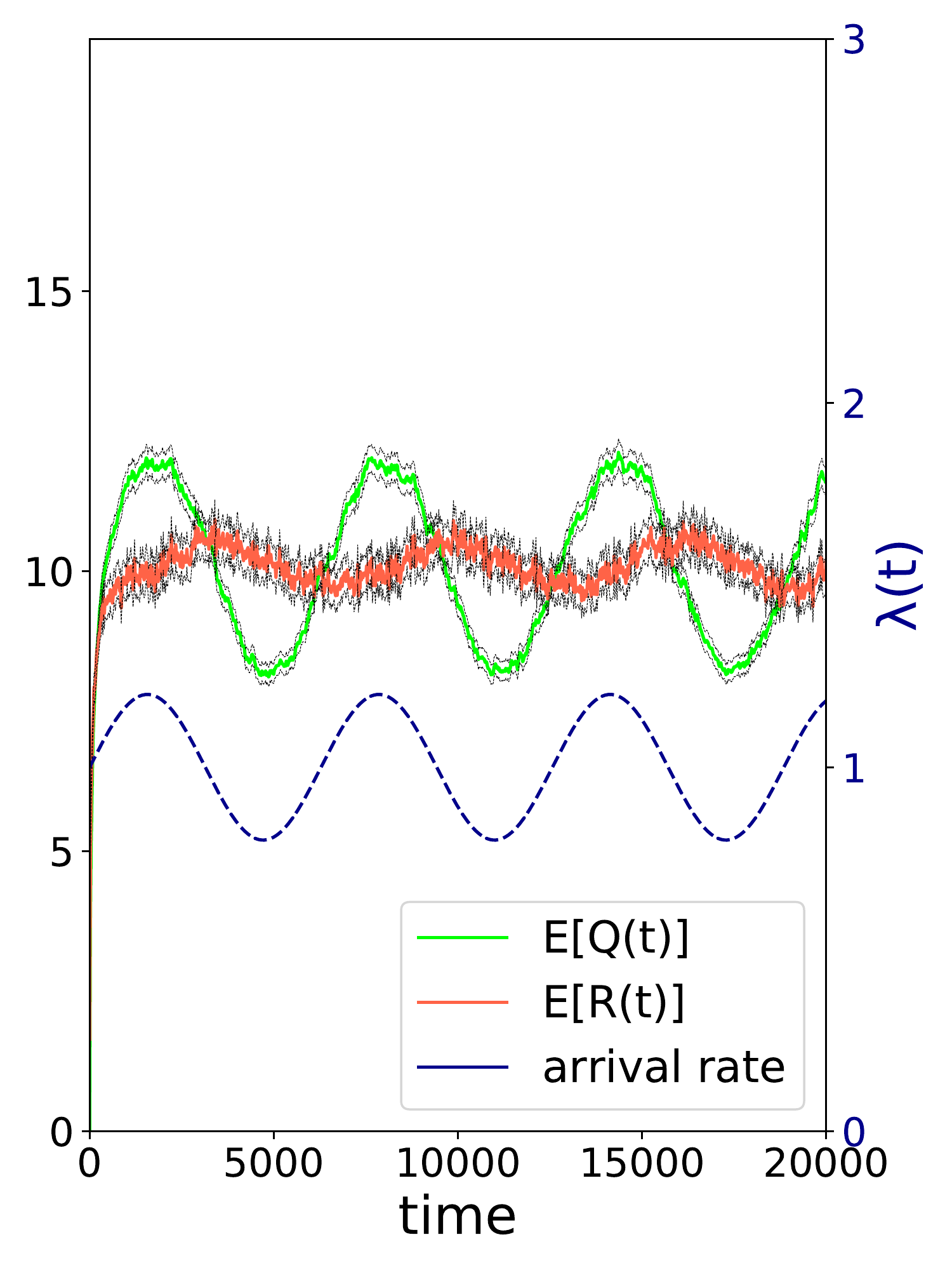}}
		\label{fig:s_10_sr_expexp_0001}
	}
	
	\subfloat[][ER/ER, $\gamma=0.1$]
	{
		\centering\resizebox{0.27\textwidth}{!}{\includegraphics{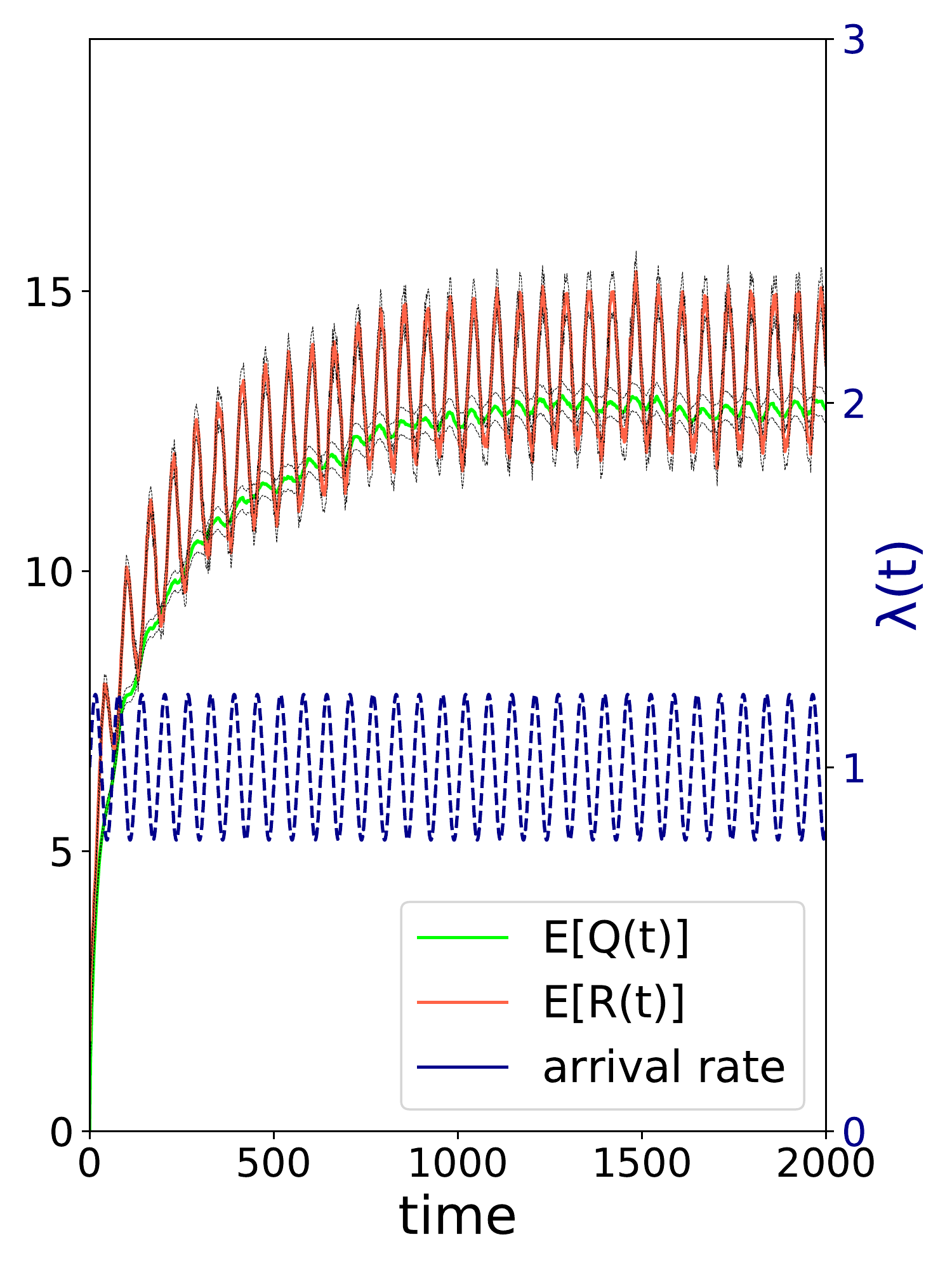}}
		\label{fig:s_10_sr_erer_01}
	}
	~
	\subfloat[][ER/ER, $\gamma=0.01$]
	{
		\centering\resizebox{0.27\textwidth}{!}{\includegraphics{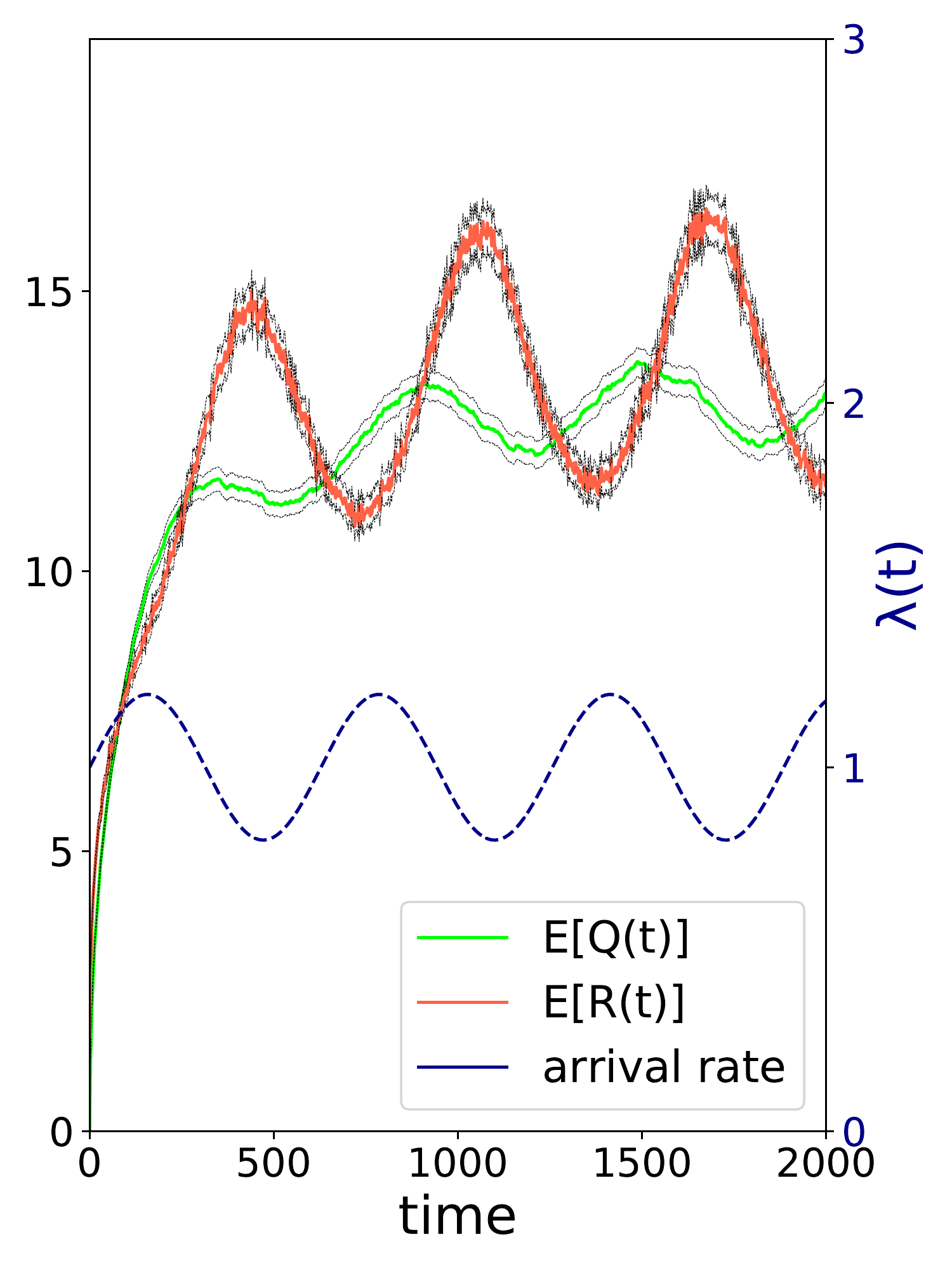}}
		\label{fig:s_10_sr_erer_001}
	}
	~
	\subfloat[][ER/ER, $\gamma=0.001$]
	{
		\centering\resizebox{0.27\textwidth}{!}{\includegraphics{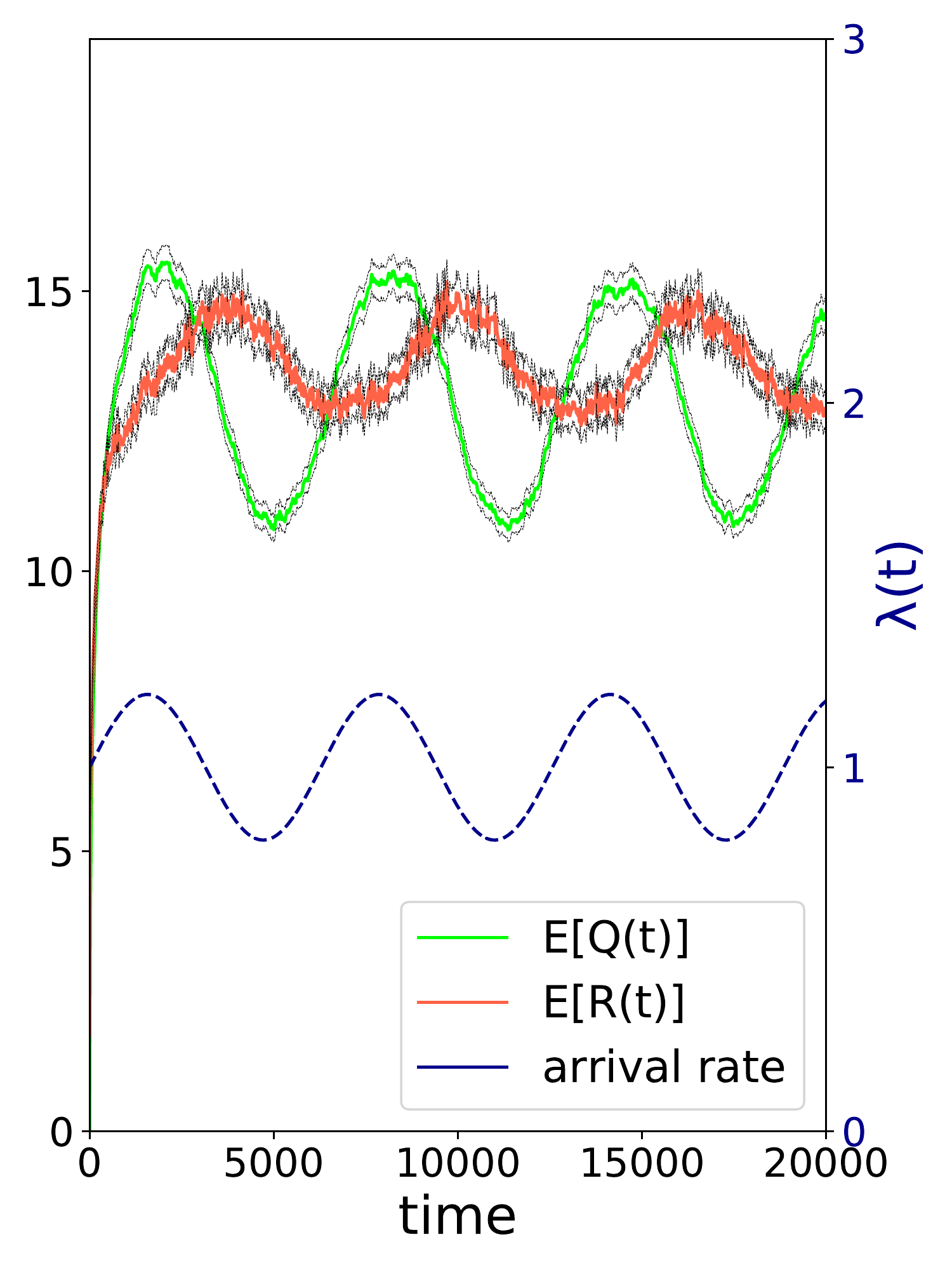}}
		\label{fig:s_10_sr_erer_0001}
	}
	
	\subfloat[][LN/LN, $\gamma=0.1$]
	{
		\centering\resizebox{0.27\textwidth}{!}{\includegraphics{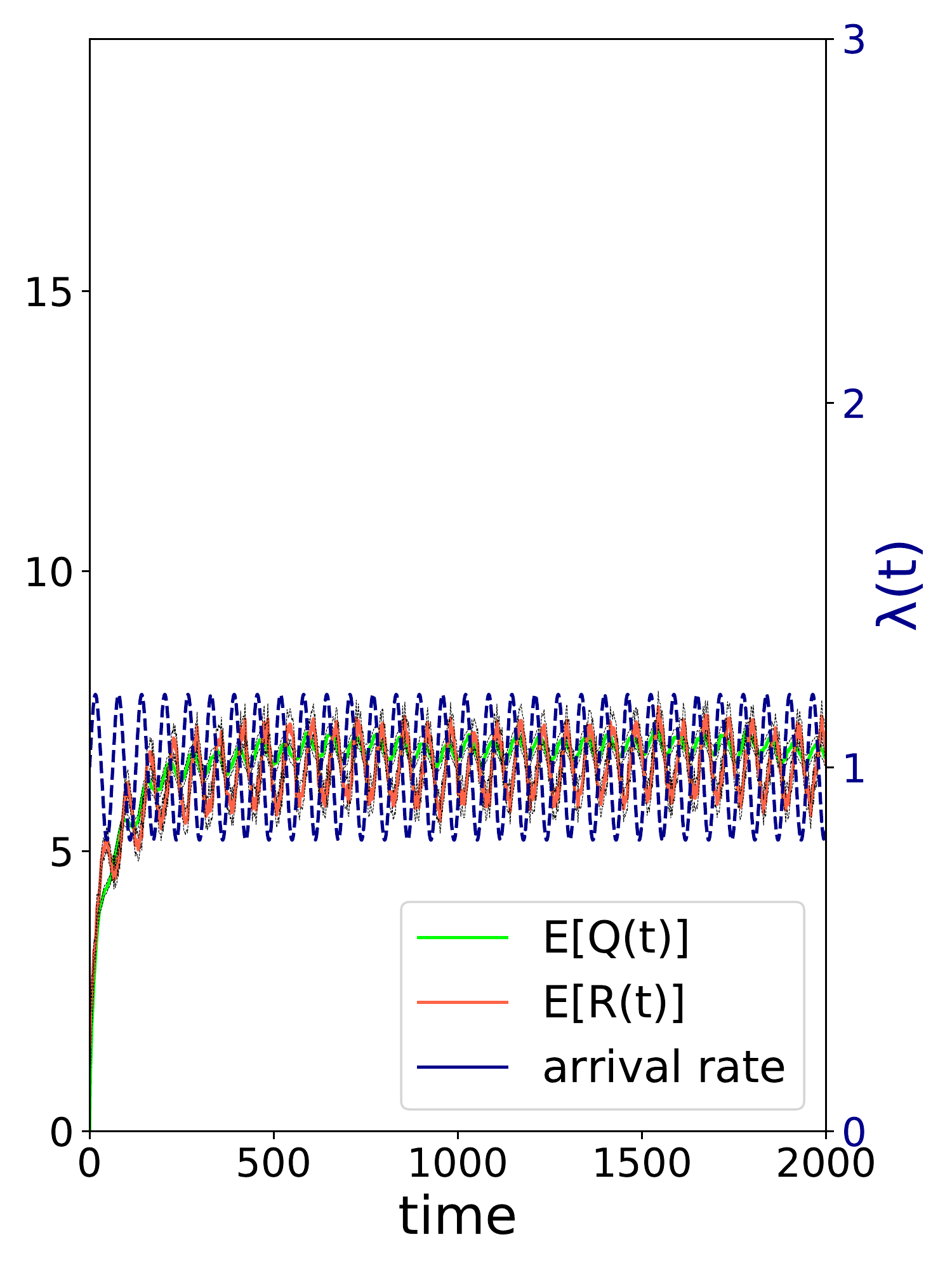}}
		\label{fig:s_10_sr_lnln_01}
	}
	~
	\subfloat[][LN/LN, $\gamma=0.01$]
	{
		\centering\resizebox{0.27\textwidth}{!}{\includegraphics{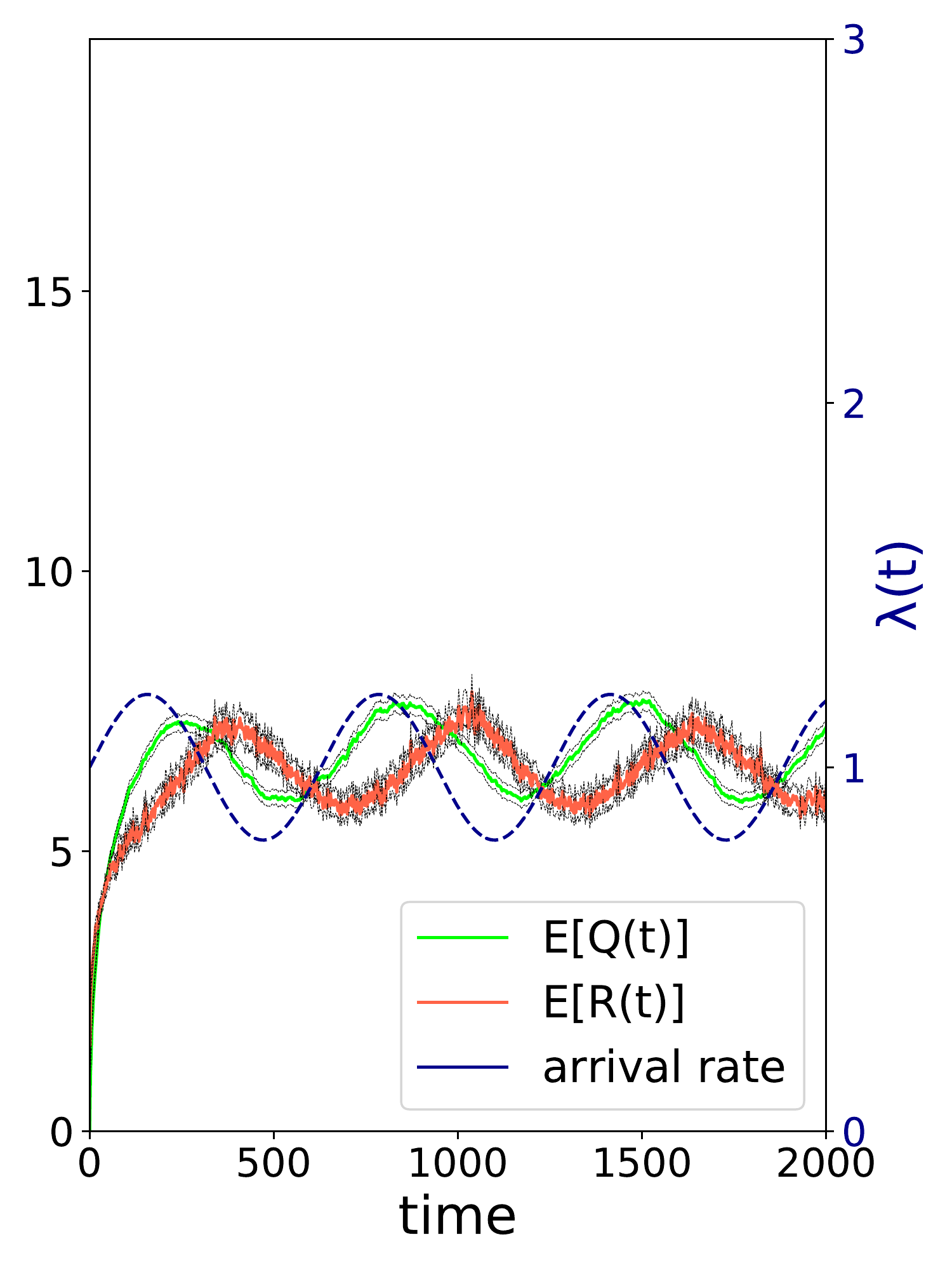}}
		\label{fig:s_10_sr_lnln_001}
	}
	~
	\subfloat[][LN/LN, $\gamma=0.001$]
	{
		\centering\resizebox{0.27\textwidth}{!}{\includegraphics{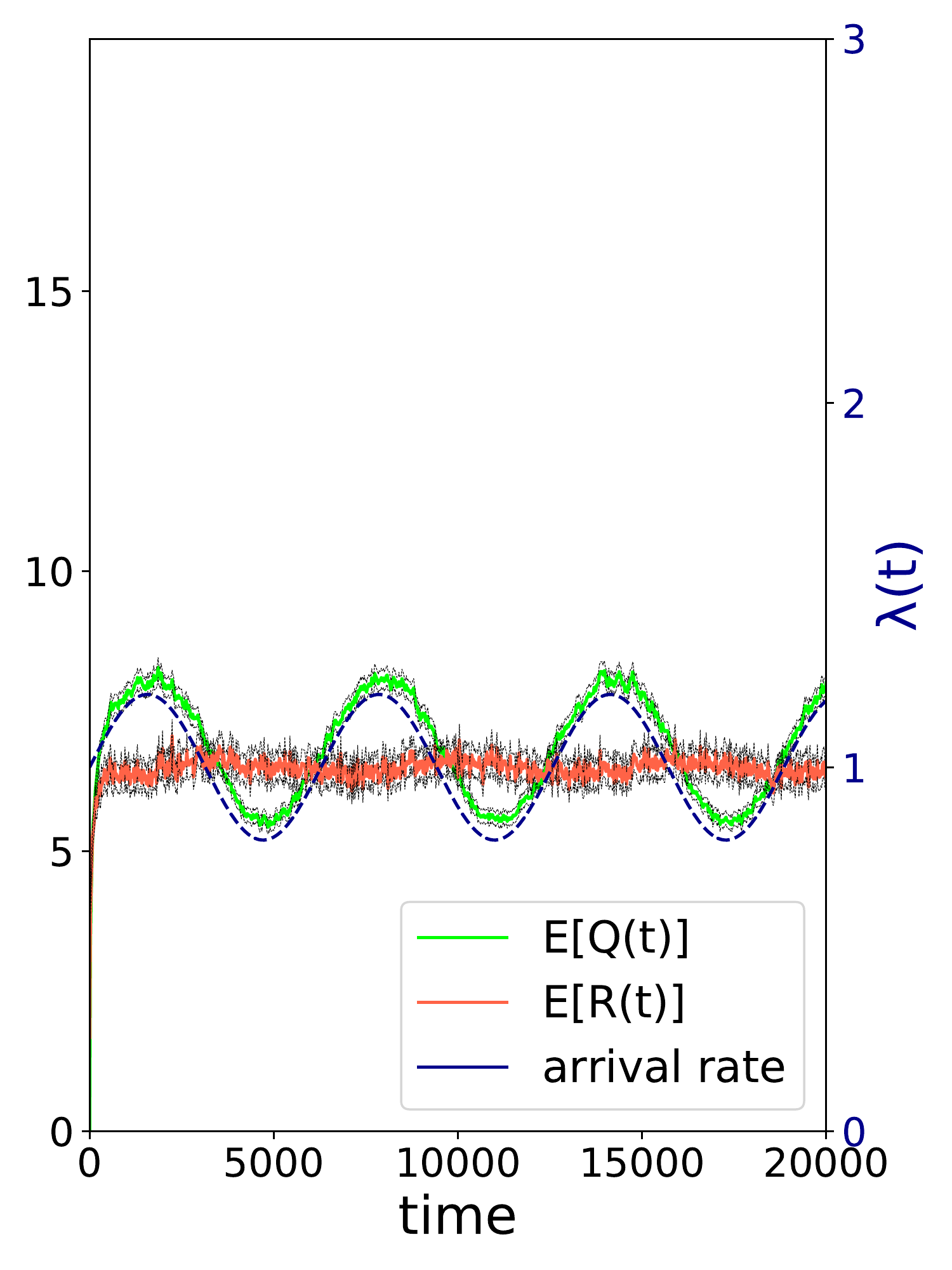}}
		\label{fig:s_10_sr_lnln_0001}
	}
	
	\caption{General performance measures of \TVGGPS\ queues under the SR control with target response time $10$ (heavy-traffic)}
	\label{fig:s_10_sr}
\end{figure}

\begin{figure}
	\centering
	\subfloat[][EXP/EXP, $\gamma=0.1$]
	{
		\centering\resizebox{0.27\textwidth}{!}{\includegraphics{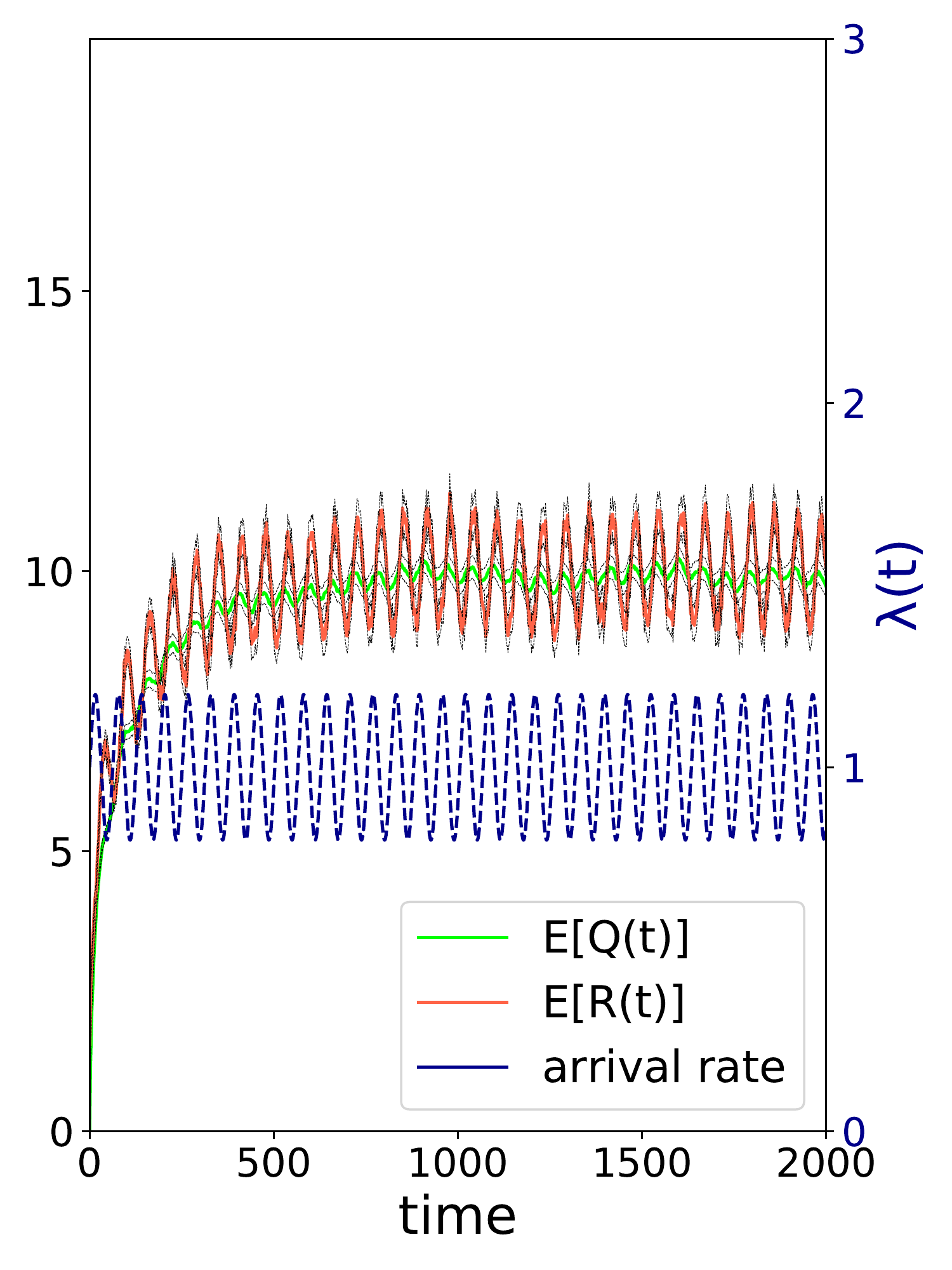}}
		\label{fig:s_10_dm_expexp_01}
	}
	~
	\subfloat[][EXP/EXP, $\gamma=0.01$]
	{
		\centering\resizebox{0.27\textwidth}{!}{\includegraphics{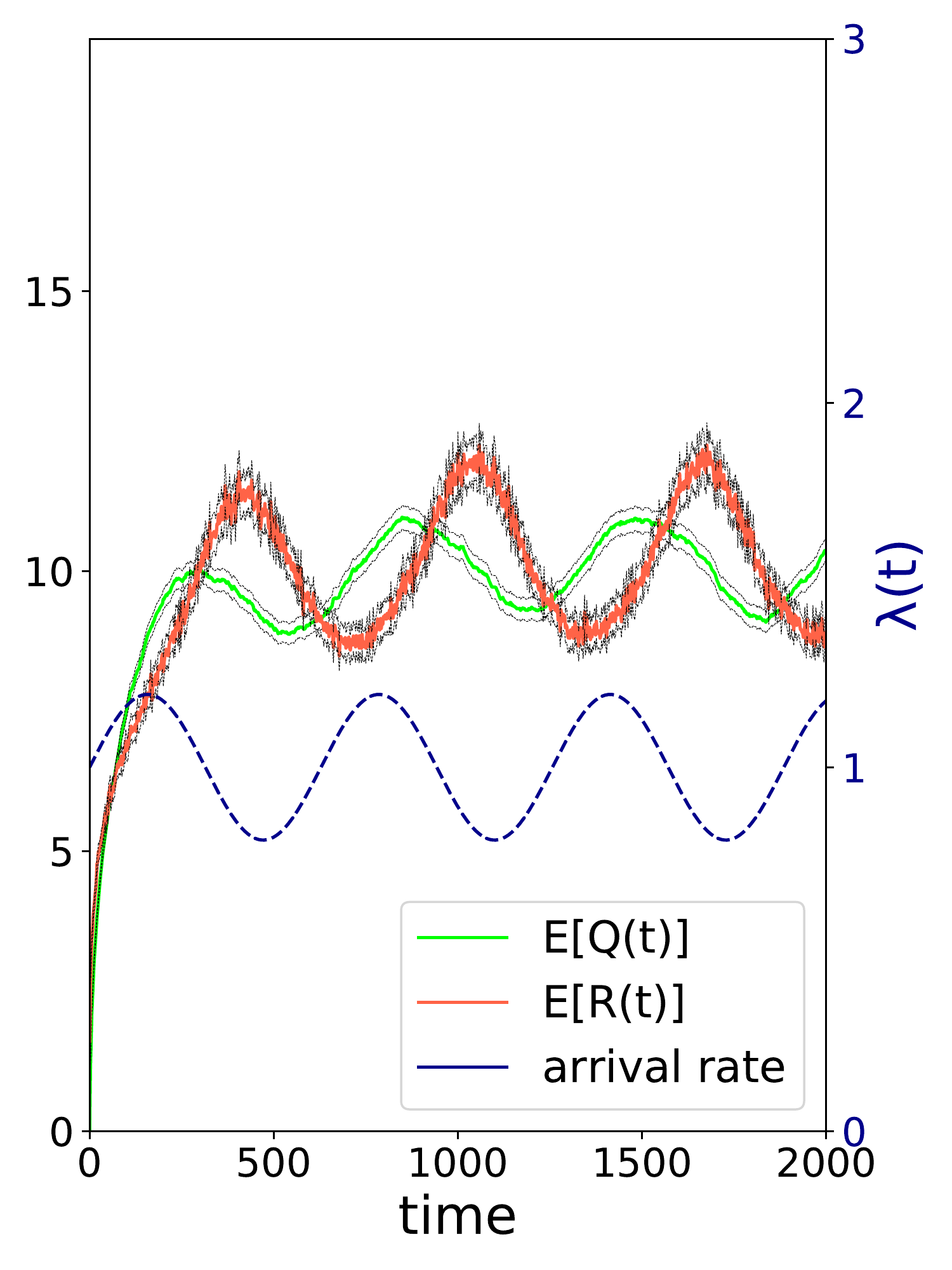}}
		\label{fig:s_10_dm_expexp_001}
	}
	~
	\subfloat[][EXP/EXP, $\gamma=0.001$]
	{
		\centering\resizebox{0.27\textwidth}{!}{\includegraphics{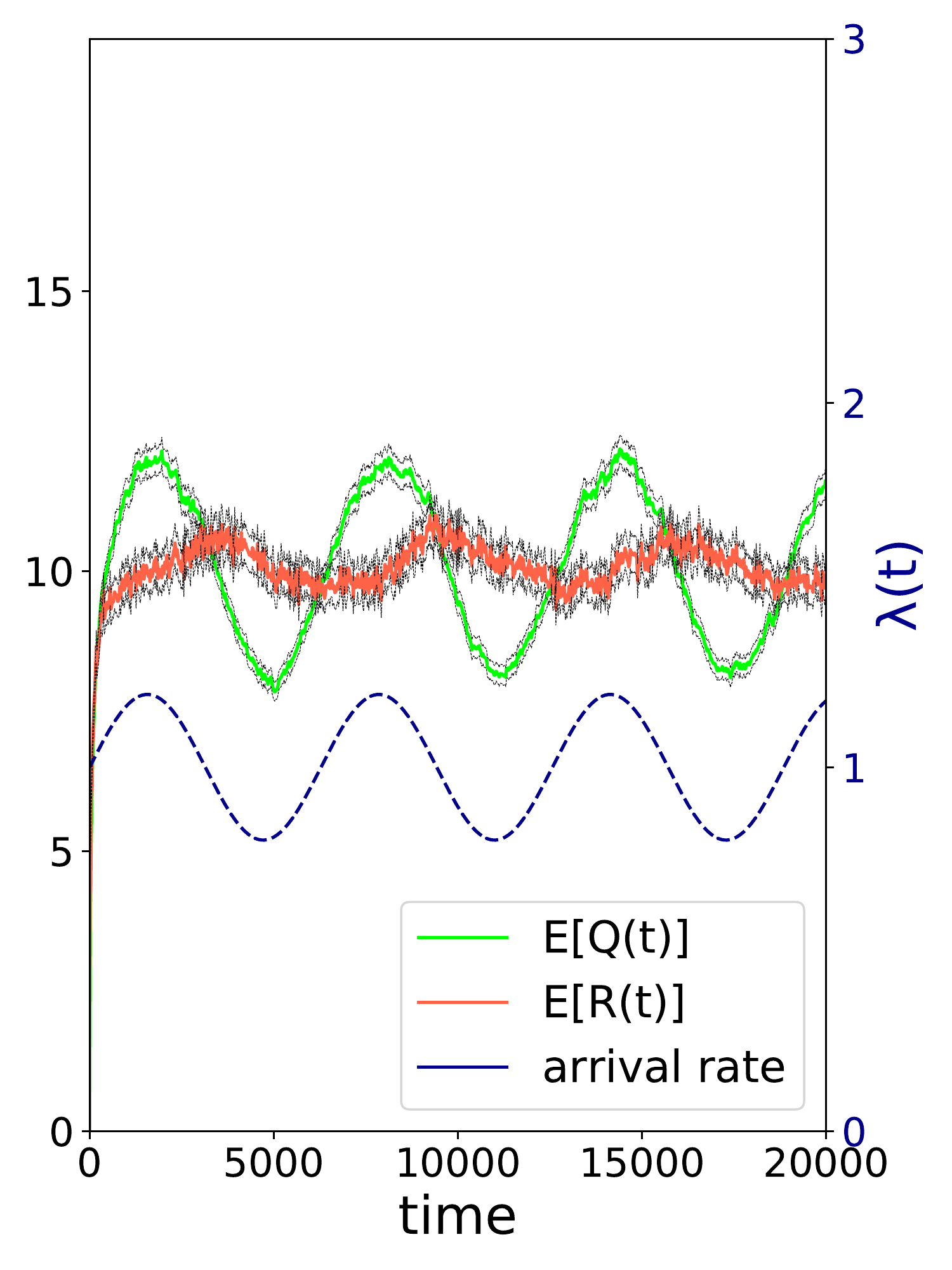}}
		\label{fig:s_10_dm_expexp_0001}
	}
	
	\subfloat[][ER/ER, $\gamma=0.1$]
	{
		\centering\resizebox{0.27\textwidth}{!}{\includegraphics{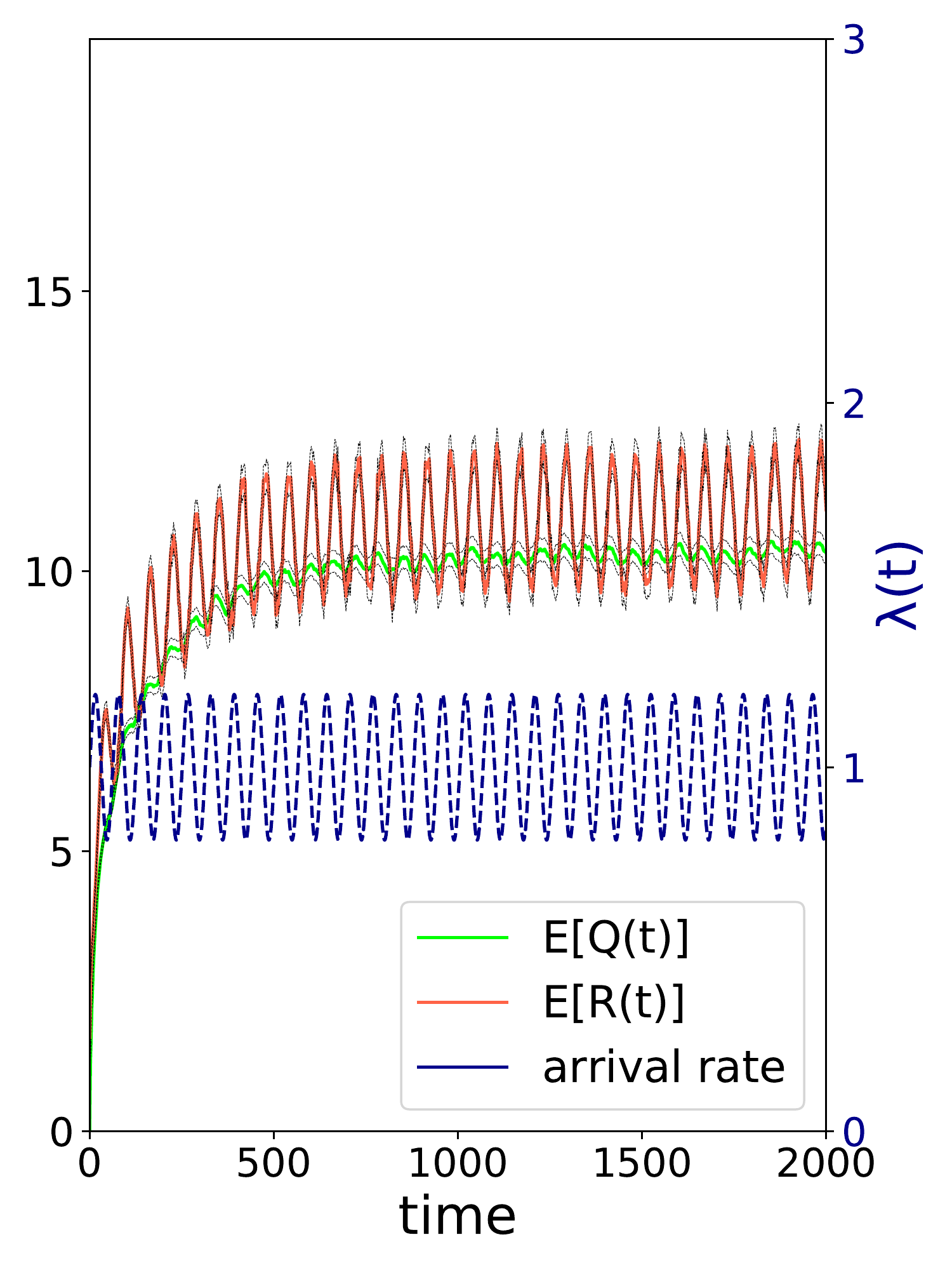}}
		\label{fig:s_10_dm_erer_01}
	}
	~
	\subfloat[][ER/ER, $\gamma=0.01$]
	{
		\centering\resizebox{0.27\textwidth}{!}{\includegraphics{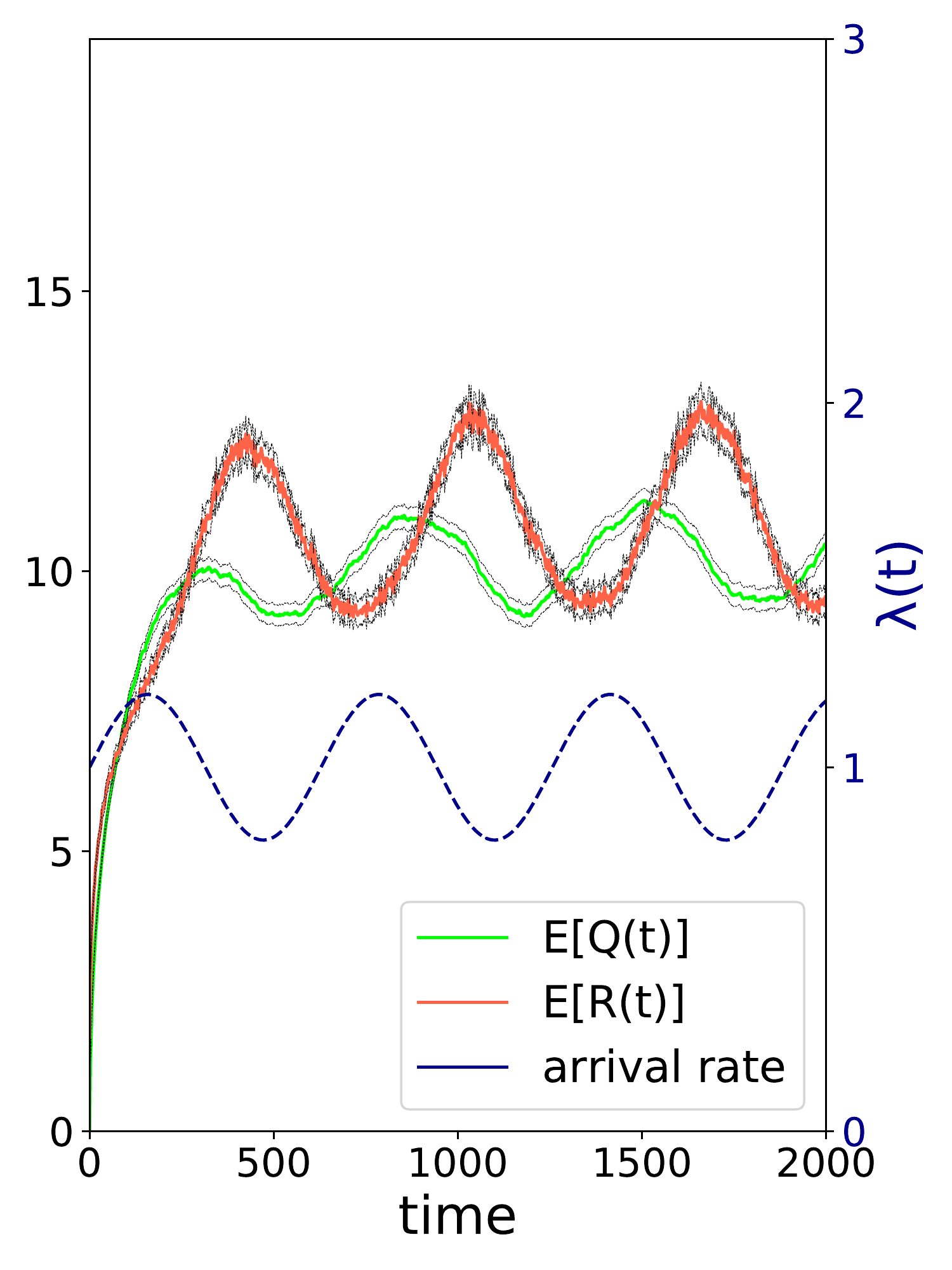}}
		\label{fig:s_10_dm_erer_001}
	}
	~
	\subfloat[][ER/ER, $\gamma=0.001$]
	{
		\centering\resizebox{0.27\textwidth}{!}{\includegraphics{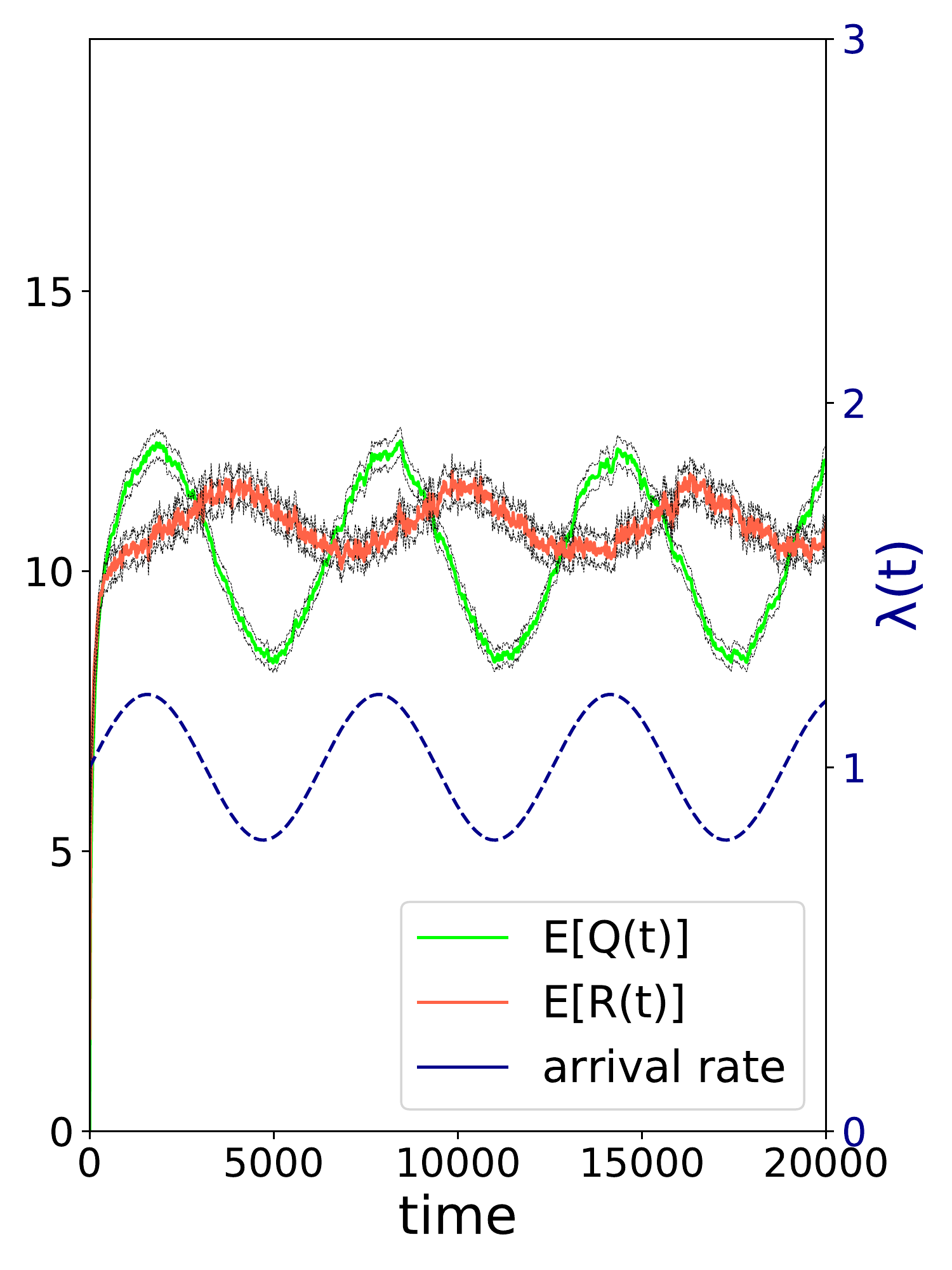}}
		\label{fig:s_10_dm_erer_0001}
	}
	
	\subfloat[][LN/LN, $\gamma=0.1$]
	{
		\centering\resizebox{0.27\textwidth}{!}{\includegraphics{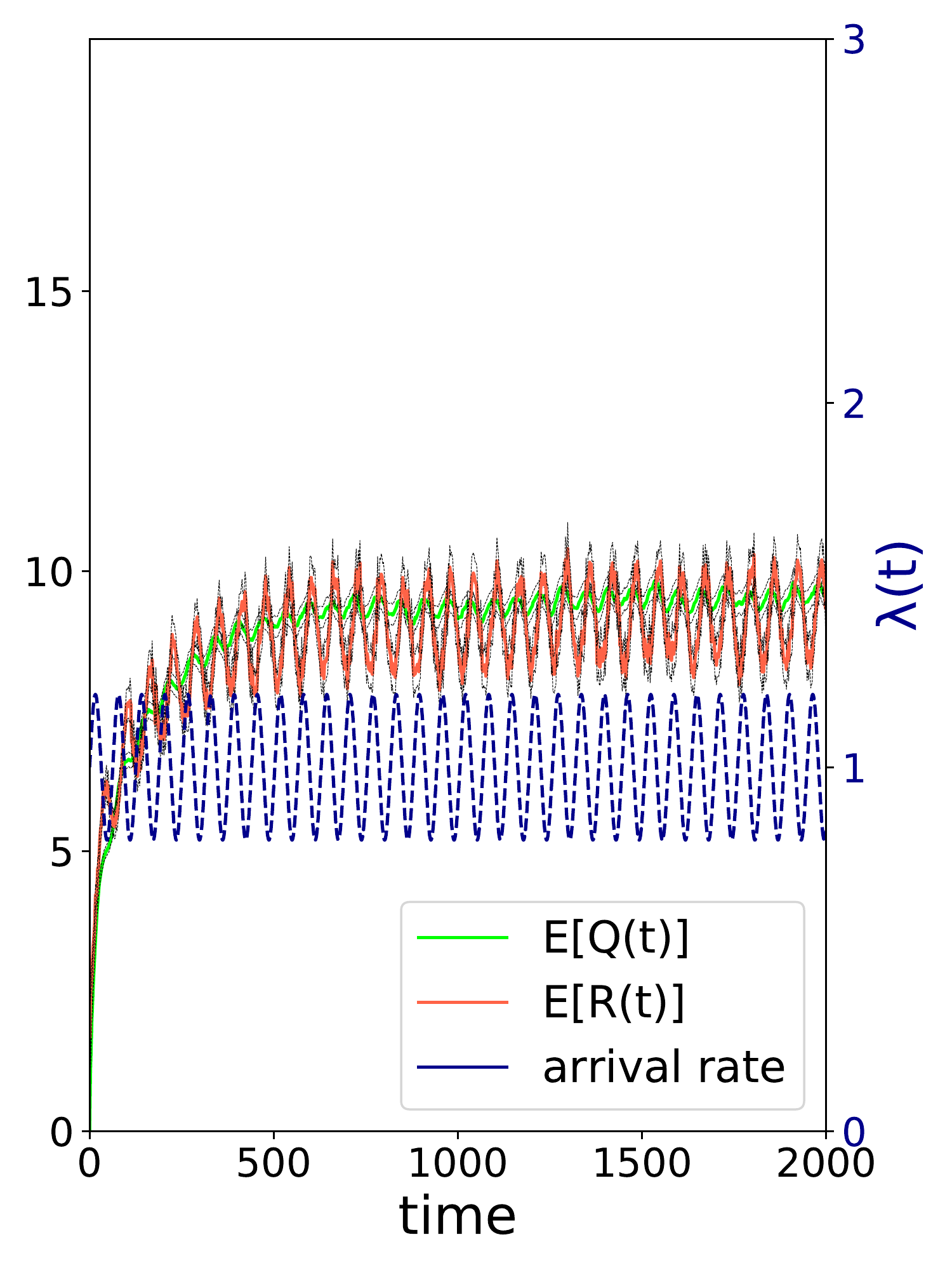}}
		\label{fig:s_10_dm_lnln_01}
	}
	~
	\subfloat[][LN/LN, $\gamma=0.01$]
	{
		\centering\resizebox{0.27\textwidth}{!}{\includegraphics{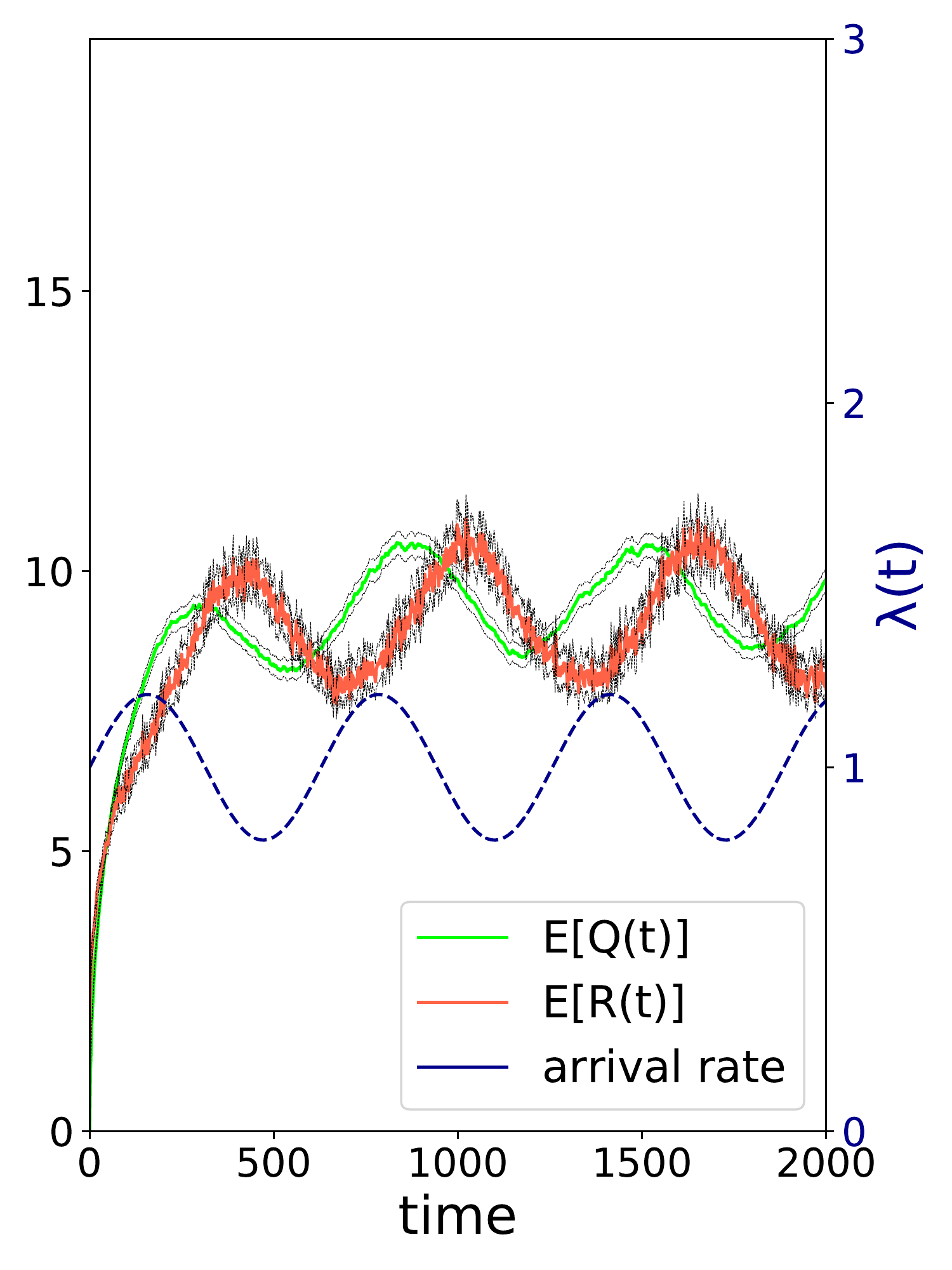}}
		\label{fig:s_10_dm_lnln_001}
	}
	~
	\subfloat[][LN/LN, $\gamma=0.001$]
	{
		\centering\resizebox{0.27\textwidth}{!}{\includegraphics{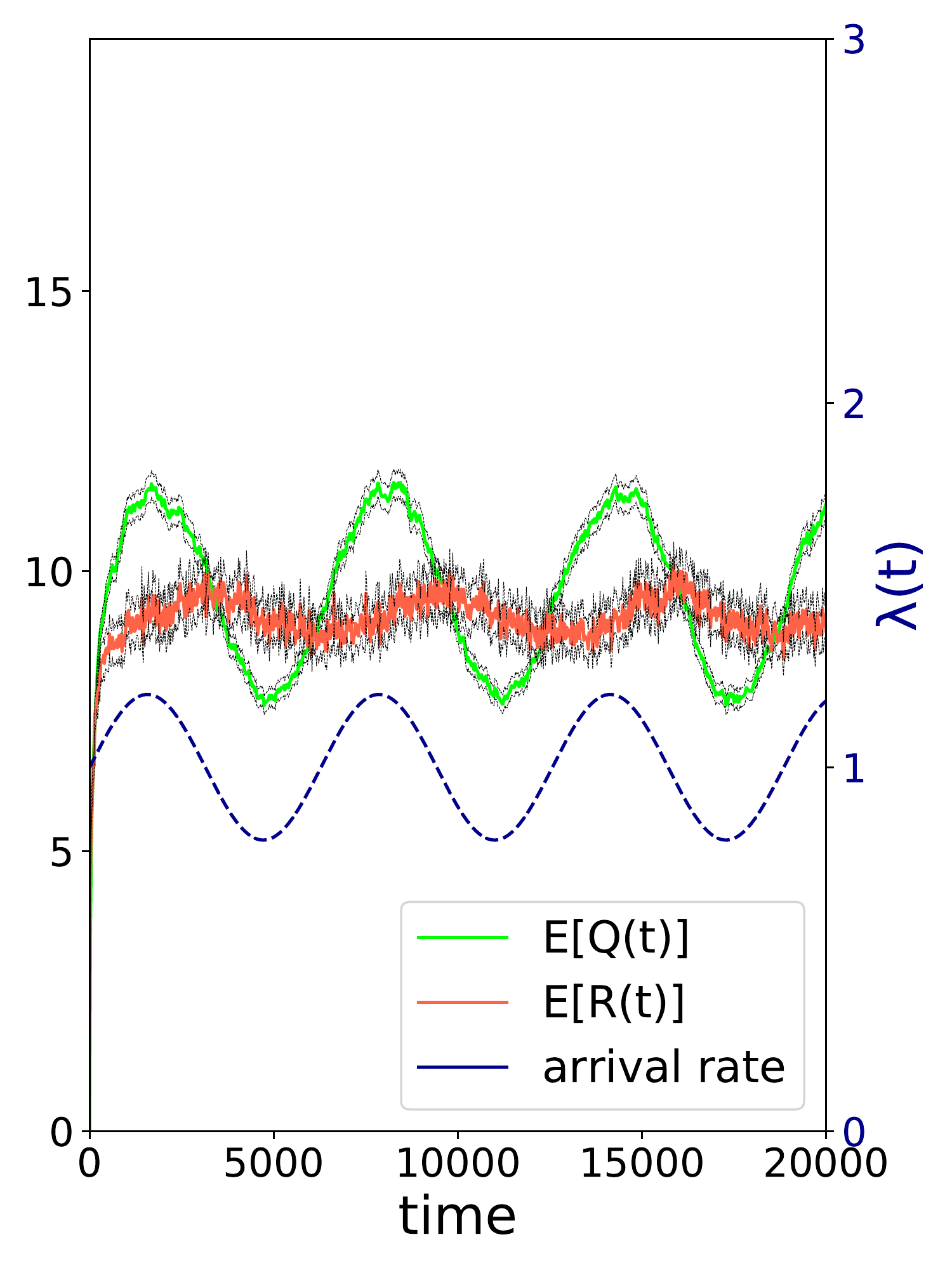}}
		\label{fig:s_10_dm_lnln_0001}
	}
	
	\caption{General performance measures of \TVGGPS\ queues under the DM control with target response time $10$ (heavy-traffic)}
	\label{fig:s_10_dm}
\end{figure}

\subsubsection{Why does DM fail to meet the target response time in light-traffic?}\label{subsec:interpretation_lt}
In light-traffic systems, the probability that two or more jobs will present simultaneously becomes smaller, i.e., it is rare that multiple jobs will share the same processor. We recall the following heavy-traffic based PSAs of the expected virtual response time processes (Equations (\ref{htpsa_fcfs}) and (\ref{htpsa_ps}) in Section \ref{chap2:method}).
\begin{align}
\EE[R_{FCFS}(t)]&\approx\frac{\beta}{\mu(t)}+\frac{\beta}{\mu(t)}\cdot\frac{\rho(t)}{1-\rho(t)}\cdot V_{FCFS}, \tag{\ref{htpsa_fcfs}}\\
\EE[R_{PS}(t)]&\approx\frac{\beta}{\mu(t)}\cdot\frac{1}{1-\rho(t)}\cdot V_{PS}. \tag{\ref{htpsa_ps}}
\end{align}
Letting $\rho(t)\to 0$, the two approximations above converge, respectively, to
\begin{align}
\EE[R_{FCFS}(t)]&\approx\frac{\beta}{\mu(t)},\label{FCFS_0}\\
\EE[R_{PS}(t)]&\approx\frac{\beta}{\mu(t)}\cdot V_{PS}, \label{PS_0}
\end{align}
and then the two controls reduce to the constants:
\begin{align}
\mu_{SR}(t;s)&\equiv\frac{\beta}{s}, \label{SR_0} \\
\mu_{DM}(t;s)&\equiv\frac{\beta V_{PS}}{s}. \label{DM_0}
\end{align}
Throughout the simulation experiments, we use the base distributions having mean $\beta=1$ and the target mean virtual response time $s=0.1$ for light-traffic so $\mu_{SR}(t)=10$ regardless of the distributions. In comparison, $\mu_{DM}(t)$ varies depending on both the base arrival and job size distributions because of variability factor $V_{PS}$. 

In a \TVGGPS\ queue with a service rate function $\mu(\cdot)$, the response time of a job with random size $S$ and arrival time $t$ denoted by $R(t;S,\mu)$, is expressed as
\begin{align}
R(t;S,\mu)=\inf{\left\{y>0: \int_{t}^{y}\frac{\mu(s)}{Q(s)}\textrm{d}s\ge S\right\}}-t.
\end{align}
Approximating $Q(t)\approx 1$ under the light-traffic condition and letting $\mu(\cdot)\leftarrow\mu_{DM}(\cdot)$, the above expression reduces to the analytic form:
\begin{align}
R(t;S,\mu_{DM})=\frac{S}{\mu_{DM}(t)}.
\end{align}
Replacing $S$ by the mean job size $\beta=1$ obtains the numerical values shown in Table \ref{table:number}. We observe that the simulation results and the approximately calculated values coincide, e.g., $\EE[R(t)]$ in Figure \ref{fig:s_01_dm_erer_0001} and $R(t;\beta,\mu_{DM})$ for Erlang/Erlang in Table \ref{table:number} are 0.15. In comparison, we observe that $R(t;S,\mu_{SR})$ is consistently 0.1 regardless of the distributions based on the reasoning we use to obtain the numerical values in Table \ref{table:number}.

We gain two insights into light-traffic systems. First, the PS queue exhibits behavior similar to the FCFS queue. Second, the two service rate controls do not require time dependency. Thus, we conclude that the SR control is appropriate for stabilizing response times in \TVGGPS\ queues as the target response time shortens.

\begin{table}
	\centering
	\caption{Approximately calculated expected response times ($s=0.1,\beta=1$)}
	\label{table:number}
	\begin{tabular}{@{}|c|c|c|@{}}
		\hline
		Distribution pair & $\mu_{DM}(t)$& $R(t;\beta,\mu_{DM})$ \\ \hline
		Exponential/Exponential          & 10            & 0.1                             \\
		Erlang/Erlang            & 6.667         & 0.15                            \\
		Lognormal/Lognormal            & 13.333        & 0.07                            \\
		Erlang/Lognormal            & 8.333         & 0.12                            \\
		Lognormal/Erlang            & 16.666        & 0.06                            \\ \hline
	\end{tabular}
\end{table}

\subsubsection{Heavy-traffic behavior of the two controls}\label{subsec:interpretation_ht}
Figure \ref{fig:comparison} plots the the result of two controls under the different distribution pairs. As we calculated in Section \ref{subsec:interpretation_lt}, the two controls are significantly different when the target response time is short where the target response time is around zero. The difference between them diminishes as the target response time becomes longer. However, the difference in convergence speeds causes the two controls to perform differently in non-asymptotic heavy-traffic systems, e.g., traffic intensities are around 0.9 throughout the heavy-traffic systems. 

\begin{figure}
	\centering
	\subfloat[][Erlang/Erlang]
	{
		\centering\includegraphics[width=0.45\linewidth]{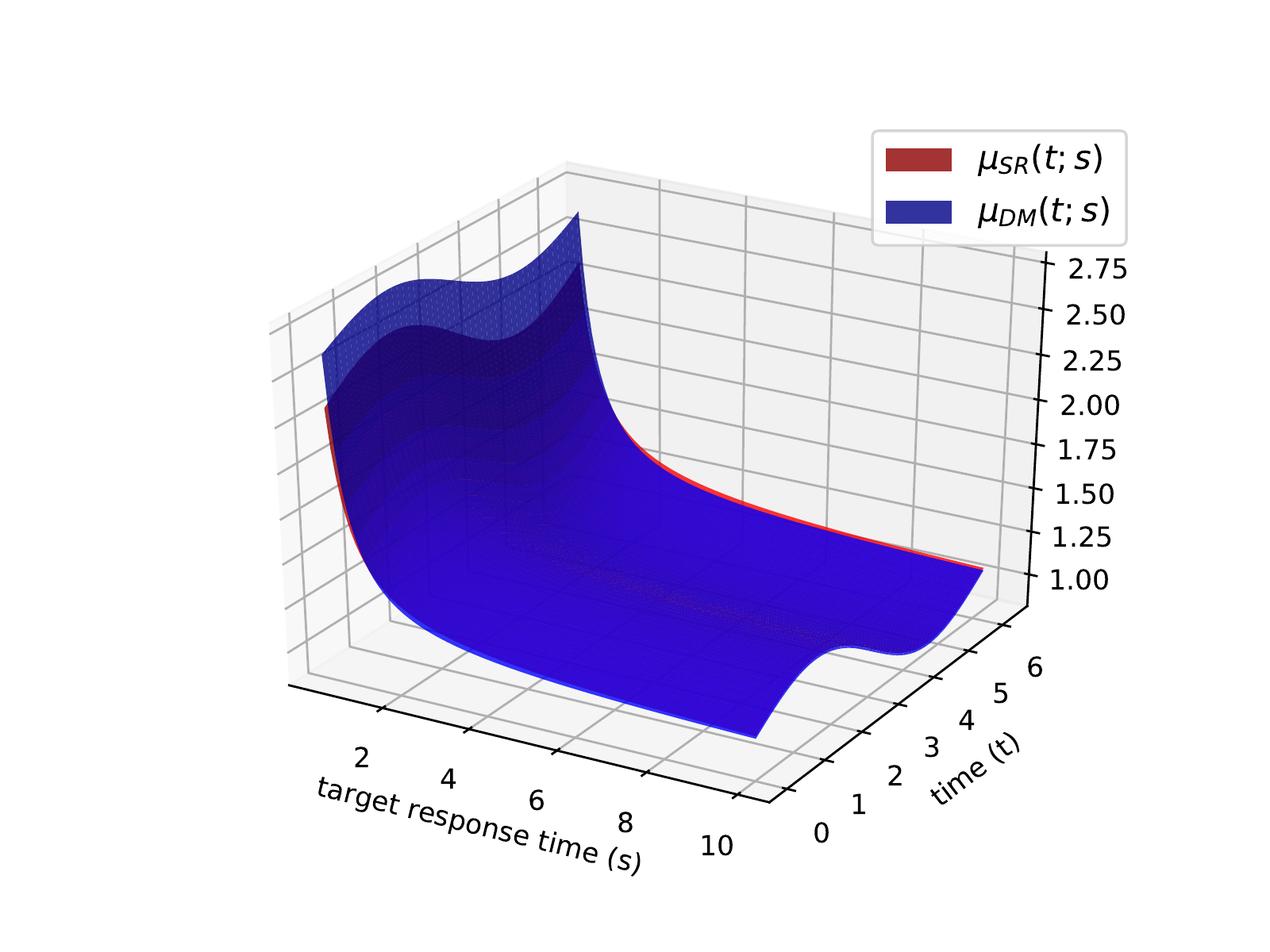}
		\label{fig:comparison_erer}
	}
	~
	\subfloat[][Lognormal/Lognormal]
	{
		\centering\includegraphics[width=0.45\linewidth]{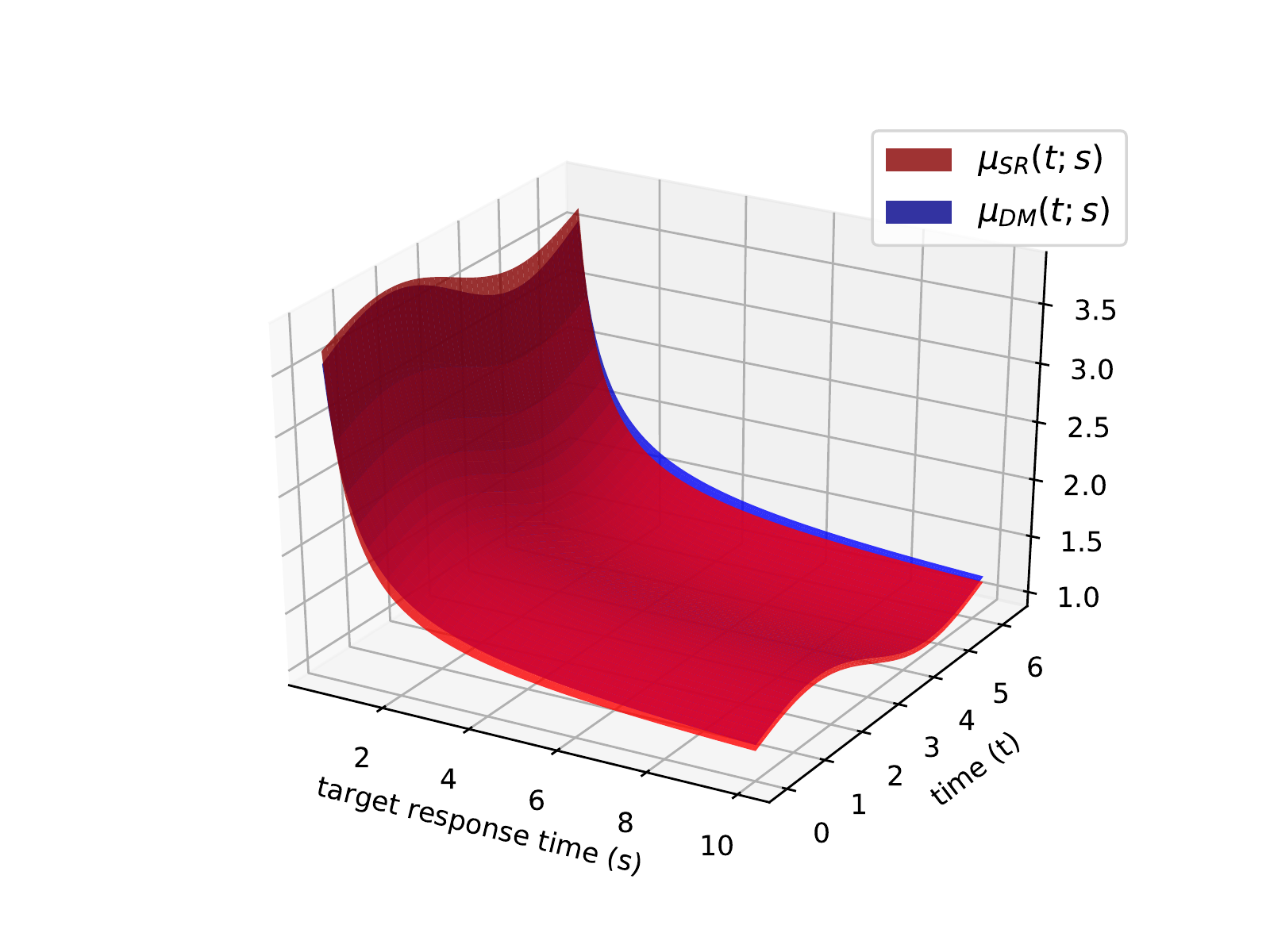}
		\label{fig:comparison_lnln}
	}
	
	\subfloat[][Lognormal/Erlang]
	{
		\centering\includegraphics[width=0.45\linewidth]{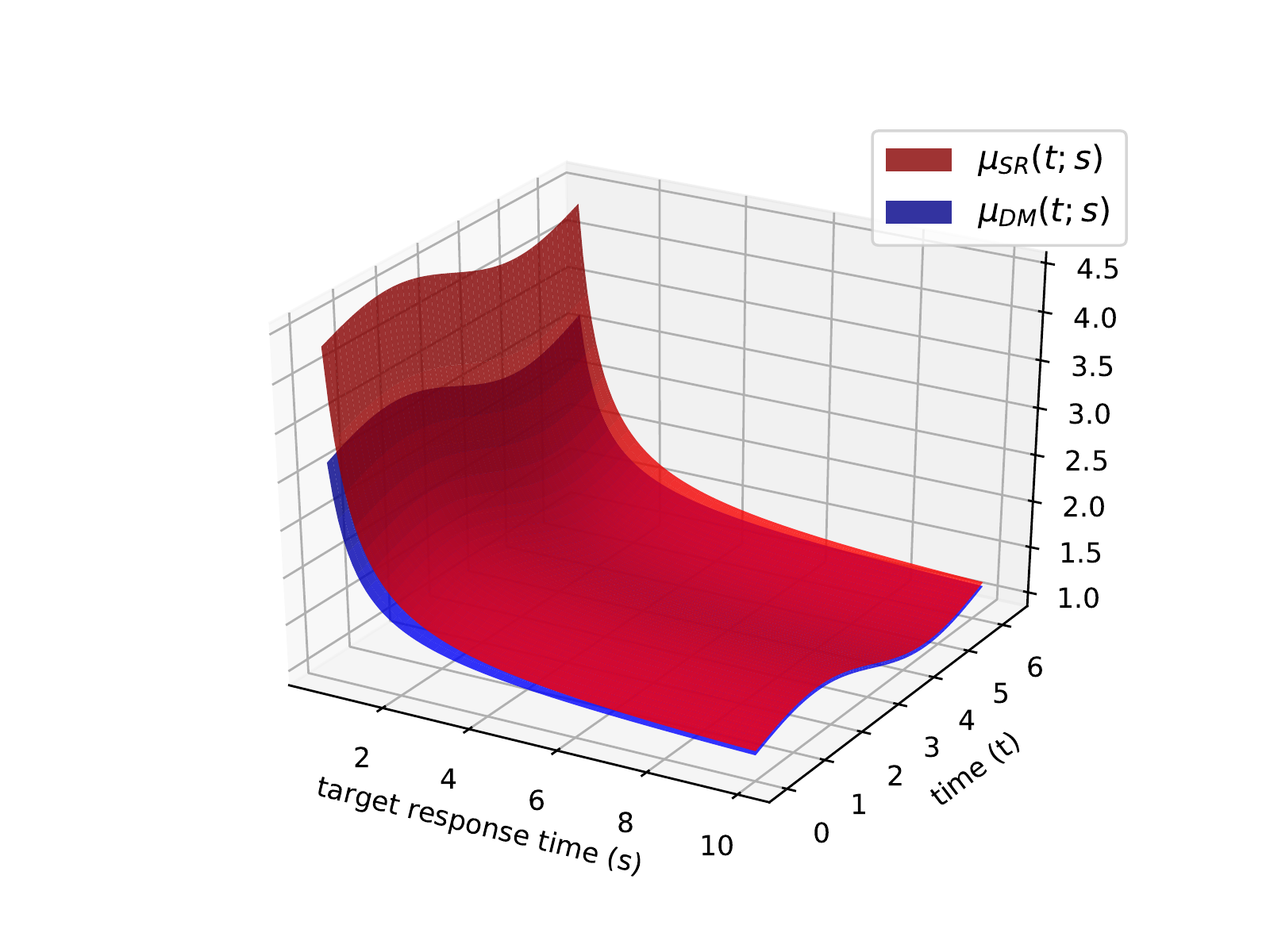}
		\label{fig:comparison_lner}
	}
	~
	\subfloat[][Erlang/Lognormal]
	{
		\centering\includegraphics[width=0.45\linewidth]{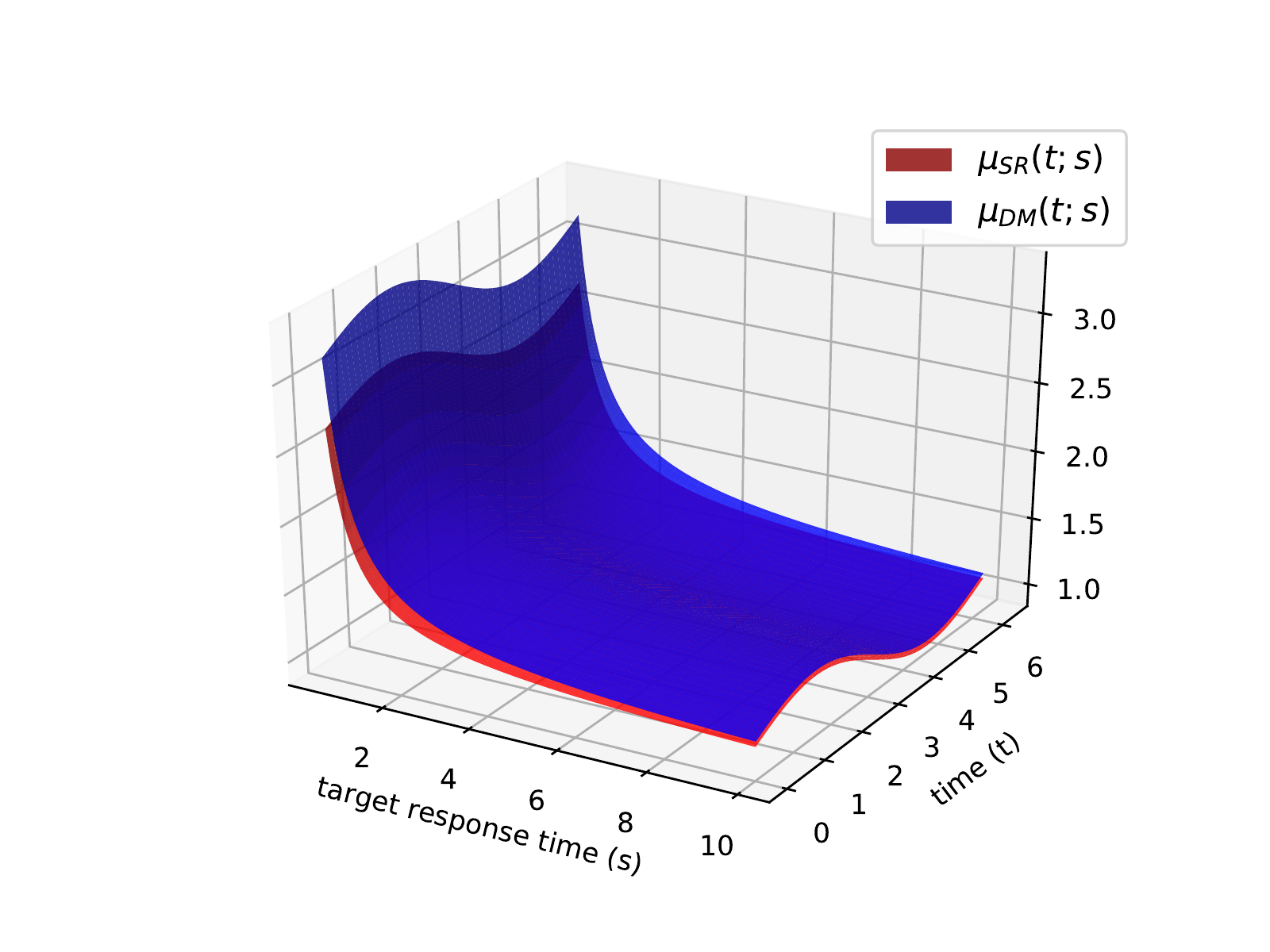}
		\label{fig:comparison_erln}
	}
	
	\caption{Comparison of the two controls}
	\label{fig:comparison}
\end{figure}

\section{Conclusion}\label{chap2:conclusion}
This paper studied the service rate functions that control the mean virtual response time required to obtain stabilization in \TVGGPS\ queues with slowly time-varying arrival rates. Modifying \citet{journal:W2015}'s method for analyzing PS queues resulted in a modified square-root (SR) service rate control and we introduced a new difference-matching (DM) service rate control that appears practically advantageous due to its ease of use and simplicity. Extensive simulation experiments were performed to investigate the performance of two controls. The SR control was effective under a light-traffic condition with a short target response time relative to the inter-arrival times. Neither control, however, perfectly stabilized the response time under a heavy-traffic condition. The DM control outperformed the SR control in terms of meeting the target mean virtual response time.

We suggest several research directions based on the results presented in this paper. Limit theorems, e.g., fluid and diffusion limits, can be derived for \TVGGPS\ queues with periodically time-varying arrival rate functions. We believe that such supporting theories should provide important clues to achieving perfect stabilization of the response time process. Light-traffic behaviors in queueing situations also deserve more analysis, since studies of time-varying queues are scarse to the best of our knowledge. Conceivably, interpolating the two controls could extend the coverage of the target response time beyond short and long. Of course, practical applications in ICT infrastructures should be accompanied.

	\section*{Acknowledgment}
	This research was supported by the Basic Science Research Program through the National Research Foundation of Korea (NRF) funded by the Ministry of Education (NRF-2016R1D1A1B04933453).





\section*{References}
\bibliographystyle{elsarticle-num-names}
\bibliography{manuscript}

\appendix
\section{Performances of the suggested controls}\label{app:performance}
\subsection{Numerical data}
\begin{table}[H]
	\centering
	\caption{Control performance of $\mu_{SR}$ and $\mu_{DM}$: $M_t/M_t/1/PS$}
	\label{table:tvgg1ps_expexp}
	\resizebox{\textwidth}{!}{%
	\begin{tabular}{cc|rr|rr|}
		\cline{3-6}
		\multicolumn{1}{l}{}                        & \multicolumn{1}{l|}{} & \multicolumn{2}{c|}{$\mu_{SR}$}                                                & \multicolumn{2}{c|}{$\mu_{DM}$}                                                \\ \hline
		\multicolumn{1}{|c|}{$s$}                   & $\gamma$              & \multicolumn{1}{c}{Amplitude (RA)} & \multicolumn{1}{c|}{Spatial Average (TG)} & \multicolumn{1}{c}{Amplitude (RA)} & \multicolumn{1}{c|}{Spatial Average (TG)} \\ \hline
		\multicolumn{1}{|c|}{\multirow{3}{*}{0.1}}  & 0.001                 & 0.0044 (4.37\%)                    & 0.1 (0.0\%)                              & 0.0037 (3.7\%)                     & 0.1001 (0.0\%)                            \\
		\multicolumn{1}{|c|}{}                      & 0.01                  & 0.0035 (3.45\%)                    & 0.1001 (0.0\%)                            & 0.0039 (3.9\%)                     & 0.1001 (0.0\%)                            \\
		\multicolumn{1}{|c|}{}                      & 0.1                   & 0.003 (2.95\%)                     & 0.1015 (0.02\%)                           & 0.0025 (2.5\%)                     & 0.1012 (0.01\%)                           \\ \hline
		\multicolumn{1}{|c|}{\multirow{3}{*}{10.0}} & 0.001                 & 0.7555 (7.44\%)                    & 10.1493 (0.01\%)                          & 0.6808 (6.72\%)                    & 10.1321 (0.01\%)                          \\
		\multicolumn{1}{|c|}{}                      & 0.01                  & 1.7516 (17.07\%)                   & 10.264 (0.03\%)                           & 1.8011 (17.58\%)                   & 10.2452 (0.02\%)                          \\
		\multicolumn{1}{|c|}{}                      & 0.1                   & 1.1104 (10.82\%)                   & 10.2665 (0.03\%)                          & 1.0656 (10.57\%)                   & 10.0773 (0.01\%)                          \\ \hline
	\end{tabular}%
	}
\end{table}
\begin{table}[H]
	\centering
	\caption{Control performance of $\mu_{SR}$ and $\mu_{DM}$: $ER_t/ER_t/1/PS$}
	\label{table:tvgg1ps_erer}
	\resizebox{\textwidth}{!}{%
	\begin{tabular}{cc|rr|rr|}
		\cline{3-6}
		\multicolumn{1}{l}{}                        & \multicolumn{1}{l|}{} & \multicolumn{2}{c|}{$\mu_{SR}$}                                                & \multicolumn{2}{c|}{$\mu_{DM}$}                                                \\ \hline
		\multicolumn{1}{|c|}{$s$}                   & $\gamma$              & \multicolumn{1}{c}{Amplitude (RA)} & \multicolumn{1}{c|}{Spatial Average (TG)} & \multicolumn{1}{c}{Amplitude (RA)} & \multicolumn{1}{c|}{Spatial Average (TG)} \\ \hline
		\multicolumn{1}{|c|}{\multirow{3}{*}{0.1}}  & 0.001                 & 0.0037 (3.52\%)                    & 0.1047 (0.05\%)                           & 0.005 (3.37\%)                     & 0.1487 (0.49\%)                           \\
		\multicolumn{1}{|c|}{}                      & 0.01                  & 0.0039 (3.72\%)                    & 0.1045 (0.04\%)                           & 0.0044 (2.93\%)                    & 0.1485 (0.49\%)                           \\
		\multicolumn{1}{|c|}{}                      & 0.1                   & 0.0026 (2.46\%)                    & 0.1064 (0.06\%)                           & 0.0027 (1.78\%)                    & 0.1513 (0.51\%)                           \\ \hline
		\multicolumn{1}{|c|}{\multirow{3}{*}{10.0}} & 0.001                 & 1.1872 (8.64\%)                    & 13.7467 (0.37\%)                          & 0.7828 (7.20\%)                     & 10.8678 (0.09\%)                          \\
		\multicolumn{1}{|c|}{}                      & 0.01                  & 2.5553 (18.56\%)                   & 13.765 (0.38\%)                           & 1.8884 (17.19\%)                   & 10.9824 (0.10\%)                           \\
		\multicolumn{1}{|c|}{}                      & 0.1                   & 1.4838 (10.74\%)                   & 13.822 (0.38\%)                           & 1.3446 (12.04\%)                   & 11.1719 (0.12\%)                          \\ \hline
	\end{tabular}%
	}
\end{table}

\begin{table}[H]
	\centering
	\caption{Control performance of $\mu_{SR}$ and $\mu_{DM}$: $LN_t/LN_t/1/PS$}
	\label{table:tvgg1ps_lnln}
	\resizebox{\textwidth}{!}{%
	\begin{tabular}{cc|rr|rr|}
		\cline{3-6}
		\multicolumn{1}{l}{}                        & \multicolumn{1}{l|}{} & \multicolumn{2}{c|}{$\mu_{SR}$}                                                & \multicolumn{2}{c|}{$\mu_{DM}$}                                                \\ \hline
		\multicolumn{1}{|c|}{$s$}                   & $\gamma$              & \multicolumn{1}{c}{Amplitude (RA)} & \multicolumn{1}{c|}{Spatial Average (TG)} & \multicolumn{1}{c}{Amplitude (RA)} & \multicolumn{1}{c|}{Spatial Average (TG)} \\ \hline
		\multicolumn{1}{|c|}{\multirow{3}{*}{0.1}}  & 0.001                 & 0.0049 (5.27\%)                    & 0.0922 (-0.08\%)                          & 0.004 (5.28\%)                     & 0.0751 (-0.25\%)                          \\
		\multicolumn{1}{|c|}{}                      & 0.01                  & 0.0057 (6.14\%)                    & 0.0921 (-0.08\%)                          & 0.0044 (5.88\%)                    & 0.075 (-0.25\%)                           \\
		\multicolumn{1}{|c|}{}                      & 0.1                   & 0.0041 (4.38\%)                    & 0.0936 (-0.06\%)                          & 0.0041 (5.42\%)                    & 0.0761 (-0.24\%)                          \\ \hline
		\multicolumn{1}{|c|}{\multirow{3}{*}{10.0}} & 0.001                 & 0.4776 (7.34\%)                    & 6.5104 (-0.35\%)                          & 0.8366 (9.05\%)                    & 9.2406 (-0.08\%)                          \\
		\multicolumn{1}{|c|}{}                      & 0.01                  & 0.9325 (14.33\%)                   & 6.5061 (-0.35\%)                          & 1.6268 (17.64\%)                   & 9.2206 (-0.08\%)                          \\
		\multicolumn{1}{|c|}{}                      & 0.1                   & 0.8908 (13.47\%)                   & 6.6132 (-0.34\%)                          & 0.9665 (10.3\%)                    & 9.3818 (-0.06\%)                          \\ \hline
	\end{tabular}%
	}
\end{table}

\begin{table}[H]
	\centering
	\caption{Control performance of $\mu_{SR}$ and $\mu_{DM}$: $ER_t/LN_t/1/PS$}
	\label{table:tvgg1ps_erln}
	\resizebox{\textwidth}{!}{%
	\begin{tabular}{cc|rr|rr|}
		\cline{3-6}
		\multicolumn{1}{l}{}                        & \multicolumn{1}{l|}{} & \multicolumn{2}{c|}{$\mu_{SR}$}                                                & \multicolumn{2}{c|}{$\mu_{DM}$}                                                \\ \hline
		\multicolumn{1}{|c|}{$s$}                   & $\gamma$              & \multicolumn{1}{c}{Amplitude (RA)} & \multicolumn{1}{c|}{Spatial Average (TG)} & \multicolumn{1}{c}{Amplitude (RA)} & \multicolumn{1}{c|}{Spatial Average (TG)} \\ \hline
		\multicolumn{1}{|c|}{\multirow{3}{*}{0.1}}  & 0.001                 & 0.0049 (5.07\%)                    & 0.0976 (-0.02\%)                          & 0.0055 (4.56\%)                    & 0.1197 (0.2\%)                            \\
		\multicolumn{1}{|c|}{}                      & 0.01                  & 0.0047 (4.81\%)                    & 0.0974 (-0.03\%)                          & 0.005 (4.22\%)                     & 0.1193 (0.19\%)                           \\
		\multicolumn{1}{|c|}{}                      & 0.1                   & 0.0031 (3.08\%)                    & 0.0995 (-0.01\%)                          & 0.0046 (3.8\%)                     & 0.1209 (0.21\%)                           \\ \hline
		\multicolumn{1}{|c|}{\multirow{3}{*}{10.0}} & 0.001                 & 0.5891 (8.26\%)                    & 7.1291 (-0.29\%)                          & 1.0496 (10.06\%)                   & 10.4293 (0.04\%)                          \\
		\multicolumn{1}{|c|}{}                      & 0.01                  & 1.2382 (17.11\%)                   & 7.2366 (-0.28\%)                          & 2.0119 (19.39\%)                   & 10.375 (0.04\%)                           \\
		\multicolumn{1}{|c|}{}                      & 0.1                   & 0.8216 (11.29\%)                   & 7.2763 (-0.27\%)                          & 1.2382 (11.84\%)                   & 10.456 (0.05\%)                           \\ \hline
	\end{tabular}%
	}
\end{table}
\begin{table}[H]
	\centering
	\caption{Control performance of $\mu_{SR}$ and $\mu_{DM}$: $LN_t/ER_t/1/PS$}
	\label{table:tvgg1ps_lner}
	\resizebox{\textwidth}{!}{%
	\begin{tabular}{cc|rr|rr|}
		\cline{3-6}
		\multicolumn{1}{l}{}                        & \multicolumn{1}{l|}{} & \multicolumn{2}{c|}{$\mu_{SR}$}                                                & \multicolumn{2}{c|}{$\mu_{DM}$}                                                \\ \hline
		\multicolumn{1}{|c|}{$s$}                   & $\gamma$              & \multicolumn{1}{c}{Amplitude (RA)} & \multicolumn{1}{c|}{Spatial Average (TG)} & \multicolumn{1}{c}{Amplitude (RA)} & \multicolumn{1}{c|}{Spatial Average (TG)} \\ \hline
		\multicolumn{1}{|c|}{\multirow{3}{*}{0.1}}  & 0.001                 & 0.0034 (3.51\%)                    & 0.0978 (-0.02\%)                          & 0.003 (4.91\%)                     & 0.0601 (-0.4\%)                           \\
		\multicolumn{1}{|c|}{}                      & 0.01                  & 0.0039 (3.99\%)                    & 0.0977 (-0.02\%)                          & 0.003 (4.97\%)                     & 0.0602 (-0.4\%)                           \\
		\multicolumn{1}{|c|}{}                      & 0.1                   & 0.0026 (2.64\%)                    & 0.0997 (-0.0\%)                           & 0.0023 (3.82\%)                    & 0.0609 (-0.39\%)                          \\ \hline
		\multicolumn{1}{|c|}{\multirow{3}{*}{10.0}} & 0.001                 & 0.6518 (5.63\%)                    & 11.5748 (0.16\%)                          & 0.4418 (5.43\%)                    & 8.1343 (-0.19\%)                          \\
		\multicolumn{1}{|c|}{}                      & 0.01                  & 1.8191 (15.4\%)                    & 11.8151 (0.18\%)                          & 1.0204 (12.49\%)                   & 8.1706 (-0.18\%)                          \\
		\multicolumn{1}{|c|}{}                      & 0.1                   & 1.223 (10.41\%)                    & 11.7491 (0.17\%)                          & 0.9637 (11.56\%)                   & 8.3389 (-0.17\%)                          \\ \hline
	\end{tabular}%
	}
\end{table}

\subsection{Plots}
\begin{figure}[H]
	\label{fig:tvgg1ps_0.1_expexp}
	\centering\includegraphics[width=\linewidth]{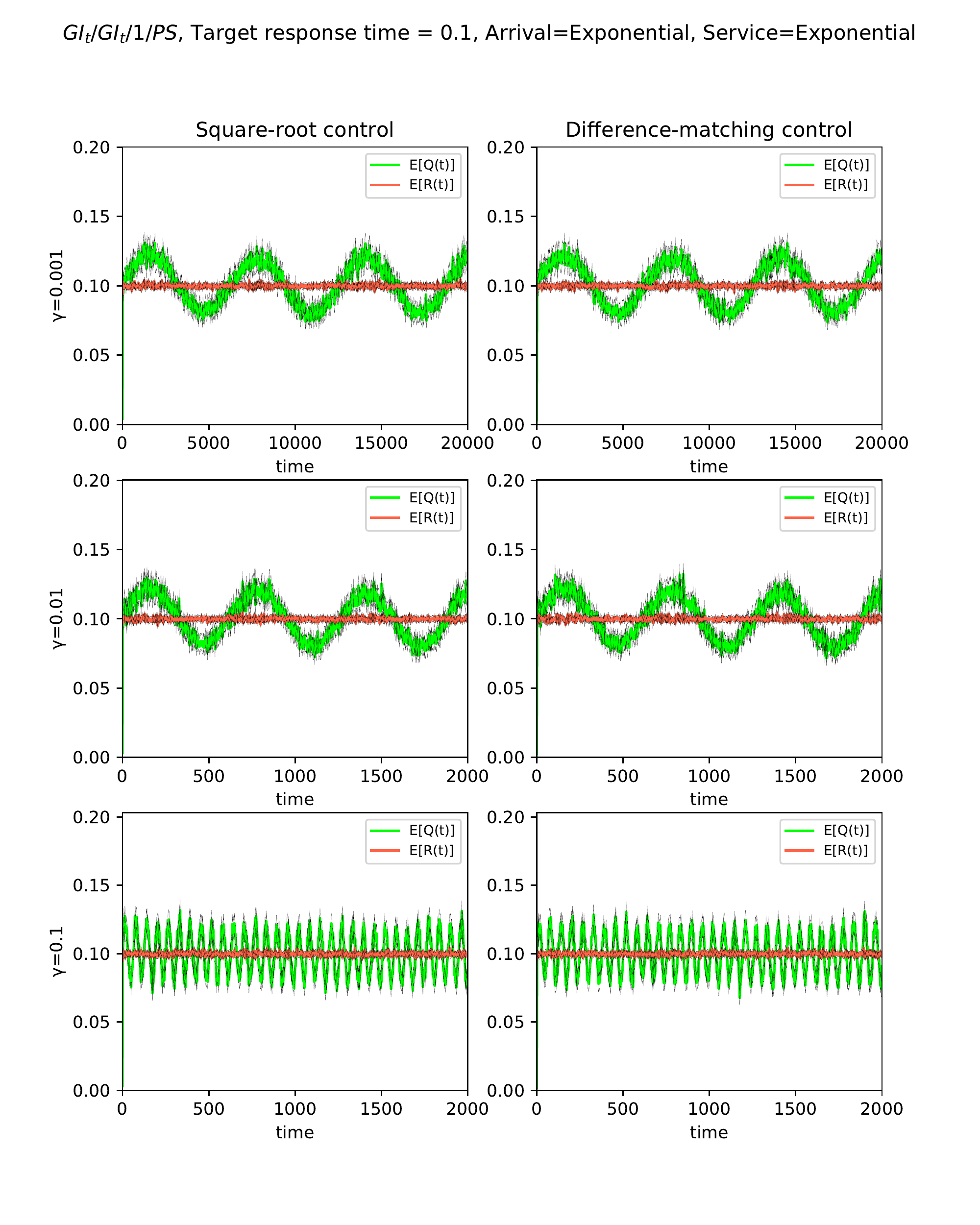}
	\caption{General performance measures of $M_t/M_t/1/PS$ queues under $\mu_{SR}$ and $\mu_{DM}$ with target response time 0.1 ($s=0.1$)}
\end{figure}

\begin{figure}[H]
	\label{fig:tvgg1ps_10.0_expexp}
	\centering\includegraphics[width=\linewidth]{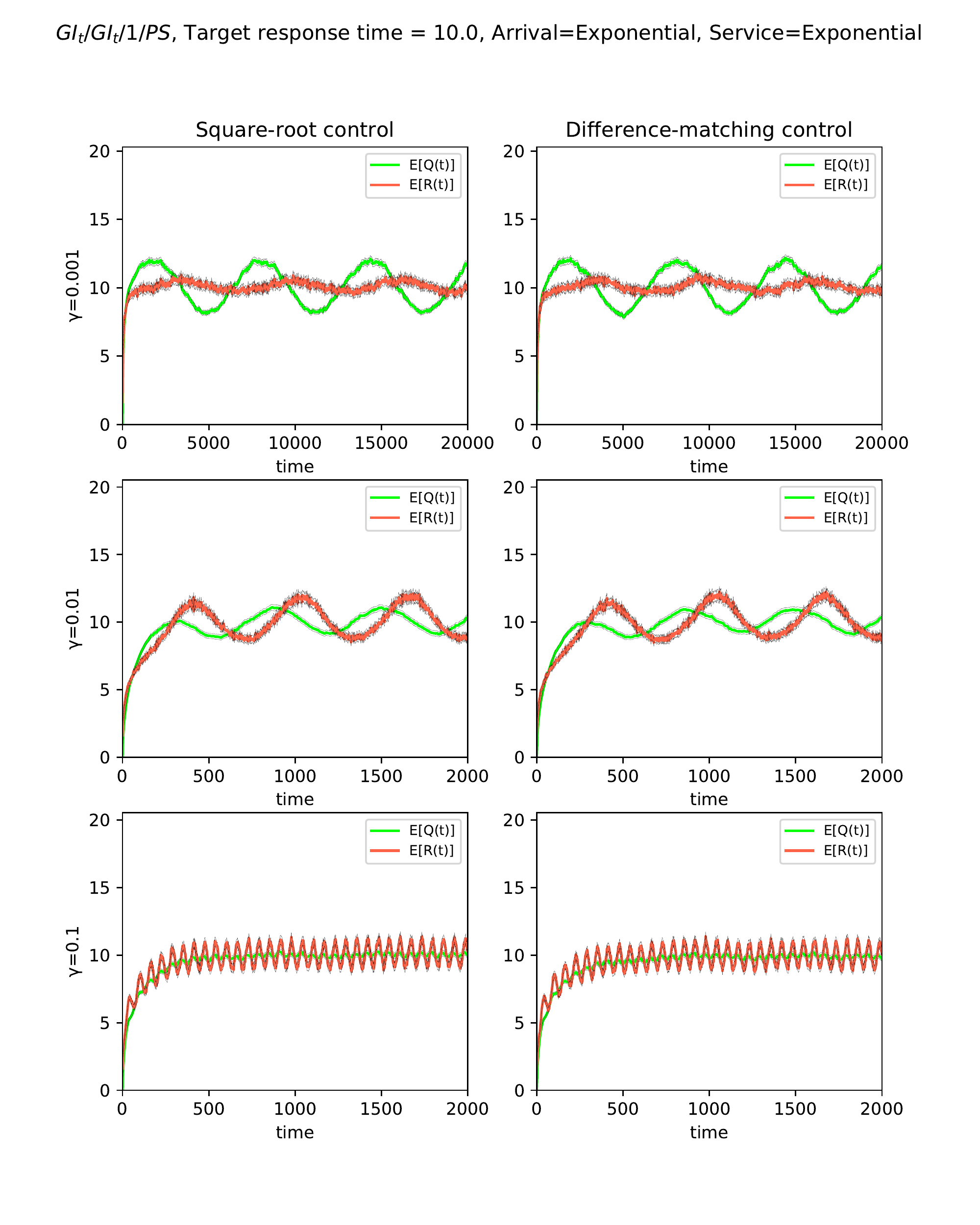}
	\caption{General performance measures of $M_t/M_t/1/PS$ queues under $\mu_{SR}$ and $\mu_{DM}$ with target response time 10.0 ($s=10.0$)}
\end{figure}

\begin{figure}[H]
	\label{fig:tvgg1ps_0.1_erer}
	\centering\includegraphics[width=\linewidth]{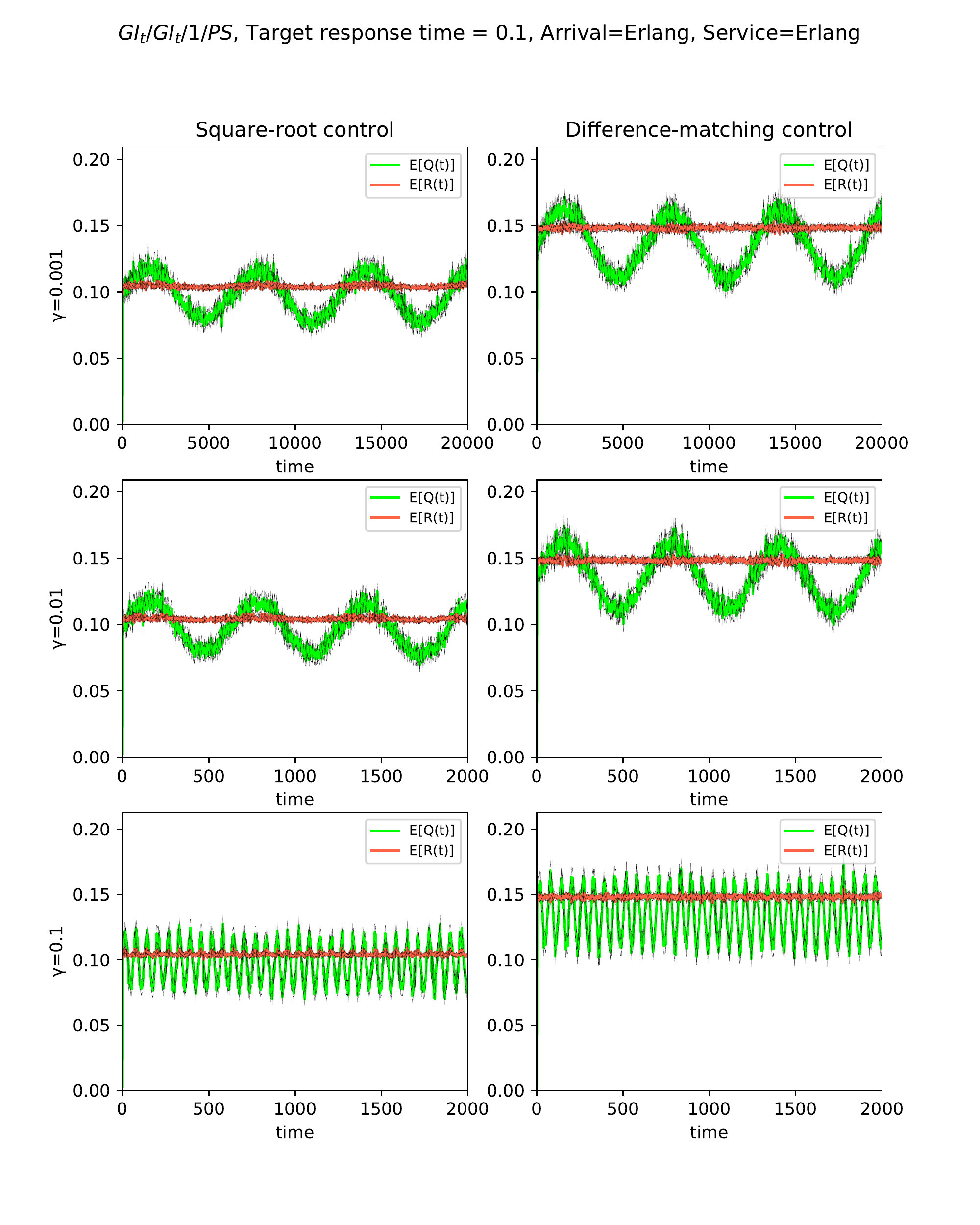}
	\caption{General performance measures of $ER_t/ER_t/1/PS$ queues under $\mu_{SR}$ and $\mu_{DM}$ with target response time 0.1 ($s=0.1$)}
\end{figure}

\begin{figure}[H]
	\label{fig:tvgg1ps_10.0_erer}
	\centering\includegraphics[width=\linewidth]{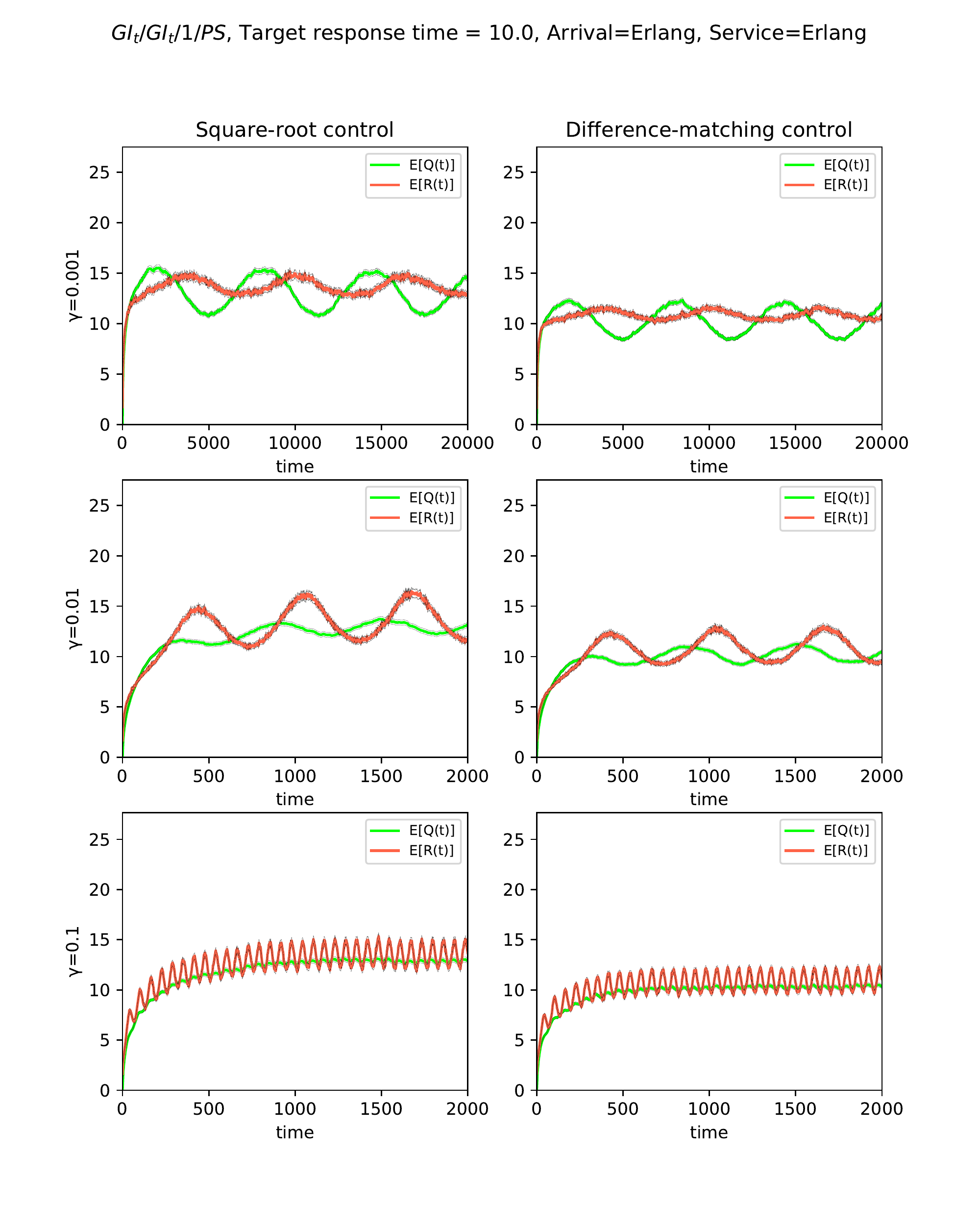}
	\caption{General performance measures of $ER_t/ER_t/1/PS$ queues under $\mu_{SR}$ and $\mu_{DM}$ with target response time 10.0 ($s=10.0$)}
\end{figure}
\pagebreak

\begin{figure}[H]
	\label{fig:tvgg1ps_0.1_lnln}
	\centering\includegraphics[width=\linewidth]{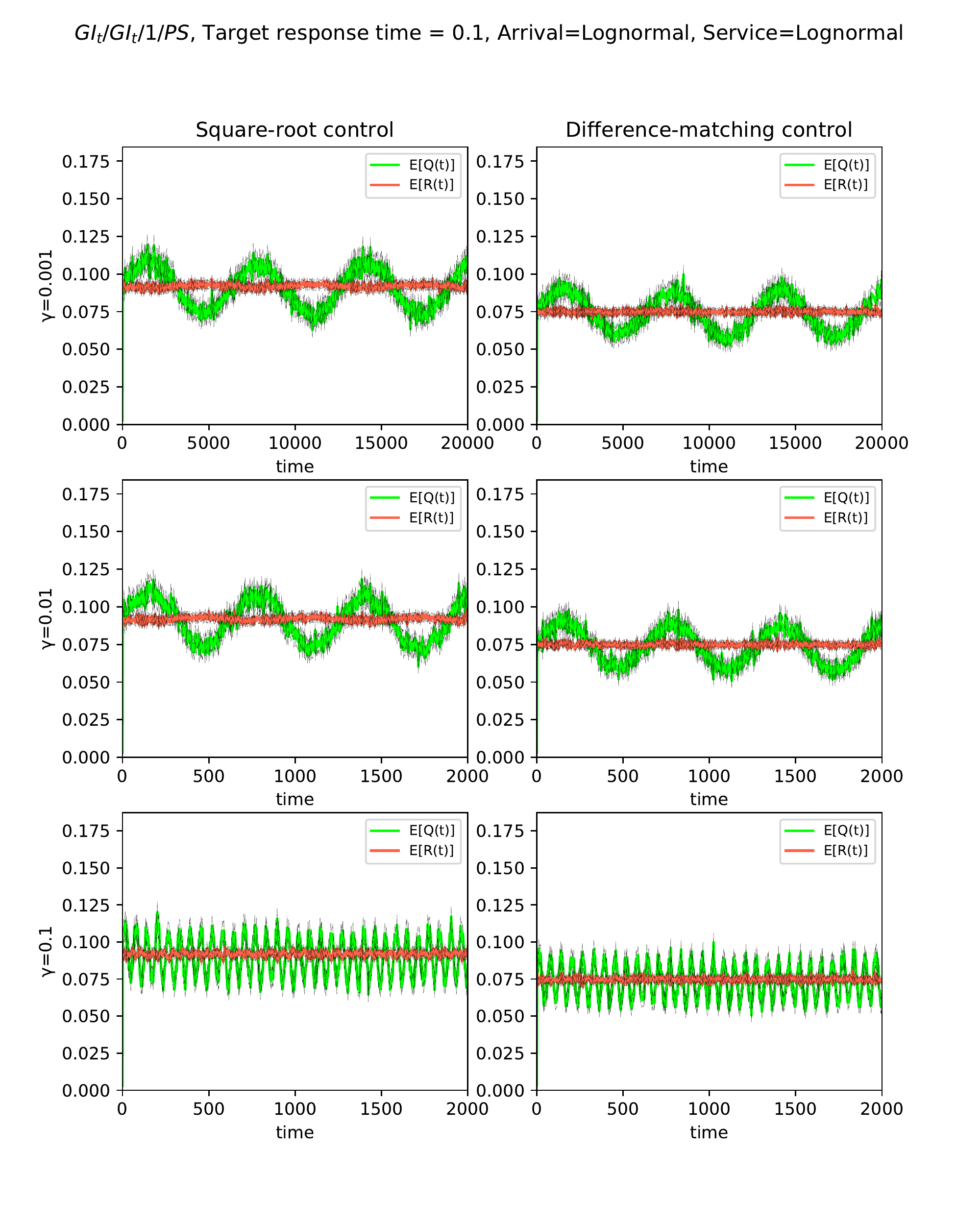}
	\caption{General performance measures of $LN_t/LN_t/1/PS$ queues under $\mu_{SR}$ and $\mu_{DM}$ with target response time 0.1 ($s=0.1$)}
\end{figure}

\begin{figure}[H]
	\label{fig:tvgg1ps_10.0_lnln}
	\centering\includegraphics[width=\linewidth]{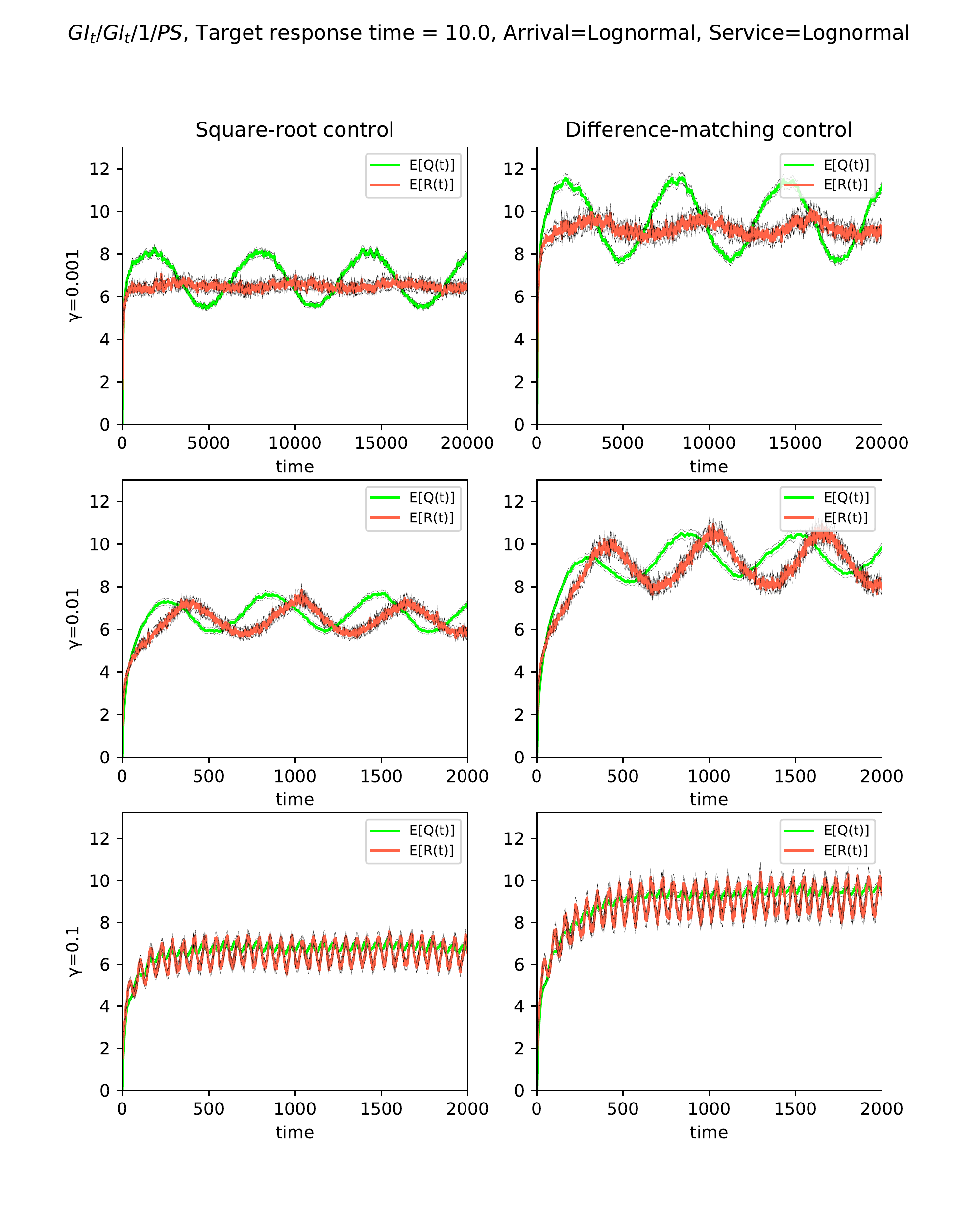}
	\caption{General performance measures of $LN_t/LN_t/1/PS$ queues under $\mu_{SR}$ and $\mu_{DM}$ with target response time 10.0 ($s=10.0$)}
\end{figure}

\begin{figure}[H]
	\label{fig:tvgg1ps_0.1_erln}
	\centering\includegraphics[width=\linewidth]{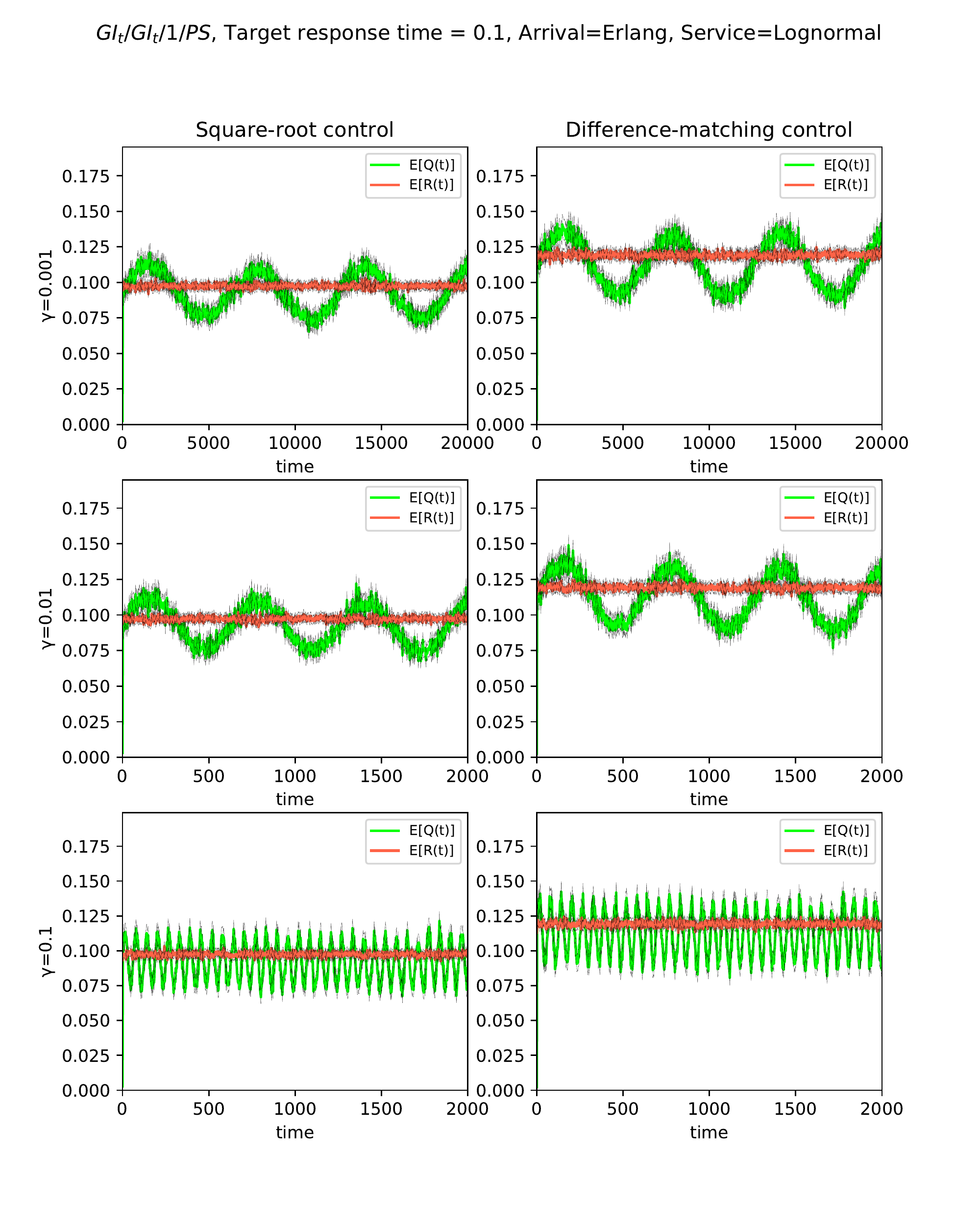}
	\caption{General performance measures of $ER_t/LN_t/1/PS$ queues under $\mu_{SR}$ and $\mu_{DM}$ with target response time 0.1 ($s=0.1$)}
\end{figure}

\begin{figure}[H]
	\label{fig:tvgg1ps_10.0_erln}
	\centering\includegraphics[width=\linewidth]{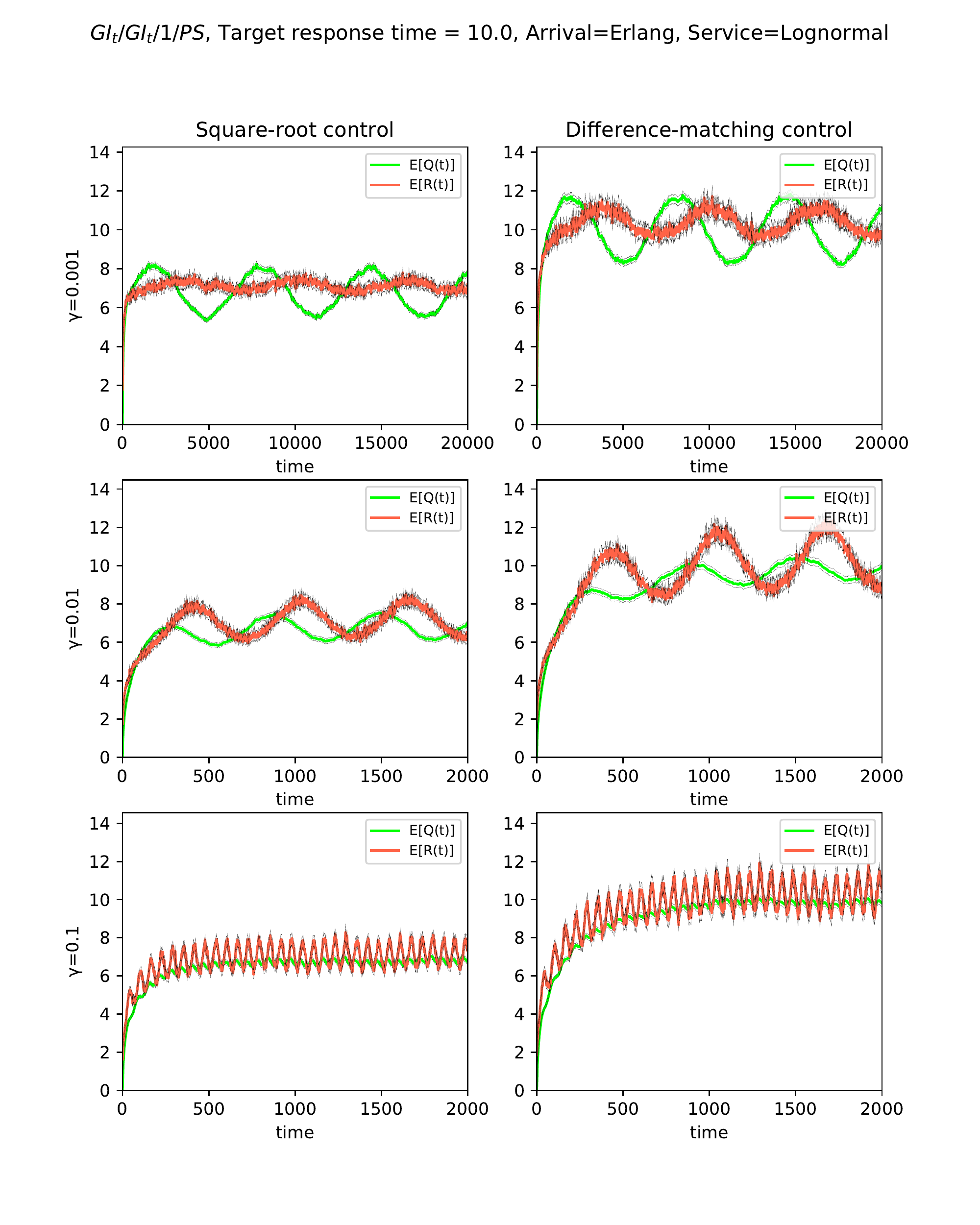}
	\caption{General performance measures of $ER_t/LN_t/1/PS$ queues under $\mu_{SR}$ and $\mu_{DM}$ with target response time 10.0 ($s=10.0$)}
\end{figure}

\begin{figure}[H]
	\label{fig:tvgg1ps_0.1_lner}
	\centering\includegraphics[width=\linewidth]{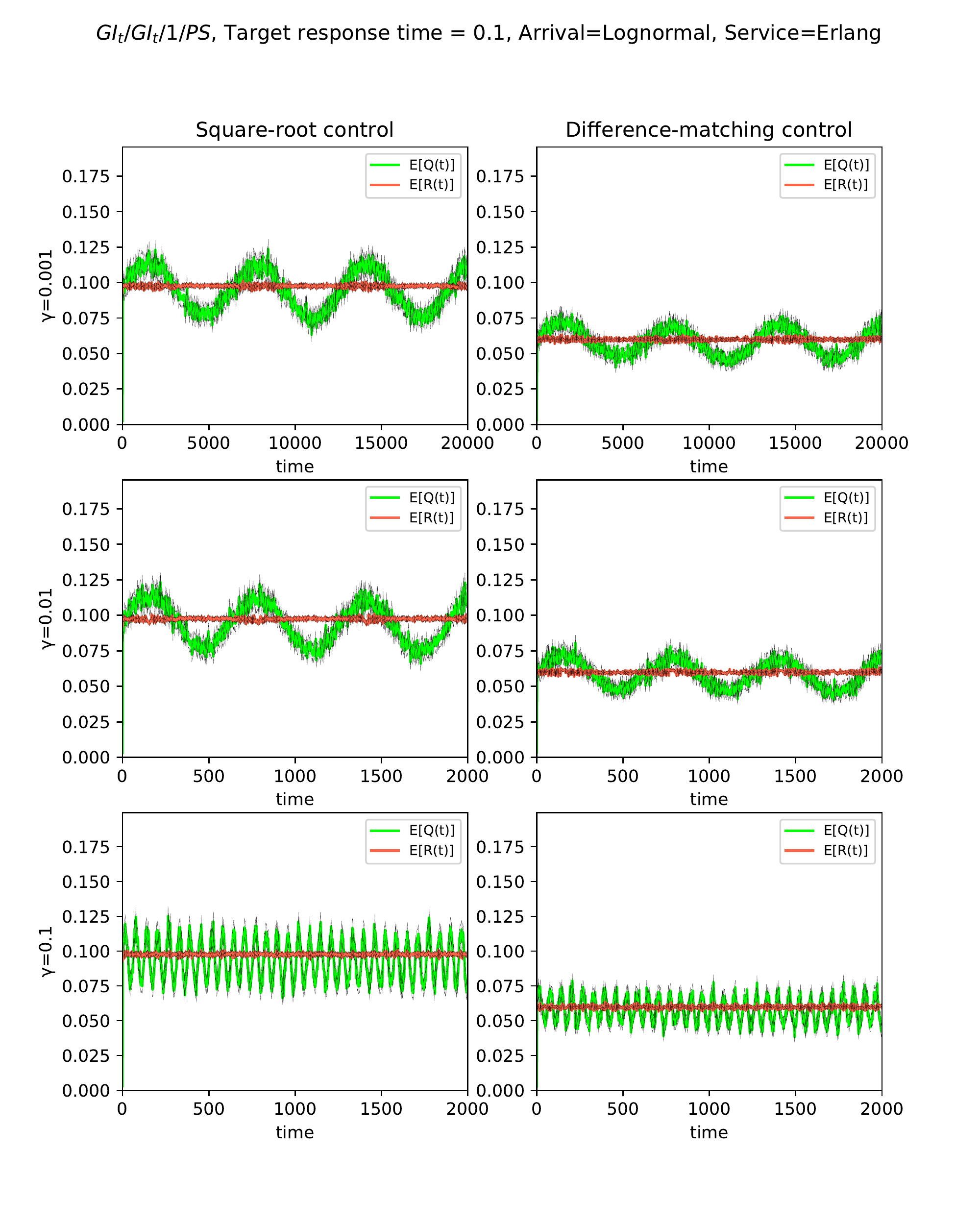}
	\caption{General performance measures of $LN_t/ER_t/1/PS$ queues under $\mu_{SR}$ and $\mu_{DM}$ with target response time 0.1 ($s=0.1$)}
\end{figure}

\begin{figure}[H]
	\label{fig:tvgg1ps_10.0_lner}
	\centering\includegraphics[width=\linewidth]{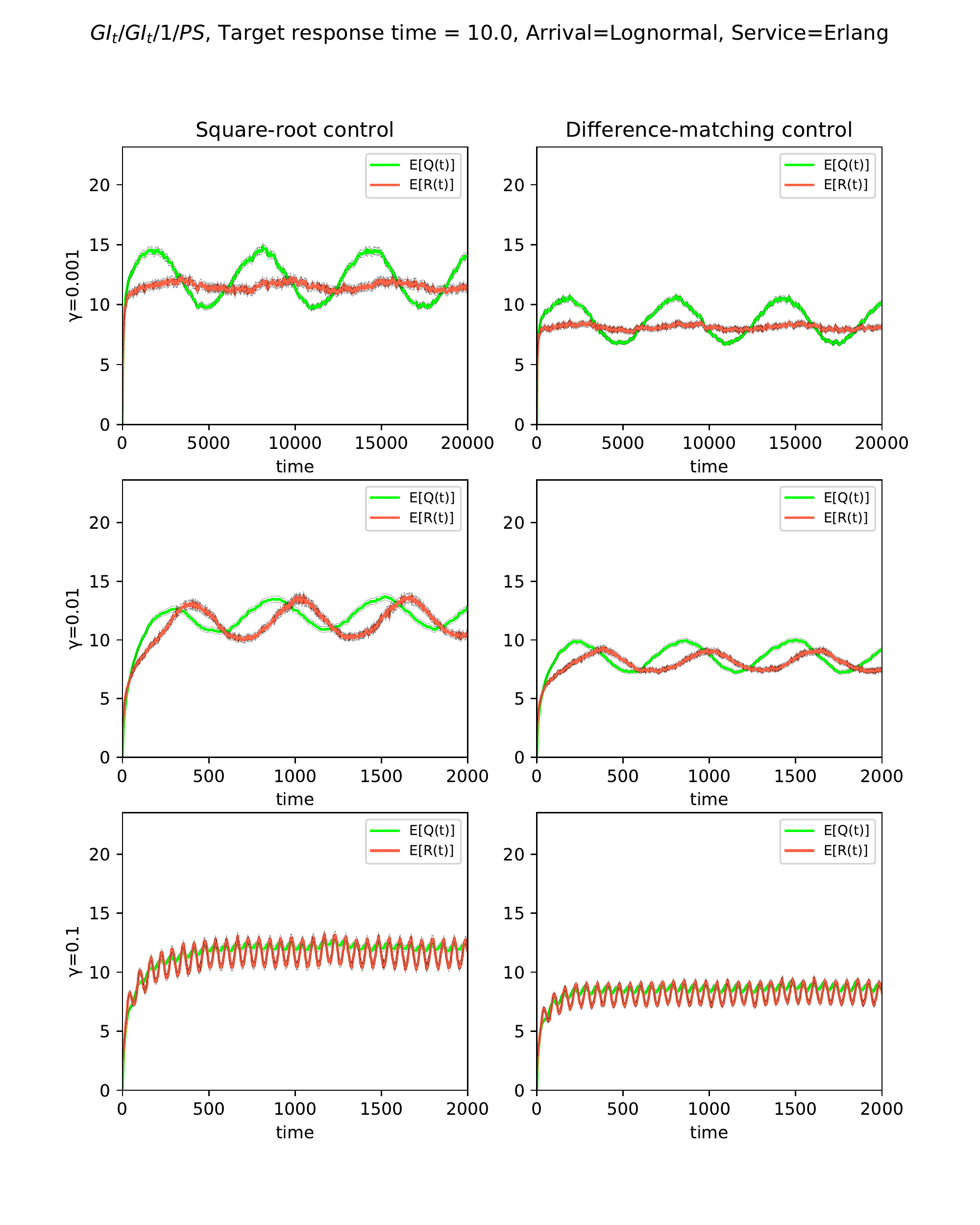}
	\caption{General performance measures of $LN_t/ER_t/1/PS$ queues under $\mu_{SR}$ and $\mu_{DM}$ with target response time 10.0 ($s=10.0$)}
\end{figure}
\end{document}